\newcommand{\ms}{\mathsf}
\newcommand{\mt}{\mathtt}
\newcommand{\mc}{\mathcal}
\newcommand{\fns}{\mathtt}
\NewDocumentCommand{\crn}{m}{\textup{(\textsc{#1})}}
\NewDocumentCommand{\srn}{m}{\textup{\texttt{[#1]}}}
\NewDocumentCommand{\rn}{m}{\textup{#1}}
\newcommand{\ie}{i.e.\@\xspace}
\newcommand{\eg}{e.g.\@\xspace}
\newcommand*{\etc}{%
  \@ifnextchar{.}%
  {etc}%
  {etc.\@\xspace}%
}
\theoremstyle{remark}
\newtheorem{case}{Case}
\author{Chuta Sano}
\affiliation{
  \department{School of Computer Science}              
  \institution{McGill University}            
  \streetaddress{3480 rue University}
  \city{Montr{\'e}al}
  \state{QC}
  \postcode{H3A 0E9}
  \country{Canada}                    
}
\email{chuta.sano@mail.mcgill.ca}
\author{Ryan Kavanagh}
\affiliation{
  \department{School of Computer Science}              
  \institution{McGill University}            
  \streetaddress{3480 rue University}
  \city{Montr{\'e}al}
  \state{QC}
  \postcode{H3A 0E9}
  \country{Canada}                    
}
\email{rkavanagh@cs.mcgill.ca}
\author{Brigitte Pientka}
\affiliation{
  \department{School of Computer Science}              
  \institution{McGill University}            
  \streetaddress{3480 rue University}
  \city{Montr{\'e}al}
  \state{QC}
  \postcode{H3A 0E9}
  \country{Canada}                    
}
\email{bpientka@cs.mcgill.ca}
\definecolor{burntumber}{rgb}{0.54, 0.2, 0.14}
\newcommand{\nicettfamilysize}{\footnotesize}
\newcommand{\nicettfamily}{%
  \nicettfamilysize\ttfamily%
}%
\lstdefinelanguage{Beluga}{%
  morecomment = [l][\color{gray}]{\%},
  morekeywords = [1]{
    LF, type, ctype, schema, rec, proof, as, case, by, of, unbox, inductive, %
    stratified, some, block, total, mlam, fn, let, in, %
    intros, solve, msplit, suffices, split, by, as, invert, case, unboxed, %
    impossible, undo, toshow, %
    strengthen, %
  },%
  keywordstyle = [1]{\bfseries},%
  morekeywords = [2]{
    unit, arr, c, lam, app, v_lam, v_c, s_app_1, s_app, s_beta, next, refl, halts/m, %
    Unit, Arr, t_lam, t_app, t_c,
    fwd, close, wait, out, inp, pcomp, inl, inr, choice,
    l_fwd1, l_fwd2, l_close, l_wait, l_out, l_inp, l_inl, l_inr, l_choice,
    l_wait2, l_out2, l_out3, l_inp2, l_pcomp1, l_pcomp2, l_inl2, l_inr2,
    l_choice2,
    wtp_fwd, wtp_close, wtp_wait, wtp_out, wtp_inp, wtp_pcomp, wtp_inl, wtp_inr,
    wtp_choice,
    cut1, cut2,
  },%
  keywordstyle = [2]{\bfseries\color{purple!90}},%
  morekeywords = [3]{
    tp, tm, val, step, steps, halts, Reduce, RedSub, ctx, nctx, oftype, eq, %
    halts_step, bwd_closed,
    name, hyp, dual, proc, pproc, linear, wtp, equiv
  },%
  keywordstyle = [3]{\bfseries\color{magenta}},%
  alsoletter=/,%
  columns=flexible,%
  sensitive = true,%
  basicstyle=\nicettfamily,%
  mathescape=true,%
  texcl=true ,%
  escapechar={*},
  literate={%
    {→}{$\rightarrow$ }1%
    {⇒}{$\Rightarrow$ }1%
    {->}{$\rightarrow$ }1%
    {par}{$\parr$ }1%
    {times}{$\otimes$}1%
    {bot}{$\bot$}1%
    {⊥}{\(\bot\)}1%
    {||}{\(\|\)}1%
    {∥}{\(\|\)}1%
    {+}{$\oplus$ }1%
    {⊗}{\(\otimes\) }1%
    {⅋}{\(\parr\) }1%
    {⊕}{\(\oplus\) }1%
    {&}{\(\&\) }1%
    {\\}{$\lambda$}1%
    {--nu}{$\nu$}1%
    {beta}{{\color{purple!90}$\beta$}}1
    {⊢}{$\vdash\;$}1%
    {|-}{$\vdash\;$}1%
    {sigma}{$\sigma$}1%
    {Gamma}{$\Gamma$}1%
    {Delta}{$\Delta$}1%
  }%
}%
\NewDocumentCommand{\rknote}{s m}{
  \IfBooleanTF {#1}
    { \todo[backgroundcolor=green!30]{rak: #2} }
    { \todo[inline, backgroundcolor=green!30]{rak: #2} }
}
\newenvironment{gatherrules}{%
  \addtolength{\jot}{0.6ex}%
  \csname gather*\endcsname%
}{%
  \csname endgather*\endcsname%
  \addtolength{\jot}{-0.6ex}%
}
  \stp{\Gamma}{\spcomp{x}{A}{\fns{P}}{\fns{Q}}}
  \stp{\Gamma}{\fns{P}}
\begin{document}
\title{Mechanizing Session-Types using a Structural View: Enforcing Linearity without Linearity}

\
\begin{abstract}
  Session types employ a linear type system that ensures that communication channels cannot be
  implicitly copied or discarded. As a result, many mechanizations of these
  systems require modeling channel contexts and carefully ensuring that they treat channels
  linearly. We demonstrate a technique that localizes
  linearity conditions as additional predicates embedded within type judgments, which allows us
  to use structural typing contexts instead of linear ones. This technique is especially relevant
  when leveraging (weak) higher-order abstract syntax to handle channel
  mobility and the intricate binding structures that arise in~session-typed~systems.

 Following this approach, we mechanize a session-typed system based on classical
  linear logic and its type preservation proof in the proof
  assistant Beluga, which uses the logical framework LF as its
  encoding language. We also prove adequacy for our encoding. This
  shows the tractability and effectiveness of our approach in modelling substructural
  systems such as session-typed languages.
\end{abstract}

\begin{CCSXML}
<ccs2012>
<concept>
<concept_id>10003752.10003790.10002990</concept_id>
<concept_desc>Theory of computation~Logic and verification</concept_desc>
<concept_significance>500</concept_significance>
</concept>
<concept>
<concept_id>10003752.10003753.10003761.10003764</concept_id>
<concept_desc>Theory of computation~Process calculi</concept_desc>
<concept_significance>300</concept_significance>
</concept>
</ccs2012>
\end{CCSXML}

\ccsdesc[500]{Theory of computation~Logic and verification}
\ccsdesc[300]{Theory of computation~Process calculi}

\keywords{linear logic, concurrency, session types, verification, logical framework}

\maketitle

\section{Introduction}

The $\pi$-calculus~\cite{Milner80} is a well-studied formalism for message passing concurrency.
Although there have been many efforts to mechanize variants of the \(\pi\)-calculus by encoding their syntax and semantics in proof assistants, mechanization remains an art.
For example, process calculi often feature rich binding structures and semantics such as channel mobility, and these must be carefully encoded to respect \(\alpha\)-equivalence and to avoid channel name clashes.

Even harder to mechanize are session-typed process calculi, in part because they treat communications channels linearly.
\emph{Session types}~\cite{Honda93concur, Honda98esop} specify interactions on named communication channels, and linearity ensures that communication channels are not duplicated or discarded.
As a result, session types can be used to statically ensure safety properties such as session fidelity or deadlock freedom.
However, mechanizing linear type systems adds another layer of complexity;
most encodings of linear type systems encode contexts explicitly: they develop some internal representation of a collection of channels, for example, a list, implement relevant operations on it, and then prove lemmas such as \(\alpha\)-equivalence and substitution.
Though explicit encodings have led to successful mechanizations~\cite{Castro-Perez20tacas, Thiemann19ppdp, Zalakain21forte, Jacobs22popl}, they make it cumbersome to formalize metatheoretic results like subject~reduction.

\emph{Higher-order abstract syntax}~\cite{Pfenning88pldi} (HOAS) relieves us from the bureaucracy of explicitly encoded contexts.
With this approach, variable abstractions are identified with functions in the proof assistant or the host language.
Thus, we can obtain properties of bindings in the host language for free, such as the aforementioned \(\alpha\)-equivalence and substitution lemmas.
This technique had been studied in process calculi without modern linear session types by R{\"o}ckl, Hirschkoff, and Berghofer~\cite{Rockl01fossacs} in Isabelle/HOL and by Despeyroux~\cite{Despeyroux00tcs} in Coq.
However, HOAS has rarely been used to encode linear systems, and it has not yet been applied to mechanize session-typed languages.
This is because most HOAS systems treat contexts structurally while session-typed systems require linear contexts.
Consequently, naively using HOAS to manage channel contexts would not guarantee that channels are treated linearly.
This would in turn make it difficult or impossible to prove metatheoretic properties that rely on linearity, such as deadlock freedom.

In our paper, we develop a technique to bridge the gap between structural and linear contexts.
We use this technique to mechanize a subset of Wadler's Classical Processes (CP)~\cite{Wadler12icfp}.
CP is a well-studied foundation for investigating the core ideas of concurrency due to its tight relation with linear logic.
For our mechanization, we first introduce \emph{Structural Classical Processes} (SCP), a system whose context is structural.
This calculus encodes linearity using a technique heavily inspired by the one \citet{Crary10icfp} used to give a HOAS encoding of the linear $\lambda$-calculus.
The key idea is to define a predicate
\[
  \lin{x}{P}
\]
for some process $P$ that uses a channel $x$.
This predicate can informally be read as ``channel $x$ is used linearly in $P$,'' and it serves as a localized well-formedness predicate on the processes.
We embed these additional proof obligations within type judgments for rules that introduce channel bindings.
Thus, well-typed processes use all of their internally bound names linearly, and we further give a bijection between CP and SCP typing derivations to show that these linearity predicates precisely capture the notion of linear contexts.

We then mechanize SCP in Beluga~\cite{Pientka10ijcar} using weak HOAS.
The mechanization is mostly straightforward due to the strong affinity SCP has with LF, and we prove adequacy of our encoding with respect to SCP.
This adequacy result is compatible with our prior bijection result between CP and SCP, meaning our encoding is also adequate with respect to CP.
Finally, we mechanize type preservation in our encoding in a very elegant manner, taking advantage of the various properties we obtain for free from a HOAS encoding such as renaming, variable dependencies that are enforced via higher-order unification, \etc.

\paragraph{Contributions}
We describe a structural approach to mechanizing session-types and their metatheory without relying on the substructural properties of the session type system, by using explicit linearity check for processes.
In particular:
\begin{itemize}
  \item We introduce an on-paper system equivalent to a subset of Wadler's Classical
    Processes (CP)~\cite{Wadler12icfp}, which we call Structural
    Classical Processes (SCP).
    This system uses a structural context as opposed to a linear context but still captures the intended properties of linearity using linearity predicates.
    SCP is well-suited to a HOAS-style encoding as we demonstrate in
    this paper, but it is also well-suited to other styles of
    mechanizations given that it does not require any context splits.

  \item We define a linearity predicate inspired by \citet{Crary10icfp} for the linear \(\lambda\)-calculus.
    By doing so, we demonstrate the scalability of Crary's technique to richer settings.

  \item We encode processes and session types using weak HOAS in the logical framework LF.
    Our encoding illustrates how we leverage HOAS/LF and its built-in
    higher-order unification to model channel bindings and hypothetical
    session type derivations as intuitionistic functions.

  \item We prove the equivalence of CP and SCP and then show that our encoding of SCP in Beluga is adequate, \ie, that there exist bijections between all aspects of SCP and their encodings.
    We therefore show that our encoding of SCP is adequate with respect to CP as well. Given that adequacy for session typed systems is quite difficult, we believe that the techniques presented in SCP is a useful baseline for more complex systems.
  \item We encode and mechanize SCP in Beluga and prove (on paper) that the encoding is adequate. We further mechanize a subject reduction proof of SCP to illustrate how metatheoretic proofs interact with our linearity predicates.
\end{itemize}
The full mechanization of SCP in Beluga is available as an artifact~\cite{Sano23oopsla-art}.

\section{Classical Processes (CP)}

We present a subset of Wadler's Classical Processes (CP), making minor syntactic changes to better align with our later development.
CP is a proofs-as-processes interpretation of classical linear logic.
It associates to each proof of a classical, linear (one-sided) sequent
\begin{align*}
  &\vdash A_1, \ldots, A_n
  \\
  \intertext{a process $P$ that communicates over channels \(x_1, \dotsc, x_n\):}
  \tp{x_1:A_1, \ldots, x_n:A_n}{P &}.
\end{align*}
We interpret linear propositions $A_1, \dotsc, A_n$ as session types that specify the protocol that $P$ must follow when communicating on channels $x_1, \dotsc, x_n$, respectively.
\autoref{tab:session-types} summarizes the operational interpretation
of the standard linear connectives without exponentials and
quantifiers:

\begin{table}[h!]
  \centering
  \begin{tabular}{c l}
    \toprule
    Type & Action \\
    \midrule
    $1$ & Send a termination signal and then terminate \\
    $\bot$ & Receive a termination signal \\
    $\stensor{A}{B}$ & Send a channel of type $A$ and proceed as $B$ \\
    $\spar{A}{B}$ & Receive a channel of type $A$ and proceed as $B$ \\
    $\ssum{A}{B}$ & Send a ``left'' or ``right'' and then proceed as $A$ or $B$
    accordingly \\
    $\samp{A}{B}$ & Receive a ``left'' or ``right'' and then proceed as $A$ or $B$ accordingly\\
    \bottomrule
  \end{tabular}
  \caption{Interpretation of propositions in linear logic as session types on
  channels in CP}
  \label{tab:session-types}
\end{table}

Logical negation induces an involutory notion of duality on session types, where two types are dual if one can be obtained from the other by exchanging sending and receiving.
This duality will be used in process composition: we can safely compose a process \(P\) communicating on \(x : A\) with a process \(Q\) communicating on \(x : B\) whenever \(A\) and \(B\) are dual.
We write \(A^\bot\) for the dual of \(A\); it is inductively defined on the structure of \(A\):
\begin{align*}
  1^\bot &= \bot  & \bot^\bot &= 1\\
  (\stensor{A}{B})^\bot &= \spar{A^\bot}{B^\bot} & (\spar{A}{B})^\bot &= \stensor{A^\bot}{B^\bot} \\
  (\samp{A}{B})^\bot &= \ssum{A^\bot}{B^\bot} & (\ssum{A}{B})^\bot &= \samp{A^\bot}{B^\bot}
\end{align*}

\subsection{Type Judgments}
Since each inference rule in linear logic corresponds to a process construct, we
define the syntax of the processes alongside the type judgments.

\paragraph{Identity and process composition}
The identity rule globally identifies two channels \(x\) and \(y\).
The duality between the types $A$
and $A^\bot$ ensures that this identification only occurs between channels with compatible
protocols.
\[
  \defrule{Cid}{\crn{Id}}{\tp{x:A, y:A^\bot}{\fwd{x}{y}}}{}
  \getrule*{Cid}
\]

The process composition \(\pcomp{x}{A}{P}{Q}\) spawns processes \(P\) and \(Q\) that communicate along a bound private channel \(x\).
Its endpoints in \(P\) and \(Q\) have type \(A\) and \(A^\bot\), respectively.
Linearity ensures that no other channels are shared between \(P\) and \(Q\).
\[
  \defrule{Ccut}{\crn{Cut}}{\tp{\Delta_1, \Delta_2}{\pcomp{x}{A}{P}{Q}}}
  {\tp{\Delta_1, x:A}{P} & \tp{\Delta_2, x:A^\bot}{Q}}
  \getrule*{Ccut}
\]

\paragraph{Channel transmission}
The two multiplicative connectives $\otimes$ and $\parr$ correspond to sending
and receiving a channel, respectively.
The process $\out{x}{y}{P}{Q}$ sends a channel name $y$ across the channel $x$, and spawns concurrent processes \(P\) and \(Q\) that provide \(x\) and \(y\), respectively.
\[
  \defrule{Cotimes}{\crn{\(\otimes\)}}{\tp{\Delta_1, \Delta_2, x:A \otimes B}{\out{x}{y}{P}{Q}}}
  {\tp{\Delta_1, y:A}{P} & \tp{\Delta_2, x:B}{Q}}
  \getrule*{Cotimes}
\]

The process $\inp{x}{y}{P}$ receives a channel over $x$, binds it to a fresh name $y$, and proceeds as~$P$.
\[
  \defrule{Cparr}{\crn{\(\parr\)}}{\tp{\Delta, x:A \parr B}{\inp{x}{y}{P}}}
  {\tp{\Delta, x:B, y:A}{P}}
  \getrule*{Cparr}
\]

\paragraph{Internal and external choice}
The two additive connectives $\oplus$ and $\with$ respectively specify internal and
external choice.
Internal choice is implemented by processes $\inl{x}{P}$ and $\inr{x}{P}$ that respectively send a ``left'' and ``right'' choice across $x$.
\[
  \defrule{Cinl}{\crn{\(\oplus_1\)}}{\tp{\Delta, x:A \oplus B}{\inl{x}{P}}}
  {\tp{\Delta, x:A}{P}}
  \getrule*{Cinl}
  \qquad
  \defrule{Cinr}{\crn{\(\oplus_2\)}}{\tp{\Delta, x:A \oplus B}{\inr{x}{P}}}
  {\tp{\Delta, x:B}{P}}
  \getrule*{Cinr}
\]
External choice is implemented by a case analysis on a received choice:
\[
  \defrule{Cwith}{\crn{\(\with\)}}{\tp{\Delta, x:A \with B}{\choice{x}{P}{Q}}}
  {\tp{\Delta, x:A}{P} & \tp{\Delta, x:B}{Q}}
  \getrule*{Cwith}
\]
Contrary to previous rules, the context \(\Delta\) in the conclusion is not split between premisses.
This does not violate linearity because only one of the branches will be taken.

\paragraph{Termination}
The multiplicative units $1$ and $\bot$ specify termination and waiting for termination, respectively.
\[
  \defrule{C1}{\crn{\(1\)}}{\tp{x:1}{\close{x}}}{}
  \getrule*{C1}
  \qquad
  \defrule{Cbot}{\crn{\(\bot\)}}{\tp{\Delta, x:\bot}{\wait{x}{P}}}{\tp{\Delta}{P}}
  \getrule*{Cbot}
\]

\subsection{Reductions and Type Preservation} \label{ssec:cp-reduction}
Cut elimination in classical linear logic corresponds to reduction rules for CP processes and therefore reduces parallel compositions of form $\pcomp{x}{A}{P}{Q}$.
For example, if $P = \fwd{x}{y}$, then we have the reduction rule
\[
  \defrule{Cbetafwd}{\crn{$\beta_{\text{fwd}}$}}{\pcomp{x}{A}{\fwd{x}{y}}{Q} \st \subst{y}{x}{Q}}{}
  \getrule*{Cbetafwd}
\]
Other reduction rules are categorized into \emph{principal} reductions, where both $P$ and $Q$ are attempting to communicate over the same channel, \emph{commuting conversions}, where we can push the cut inside $P$, and \emph{congruence} rules.
We treat all other processes, \eg, $\inp{x}{y}{P}$, as stuck processes waiting to communicate with an external agent.

An example of a principal reduction occurs with the composition of $P = \inl{x}{P'}$ and $Q = \choice{x}{Q_1}{Q_2}$.
After communication, the left process continues as $P'$ and the right process as $Q_1$, since the ``left'' signal was sent by $P$.
\[
  \defrule{Cbetainl}{\crn{$\beta_{\text{inl}}$}}{\pcomp{x}{A \oplus B}{\inl{x}{P'}}{\choice{x}{Q_1}{Q_2}} \st \pcomp{x}{A}{P'}{Q_1}}{}
  \getrule*{Cbetainl}
\]

An example of a commuting conversion occurs when $P = \inl{x}{P'}$ and the abstracted channel is some $z$ such that $x \neq z$. In this case, we push the cut inside $P$.
\[
  \defrule{Ckappainl}{\crn{$\kappa_{\text{inl}}$}}{\pcomp{z}{C}{\inl{x}{P'}}{Q} \st \inl{x}{\pcomp{z}{C}{P'}{Q}}}{}
  \getrule*{Ckappainl}
\]

Finally, the congruence rules enable reduction under cuts.
We follow Wadler's formulation and do not provide congruence rules for other process constructs.
Such rules would eliminate internal cuts and do not correspond to the intended notion of computation, analogously to not permitting reduction under \(\lambda\)-abstractions.
\[
  \defrule{Cbetacut1}{\crn{$\beta_{\text{cut}1}$}}{\pcomp{x}{A}{P}{Q} \st \pcomp{x}{A}{P'}{Q}}{P \st P'}
  \defrule{Cbetacut2}{\crn{$\beta_{\text{cut}2}$}}{\pcomp{x}{A}{P}{Q} \st \pcomp{x}{A}{P}{Q'}}{Q \st Q'}
  \getrule*{Cbetacut1}
  \quad
  \getrule*{Cbetacut2}
\]

\defrule{Cequivcomm}{\crn{$\equiv_{\text{comm}}$}}{\pcomp{x}{A}{P}{Q} \equiv \pcomp{x}{A^\bot}{Q}{P}}{}
\defrule{Cequivassoc}{\crn{$\equiv_{\text{assoc}}$}}{\pcomp{y}{B}{\pcomp{x}{A}{P}{Q}}{R} \equiv \pcomp{x}{A}{P}{\pcomp{y}{B}{Q}{R}}}{}

We close these rules under structural equivalences $P \equiv Q$, which says that parallel composition is commutative and associative:
\begin{gatherrules}
  \getrule*{Cequivcomm}
  \\
  \getrule*{Cequivassoc}
\end{gatherrules}
For \getrn{Cequivassoc}, it is implicit that the process $P$ does not depend on the channel $y$.

For later developments, we define the closure explicitly as a reduction rule:
\[
  \defrule{Cbetaequiv}{\crn{$\beta_\equiv$}}{P \st S}{P \equiv Q & Q \st R & R \equiv S}
  \getrule*{Cbetaequiv}
\]
which also requires adding reflexitivity and transitivity to $\equiv$.

\begin{theorem}[Type Preservation of CP] \label{thm:cp-sr}
  If\/$\tp{\Delta}{P}$ and $P \st Q$, then $\tp{\Delta}{Q}$.
\end{theorem}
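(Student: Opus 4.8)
The plan is to prove the theorem by induction on the derivation of $P \st Q$, performing a case analysis on the last reduction rule applied. Every reduction rule other than the congruences \crn{$\beta_{\text{cut}1}$}, \crn{$\beta_{\text{cut}2}$} and the equivalence closure \crn{$\beta_\equiv$} fires on a redex of the form $\pcomp{x}{A}{P}{Q}$, so the common pattern will be to \emph{invert} the typing derivation of the redex. Since no typing rule other than \crn{Cut} has a process composition in its conclusion, inversion yields $\tp{\Delta_1, x:A}{P}$ and $\tp{\Delta_2, x:A^\bot}{Q}$ with $\Delta = \Delta_1, \Delta_2$. I then invert the typings of the two immediate subprocesses to expose the typings of their continuations and reassemble a typing derivation for the reduct.

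Before the induction I would establish two auxiliary lemmas. First, a channel-renaming lemma: if $\tp{\Delta, x:A}{Q}$ and $y$ is fresh, then $\tp{\Delta, y:A}{\subst{y}{x}{Q}}$, proved by a routine induction on the typing of $Q$. This handles \crn{$\beta_{\text{fwd}}$}: inverting \crn{Cut} and then \crn{Id} on $\pcomp{x}{A}{\fwd{x}{y}}{Q}$ forces $\Delta_1 = y:A^\bot$ and $\tp{\Delta_2, x:A^\bot}{Q}$, so renaming gives $\tp{\Delta_2, y:A^\bot}{\subst{y}{x}{Q}}$, and $\Delta_2, y:A^\bot = \Delta$. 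Second, a lemma that structural equivalence preserves typing: if $\tp{\Delta}{P}$ and $P \equiv Q$ then $\tp{\Delta}{Q}$, proved by its own induction on $P \equiv Q$ (including reflexivity and transitivity). Its commutativity case uses the involutivity of duality to rewrite the \crn{Cut} at type $A$ into a \crn{Cut} at type $A^\bot$; its associativity case re-partitions a three-way context split, crucially using the side condition that the leftmost process does not depend on the outer channel $y$ so that $y$ lands in the appropriate sub-context. With this lemma, the \crn{$\beta_\equiv$} case follows by chaining: equivalence preservation across $P \equiv Q$, the induction hypothesis on $Q \st R$, then equivalence preservation across $R \equiv S$.

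For the principal reductions such as \crn{$\beta_{\text{inl}}$}, inverting \crn{Cut} on $\pcomp{x}{A \oplus B}{\inl{x}{P'}}{\choice{x}{Q_1}{Q_2}}$ and then inverting \crn{$\oplus_1$} and \crn{$\with$} on the two branches yields $\tp{\Delta_1, x:A}{P'}$ and $\tp{\Delta_2, x:A^\bot}{Q_1}$, using $(A \oplus B)^\bot = A^\bot \with B^\bot$; these recombine under \crn{Cut} at type $A$ to type $\pcomp{x}{A}{P'}{Q_1}$. The commuting conversions such as \crn{$\kappa_{\text{inl}}$} are analogous: inversion exposes the continuation's typing, and the reduct $\inl{x}{\pcomp{z}{C}{P'}{Q}}$ is retyped by first forming the inner \crn{Cut} on $z$ and then applying \crn{$\oplus_1$} on $x$, with the split arranged so that the channel $x$ ends up on the $P'$ side of the cut. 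The cases for the remaining connectives follow the same template. The congruences \crn{$\beta_{\text{cut}1}$} and \crn{$\beta_{\text{cut}2}$} follow immediately from the induction hypothesis applied to the reducing subprocess, re-wrapped under \crn{Cut}.

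The main obstacle is the bookkeeping around context splits, which is most delicate in the commuting-conversion cases and in the associativity case of the equivalence-preservation lemma. There one must verify that the channels used by each subprocess are correctly apportioned between $\Delta_1$ and $\Delta_2$ after the cut is repositioned, and that the freshness and dependency side conditions (\eg, $y \notin \fn(P)$ for \crn{$\equiv_{\text{assoc}}$}) are respected so that each reassembled derivation is well-formed and uses each channel exactly once.
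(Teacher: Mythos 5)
The paper does not actually prove this theorem: it is stated without proof and imported wholesale from Wadler's Classical Processes, where type preservation is the process reading of cut reduction in classical linear logic; the paper only uses it as a black box in the (indirect) subject reduction proof for SCP. So there is no in-paper argument to compare against, but your proposal is the standard proof and is correct in outline: induction on $P \st Q$, inversion of \crn{Cut} (the only rule whose conclusion types a composition), a renaming lemma for \crn{$\beta_{\text{fwd}}$}, an equivalence-preservation lemma for \crn{$\beta_\equiv$}, and reassembly of the derivation in the principal and commuting cases, with the congruence cases following from the induction hypothesis. Two details are worth tightening. First, in the \crn{$\beta_{\text{fwd}}$} case the name $y$ is not globally fresh --- it occurs in $\Delta_1$ --- so the renaming lemma should be stated with the hypothesis $y \notin \dom(\Delta_2) \cup \fn(Q)$, which holds here because $y$ is typed in $\Delta_1$ and the two halves of a split context are disjoint. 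Second, the paper closes $\equiv$ only under reflexivity and transitivity, so in your equivalence-preservation lemma you should either add symmetry or check that each axiom (\eg\ \crn{$\equiv_{\text{assoc}}$}, which is directional) is available in the orientation needed for the $R \equiv S$ premiss of \crn{$\beta_\equiv$}. Neither point threatens the overall argument.
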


\section{Structural Classical Processes (SCP)}

We introduce \textit{Structural Classical Processes} (SCP).
SCP is a reformulation of Classical Processes using a structural context, \ie, in which weakening and contraction hold.
The property we would like to enforce is that the context can only grow as we move upwards in a typing derivation.
This property makes SCP well-suited for mechanizations and in particular HOAS encodings since no complex operations such as context splitting are necessary.
Of course, simply adopting structural rules on top of CP is insufficient because linearity is needed to prove its safety theorems.
Instead, we use local linearity predicates to enforce linearity on a global level.
These linearity checks are given by a judgment $\lin{x}{\fns{P}}$ that informally means ``$x$ occurs linearly in the process $\fns{P}$''.

SCP's syntax is similar to CP's. In particular, we use the same syntax for session types and the same notion of duality.
However, SCP's process syntax explicitly tracks the continuation channels that are left implicit in CP's typing rules (and other on-paper systems).
To illustrate, contrast the CP process $\inl{x}{P}$ with the corresponding SCP process \(\sinl{x}{w}{\fns{P}}\) and their associated typing~rules:
\[
  \getrule{Cinl} \qquad \getrule{Sinl}
\]
In \getrn{Cinl}, the assumption $x:A \oplus B$ in the conclusion is replaced by $x:A$ in the premise, violating our principle that we may only grow contexts.
SCP respects the principle thanks to two changes.
First, the syntax \(\sinl{x}{w}{\fns{P}}\) binds a name \(w\) in \(P\) for the continuation channel of \(x\).
This in turn lets us grow the context in the premise of \getrn{Sinl} with an assumption \(w : A\), while keeping the assumption $x:A \oplus B$.
Our linearity predicate ensures that the continuation channel \(w\) is used instead of \(x\) in \(P\), making these modifications safe.
We explain SCP typing judgments~below.

SCP is a faithful structural encoding of CP: we give a bijection between well-typed CP processes and well-typed linear SCP processes.
Accordingly, we encode SCP instead of CP in LF, and we rely on our equivalence proof to mediate between CP and our LF mechanization of SCP.

\subsection{Type Judgments}

We write \(\stp{\Gamma}{\fns{P}}\) for SCP typing judgments to differentiate them from CP typing judgments \(\tp{\Delta}{P}\).
The context \(\Gamma\) is structural: it enjoys the weakening, contraction, and exchange properties.
Intuitively, it represents the ambient LF context.

\paragraph{Identity and Cut}

Axioms use arbitrary contexts \(\Gamma\) to allow for weakening:
\[
  \getrule*{Sid}
\]

We write \(\spcomp{x}{A}{\fns{P}}{\fns{Q}}\) for the composition of \(\fns{P}\) and \(\fns{Q}\) along a private, bound channel \(x\).
Contrary to the typing rule \getrn{Ccut} in CP, the cut rule in SCP does not split contexts.
This is because contexts can only grow as we move upwards in SCP typing derivations.
\[
  \getrule*{Scut}
\]
This rule illustrates a general design principle of SCP: we must check that any channel introduced in the continuation of a process is used linearly.
In particular, \getrn{Scut} checks that \(\fns{P}\) and \(\fns{Q}\) use the free channel \(x\) linearly.

\paragraph{Choices}

The choice rules explicitly track continuation channels.
In particular, the processes \(\sinl{x}{w}{\fns{P}}\) and \(\sinr{x}{w}{\fns{P}}\) bind the name \(w\) in \(\fns{P}\).
This name stands in for the continuation channel of \(x\) after it has transmitted a left or right label.
The rules \getrn{Sinl} and \getrn{Sinr} grow the context and ensure that \(w\) has the appropriate type in \(\fns{P}\).
We remark that these two rules do not preclude \(x\) and \(w\) from both appearing in \(\fns{P}\).
However, this will be ruled out by our linearity predicate, which checks that \(x\) and its continuation channels are used linearly in \(\sinl{x}{w}{\fns{P}}\) or \(\sinr{x}{w}{\fns{P}}\).
The treatment of continuation channels in the rule \getrn{Swith} is analogous.
\begin{gatherrules}
  \getrule*{Sinl}
  \qquad
  \getrule*{Sinr}
  \\
  \getrule*{Swith}
\end{gatherrules}

\paragraph{Channel Transmission}

The channel transmission rules follow the same principles as the identity and cut rules.
In particular, they do not split channel contexts between processes, and they check that freshly introduced channels are used linearly.
The names \(y\) and \(w\) are bound in \(\sout{x}{y}{\fns{P}}{w}{\fns{Q}}\) and in \(\sinp{x}{y}{w}{\fns{P}}\).
\begin{gatherrules}
  \getrule*{Sotimes}
  \\
  \getrule*{Sparr}
\end{gatherrules}

\paragraph{Termination} The rules for termination are analogous:
\begin{gatherrules}
  \getrule*{S1}
  \qquad
  \getrule*{Sbot}
\end{gatherrules}

\subsection{Linearity Predicate} \label{ssec:scp-lin}
We now define the predicate $\lin{x}{\fns{P}}$. It syntactically checks
that a free channel $x$ and its continuations occur linearly in
$\fns{P}$. This judgment is generic relative to an implicit context of channel names that can be freely renamed, and we assume that this implicit context contains the free names  $\fn(\fns{P})$ of the process $\fns{P}$.
The linearity predicate $\lin{x}{\fns{P}}$ is inductively defined by the following rules, which we informally group into two categories.
The first category specifies when a process uses its principal channels linearly.
The axioms in this category are:
\begin{gatherrules}
  \getrule*{Lfwd1}
  \quad
  \getrule*{Lfwd2}
  \quad
  \getrule*{Lclose}
  \quad
  \getrule*{Lwait}
\end{gatherrules}
For process constructs whose principal channel $x$ would persist in CP, we must check that its continuation channel \(w\) is used linearly in its continuation process \textbf{and} that the original channel \(x\) does not appear in the continuation, thereby capturing the property that \(w\) is the continuation of \(x\).
\begin{gatherrules}
  \getrule*{Lout}
  \quad
  \getrule*{Linp}
  \\
  \getrule*{Linl}
  \quad
  \getrule*{Linr}
  \\
  \getrule*{Lcase}
\end{gatherrules}
These rules do not check the linearity of freshly bound channels, for example, of the channel $y$ in channel output or channel input.
This is because the predicate only checks the linearity of free channels and their continuations.
Although this predicate does not check the linearity of fresh channels such as \(y\), our type system ensures their linear use in well-typed processes.

The second category of rules are congruence cases in which we check the linearity of non-principal channels.
We implicitly assume throughout that \(z\) is distinct from any bound name:
\begin{gatherrules}
  \getrule*{Lwait2}
  \quad
  \getrule*{Lout2}
  \quad
  \getrule*{Lout3}
  \\
  \getrule*{Linp2}
  \quad
  \getrule*{Linl2}
  \quad
  \getrule*{Linr2}
  \\
  \getrule*{Lcase2}
  \quad
  \getrule*{Lpcomp1}
  \quad
  \getrule*{Lpcomp2}
\end{gatherrules}
When checking that \(z\) appears linearly in processes whose context would be split by the typing rules in CP, namely, in channel output and parallel composition, we ensure that \(z\) appears in at most one of the subprocesses.
This lets us use our linearity predicate to mimic context splitting in the presence of structural ambient contexts.

\begin{example}
  \label{ex:main_writeup:1}
  There exists a well-typed SCP process that is not linear, to wit,
  \[
    \infer[\getrn{Sbot}]{
      \stp{x : 1, y : \bot}{\swait{y}{\swait{y}{\sclose{x}}}}
    }{
      \infer[\getrn{Sbot}]{
        \stp{x : 1, y : \bot}{\swait{y}{\sclose{x}}}
      }{
        \infer[\getrn{S1}]{
          \stp{x : 1, y : \bot}{\sclose{x}}
        }{
        }
      }
    }
  \]
  However, it is not the case that \(\lin{y}{\swait{y}{\swait{y}{\sclose{x}}}}\).
  Indeed, the only rule with a conclusion of this form is \getrn{Lwait}, but it is subject to the side condition \(y \notin \fn(\swait{y}{\sclose{x}})\).
\end{example}

\subsection{Equivalence of CP and SCP}

\newcommand{\escp}[1]{\varepsilon(#1)}
\newcommand{\dscp}[1]{\delta(#1)}

We establish a correspondence between CP and SCP typing derivations.
Because CP and SCP use slightly different process syntax, we first define an encoding $\escp{P}$ and a decoding $\dscp{\fns{P}}$ that maps a process in CP to SCP and SCP to CP respectively.
We give several representative cases:
\begin{align*}
  \escp{\fwd{x}{y}} &= \sfwd{x}{y} & \dscp{\sfwd{x}{y}} &= \fwd{x}{y} \\
  \escp{\pcomp{x}{A}{P}{Q}} &= \spcomp{x}{A}{\escp{P}}{\escp{Q}} & \dscp{\spcomp{x}{A}{\fns{P}}{\fns{Q}}} &= \pcomp{x}{A}{\dscp{\fns{P}}}{\dscp{\fns{Q}}} \\
  \escp{\inp{x}{y}{P}} &= \sinp{x}{y}{x}{\escp{P}} & \dscp{\sinp{x}{y}{w}{\fns{P}}} &= \inp{x}{y}{\subst{x}{w}{\dscp{\fns{P}}}} \\
  \escp{\inl{x}{P}} &= \sinl{x}{x}{\escp{P}} & \dscp{\sinl{x}{w}{\fns{P}}} &= \inl{x}{\subst{x}{w}{\dscp{\fns{P}}}}
\end{align*}

The bijection between well-typed processes is subtle because we must account for different structural properties in each system and slight differences in the process syntax.
For example, the judgment \(\stp{\Gamma, x : 1}{\close{x}}\) is derivable in SCP for any \(\Gamma\), whereas the judgment \(\tp{\Gamma, x : 1}{\close{x}}\) is derivable in CP only if \(\Gamma\) is empty.
The key insight is that the bijection holds only if the SCP process uses each channel in its context linearly.
This restriction to linear SCP processes is unproblematic because we only ever consider such processes in our development.

Before stating the equivalence theorem, we introduce two lemmas that we use in its proof.
Both lemmas are proved by induction on the derivation of the typing judgment.

\begin{lemma}[Weakening] \label{lem:scp-wk}
  If\/ $\stp{\Gamma}{\fns{P}}$, then $\stp{\Gamma, x:A}{\fns{P}}$.
\end{lemma}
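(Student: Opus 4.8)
The plan is to prove Weakening by induction on the derivation of the typing judgment $\stp{\Gamma}{\fns{P}}$, case-analyzing on the last rule applied. The statement says that if $\stp{\Gamma}{\fns{P}}$ holds, then so does $\stp{\Gamma, x:A}{\fns{P}}$ for an arbitrary fresh variable $x$ and type $A$. The essential reason this works is that \emph{every} SCP typing rule carries its full context $\Gamma$ unchanged into each premise (modulo adding bound continuation channels), rather than splitting it. So a fresh assumption $x:A$ placed in the conclusion's context can simply be threaded, unchanged, through each premise's context by appealing to the induction hypothesis.

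\medskip

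\noindent First I would handle the axiom cases \getrn{Sid} and \getrn{S1}. For \getrn{Sid}, the conclusion $\stp{\Gamma', y:B, z:B^\bot}{\sfwd{y}{z}}$ is derivable for an \emph{arbitrary} context, so $\stp{\Gamma', y:B, z:B^\bot, x:A}{\sfwd{y}{z}}$ is immediately an instance of the same axiom; similarly for \getrn{S1}, where $\stp{\Gamma', y:1}{\close{y}}$ holds for any $\Gamma'$, hence so does the weakened judgment. These cases need no induction hypothesis. For the remaining rules, each premise is a typing judgment over a context of the form $\Gamma, \dotsc$ (possibly extended with continuation or transmitted channels). In each such case, I apply the induction hypothesis to every typing premise to insert $x:A$ into its context, then reapply the same rule. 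For instance, in \getrn{Scut}, from $\stp{\Gamma, y:B}{\fns{P}}$ and $\stp{\Gamma, y:B^\bot}{\fns{Q}}$ the induction hypotheses give $\stp{\Gamma, y:B, x:A}{\fns{P}}$ and $\stp{\Gamma, y:B^\bot, x:A}{\fns{Q}}$; reapplying \getrn{Scut} (whose linearity side-conditions $\lin{y}{\fns{P}}$ and $\lin{y}{\fns{Q}}$ are untouched) yields $\stp{\Gamma, x:A}{\spcomp{y}{B}{\fns{P}}{\fns{Q}}}$. The rules \getrn{Sinl}, \getrn{Sinr}, \getrn{Swith}, \getrn{Sotimes}, \getrn{Sparr}, and \getrn{Sbot} follow the same pattern, simply carrying the extra $x:A$ alongside whatever continuation bindings ($w$, $y$) each rule introduces.

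\medskip

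\noindent The main point requiring care, and the step I expect to be the most delicate, is freshness: the weakening variable $x$ must be chosen distinct from all names already in play, in particular from the bound continuation channels $w$ and transmitted channels $y$ introduced by rules such as \getrn{Sotimes} and \getrn{Sparr}. Since those names are bound and may be $\alpha$-renamed, I would assume without loss of generality (via the usual Barendregt convention on the ambient name context) that $x \neq w$ and $x \neq y$, so that adding $x:A$ to $\Gamma$ does not collide with a name bound higher in the derivation. With this freshness assumption in hand, each inductive step is a mechanical reapplication of the rule, and crucially none of the linearity side-conditions $\lin{\cdot}{\cdot}$ attached to the rules are affected, since those predicates depend only on the process term and not on the structural context $\Gamma$. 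This is precisely why SCP's non-splitting, context-growing design makes weakening admissible by a routine structural induction.
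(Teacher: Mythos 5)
Your proof is correct and takes the same route as the paper, which simply states that weakening is proved by induction on the derivation of $\stp{\Gamma}{\fns{P}}$. Your elaboration of the axiom cases, the threading of $x:A$ through unsplit contexts, the freshness caveat, and the observation that the linearity side-conditions are untouched all match the intended argument.
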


\begin{lemma}[Strengthening] \label{lem:scp-str}
  If\/ $\stp{\Gamma, x:A}{\fns{P}}$ and $x \notin \fn(\fns{P})$, then $\stp{\Gamma}{\fns{P}}$.
\end{lemma}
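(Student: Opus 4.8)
The plan is to induct on the derivation of $\stp{\Gamma, x:A}{\fns{P}}$, proceeding by case analysis on its last rule; since the context is structural I use exchange freely to position $x:A$ as needed. The organizing observation is that the hypothesis $x \notin \fn(\fns{P})$ ensures $x$ is distinct from every channel on which the last rule operates: it differs from the free principal channel(s) named in the conclusion, and from every channel the rule binds, by the convention that bound names are fresh. Hence $x$ is merely a passenger in the context that each rule threads through unchanged, and deleting $x:A$ never invalidates the rule.

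For the axioms \getrn{Sid} and \getrn{S1}, the rule derives $\fns{P}$ over an arbitrary context containing the relevant channels; as those channels are free names of $\fns{P}$ they differ from $x$ and survive the deletion of $x:A$, so the same axiom yields $\stp{\Gamma}{\fns{P}}$. For every rule with premises the argument is uniform: I move $x:A$ into each premise's context, apply the induction hypothesis to strengthen it away, and reapply the rule, leaving the linearity and freshness side conditions untouched since they constrain only the process and not the context. The cut rule \getrn{Scut} is representative: writing $c$ for the bound cut channel of type $B$, its premises are $\stp{\Gamma, x:A, c:B}{\fns{P}}$ and $\stp{\Gamma, x:A, c:B^\bot}{\fns{Q}}$, and from $\fn(\spcomp{c}{B}{\fns{P}}{\fns{Q}}) = (\fn(\fns{P}) \cup \fn(\fns{Q})) \setminus \{c\}$ together with the convention $x \neq c$ we get $x \notin \fn(\fns{P})$ and $x \notin \fn(\fns{Q})$. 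Two appeals to the induction hypothesis then give $\stp{\Gamma, c:B}{\fns{P}}$ and $\stp{\Gamma, c:B^\bot}{\fns{Q}}$, and re-applying \getrn{Scut} yields $\stp{\Gamma}{\spcomp{c}{B}{\fns{P}}{\fns{Q}}}$.

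The only delicate point arises in the rules that introduce fresh continuation or payload channels, namely \getrn{Sotimes}, \getrn{Sparr}, \getrn{Sinl}, \getrn{Sinr}, and \getrn{Swith}, since their premises enlarge the context with those bound names. Before invoking the induction hypothesis on such a premise I must verify that $x$ differs from every freshly bound name (for instance the payload and continuation channels of \getrn{Sotimes}); this follows from the convention that bound names are chosen distinct from $x$ together with the free-name computation, which subtracts exactly those bound names, so that $x \notin \fn$ of each subprocess and the hypothesis applies. I expect this freshness bookkeeping to be the only real obstacle; once it is discharged each case closes by re-applying its rule to the strengthened premises with its side conditions preserved verbatim.
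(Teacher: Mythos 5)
Your proof is correct and follows exactly the route the paper takes: the paper proves this lemma simply ``by induction on the derivation of $\stp{\Gamma, x:A}{\fns{P}}$,'' and your write-up fills in the same case analysis, correctly noting that the linearity side conditions constrain only the process and that the bound-name freshness convention is what lets the hypothesis $x \notin \fn(\fns{P})$ propagate to the premises.
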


\newcommand{\linp}[1]{\ms{lin}(#1)}

\paragraph{Notation}
We write $\lin{\Delta}{\fns{P}}$ as shorthand for $\forall x \in \dom(\Delta).
\lin{x}{\fns{P}}$.

The equivalence theorem shows that we can not only faithfully embed CP processes in SCP but also their typing derivations.
Indeed, \Cref{thm:cp-scp} states that each CP derivation determines the typing derivation of a linear SCP process and that each typing derivation of a linear SCP process can be obtained by weakening a CP typing derivation.
This structure-preserving embedding of CP derivations in SCP is given by induction on the derivation.
The general strategy is that we interleave the CP derivation with the appropriate linearity checks.

\begin{theorem}[Adequacy] \label{thm:cp-scp}
  The function \(\delta\) is left inverse to \(\varepsilon\), \ie, \(\dscp{\escp{P}} = P\) for all CP processes~\(P\).
  The syntax-directed nature of \(\varepsilon\) and \(\delta\) induces functions between CP typing derivations and typing derivations of linear SCP processes:
  \begin{enumerate}
    \item If \(\mc{D}\) is a derivation of\/ \(\tp{\Delta}{P}\), then there exists a derivation \(\escp{\mc{D}}\) of \(\stp{\Delta}{\escp{P}}\), and $\lin{\Delta}{\escp{P}}$ and \(\dscp{\escp{\mc{D}}} = \mc{D}\).
    \item If \(\mc{D}\) is a derivation of \(\stp{\Gamma, \Delta}{\fns{P}}\) where \(\fn(\fns{P}) = \dom(\Delta)\) and \(\lin{\Delta}{\fns{P}}\), then there exists a derivation \(\dscp{\mc{D}}\) of\/ \(\tp{\Delta}{\dscp{\fns{P}}}\), and \(\escp{\dscp{\fns{P}}} = \fns{P}\).
      Moreover, \(\mc{D}\) is the result of weakening the derivation \(\escp{\dscp{\mc{D}}}\) of\/ \(\stp{\Delta}{\fns{P}}\) by \(\Gamma\).
  \end{enumerate}
\end{theorem}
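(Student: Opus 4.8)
The plan is to treat the three assertions separately: the identity $\dscp{\escp{P}} = P$ by structural induction on the CP process $P$, and parts~(1) and~(2) by induction on the given typing derivation. For the identity, the only nontrivial cases are the constructs carrying a continuation channel, such as $\inl{x}{P'}$, $\inp{x}{y}{P'}$, and $\out{x}{y}{P'}{Q'}$. There $\varepsilon$ reuses the principal name $x$ as the bound continuation (\eg{} $\escp{\inl{x}{P'}} = \sinl{x}{x}{\escp{P'}}$), so the decoding's substitution $\subst{x}{w}{\cdot}$ collapses to $\subst{x}{x}{\cdot}$, the identity, and the induction hypothesis closes the case. All other cases are immediate.

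For assertion~(1), I would induct on the CP derivation $\mc{D}$ and, in each case, build $\escp{\mc{D}}$ by applying the matching SCP rule. The mismatch to reconcile is that CP rules such as \getrn{Ccut} and \getrn{Cotimes} split the context whereas their SCP counterparts \getrn{Scut} and \getrn{Sotimes} share it. I bridge this by weakening (\Cref{lem:scp-wk}): in the \getrn{Ccut} case the induction hypothesis supplies derivations over $\Delta_1, x:A$ and $\Delta_2, x:A^\bot$, which I weaken to the common context $\Delta_1,\Delta_2$ before applying \getrn{Scut}. Establishing $\lin{\Delta}{\escp{P}}$ requires choosing the correct linearity rule for each $z \in \dom(\Delta)$; for composition, a channel of $\Delta_1$ is handled by \getrn{Lpcomp1} and one of $\Delta_2$ by \getrn{Lpcomp2}, whose side conditions ($z \notin \fn(\escp{Q})$, resp.\ $z \notin \fn(\escp{P})$) are discharged using the standard fact that a well-typed CP process has exactly its context as free names, together with the disjointness of $\Delta_1$ and $\Delta_2$. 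Finally $\dscp{\escp{\mc{D}}} = \mc{D}$ follows from the first identity and the syntax-directedness of both translations.

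For assertion~(2), I would induct on the SCP derivation of $\stp{\Gamma, \Delta}{\fns{P}}$, now reading the linearity predicate as a prescription for splitting contexts in the reverse direction. In the \getrn{Scut} case, inverting $\lin{z}{\spcomp{x}{A}{\fns{P}}{\fns{Q}}}$ for each $z \in \dom(\Delta)$ forces either \getrn{Lpcomp1} or \getrn{Lpcomp2}, partitioning $\Delta = \Delta_1, \Delta_2$ according to whether $z$ occurs in $\fns{P}$ or in $\fns{Q}$; the side conditions of those rules, together with the fact that the free names of a well-typed SCP process lie in its context, give $\fn(\fns{P}) = \dom(\Delta_1) \cup \{x\}$ and $\fn(\fns{Q}) = \dom(\Delta_2) \cup \{x\}$, exactly matching the hypotheses the induction hypothesis demands once the remaining structural channels are folded into the ambient $\Gamma$ of each premise. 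Applying \getrn{Ccut} then yields a CP derivation over $\Delta_1, \Delta_2 = \Delta$. The additive and unit cases (\getrn{Swith}, \getrn{Sbot}, \getrn{S1}, \getrn{Sid}) proceed analogously, with \Cref{lem:scp-str} used to trim the premises down to the free names required before reassembling the CP derivation, and \Cref{lem:scp-wk} used to reconcile the resulting exact-context derivation with the weakened SCP derivation demanded by the final clause (that $\mc{D}$ is $\escp{\dscp{\mc{D}}}$ weakened by $\Gamma$), which I carry as an invariant of the induction.

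I expect the main obstacle to be the equation $\escp{\dscp{\fns{P}}} = \fns{P}$ in assertion~(2), because $\delta$ forgets the specific bound continuation name $w$ by substituting the principal channel $x$ for it, and $\varepsilon$ then rebinds the continuation to $x$. This round trip can only recover $\fns{P}$ up to renaming the bound continuation, and the renaming $\sinl{x}{w}{\fns{P}} =_\alpha \sinl{x}{x}{\subst{x}{w}{\fns{P}}}$ is legitimate precisely when $x \notin \fn(\fns{P})$ --- which is exactly the side condition furnished by the linearity rule \getrn{Linl} (and analogously \getrn{Linp} and \getrn{Lout}). Pinning down this capture-avoidance argument and confirming that the linearity hypotheses always supply the needed freshness is the delicate part; by contrast, the context-splitting bookkeeping, while tedious, is routine once the free-name characterizations for CP and SCP are in place.
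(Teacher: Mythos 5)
Your proposal is correct and follows essentially the same route as the paper's proof: structural induction for the identity $\dscp{\escp{P}} = P$, induction on the typing derivation for both directions, weakening to reconcile CP's context splits with SCP's shared contexts, inversion on the linearity predicate (via \getrn{Lpcomp1}/\getrn{Lpcomp2} and the free-name characterizations) to recover the splits in the converse direction, and syntax-directedness to obtain the derivation-level identities. You have also correctly isolated the genuinely delicate point --- the $\alpha$-renaming of continuation channels justified by the $x \notin \fn(\fns{P})$ side conditions of the linearity rules --- which is exactly where the paper deploys its genericity and substitution-commutation lemmas.
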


\subsection{Reduction and Type Preservation}

The dynamics of SCP is given by translation to and from CP.
In particular, we write \(\fns{P} \sst \fns{Q}\) whenever \(\dscp{\fns{P}} \st Q\) and \(\escp{Q} = \fns{Q}\) for some CP process \(Q\).
This translation satisfies the usual type-preservation property:


\begin{lemma} \label{lem:scp-lin-fn}
  If\/ \(\stp{\Delta}{\fns{P}}\) and \(\lin{\Delta}{\fns{P}}\), then \(\fn(\fns{P}) = \dom(\Delta)\).
\end{lemma}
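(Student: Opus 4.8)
The plan is to prove the set equality $\fn(\fns{P}) = \dom(\Delta)$ by establishing the two inclusions $\fn(\fns{P}) \subseteq \dom(\Delta)$ and $\dom(\Delta) \subseteq \fn(\fns{P})$ separately, with each inclusion drawing on a different one of the two hypotheses.

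First I would show $\fn(\fns{P}) \subseteq \dom(\Delta)$ by induction on the derivation of $\stp{\Delta}{\fns{P}}$; the linearity hypothesis plays no role here. This is the standard fact that every free channel of a well-typed process is declared in its typing context. In each rule, the free names of the conclusion's process consist of its principal channel---which is always present in the conclusion's context---together with the free names of the premises' processes, minus the continuation and transmitted channels bound by the process former (\eg, $w$ in \getrn{Sinl}, or $y$ and $w$ in \getrn{Sotimes}). Since each such bound name is added to the corresponding premise context and removed from the conclusion's free names, and since the principal channel always appears in the conclusion's context, the inclusion propagates upward through the derivation.

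For the reverse inclusion $\dom(\Delta) \subseteq \fn(\fns{P})$, I would first establish the auxiliary fact that $\lin{x}{\fns{P}}$ implies $x \in \fn(\fns{P})$, by induction on the linearity derivation. In the axiom cases \getrn{Lfwd1}, \getrn{Lfwd2}, \getrn{Lclose}, and \getrn{Lwait}, and in the principal-channel cases \getrn{Lout}, \getrn{Linp}, \getrn{Linl}, \getrn{Linr}, and \getrn{Lcase}, the linear channel appears as the principal channel of the process and is therefore free. In the congruence cases (\getrn{Lwait2}, \getrn{Lout2}, \getrn{Lout3}, \getrn{Linp2}, \getrn{Linl2}, \getrn{Linr2}, \getrn{Lcase2}, \getrn{Lpcomp1}, and \getrn{Lpcomp2}), the induction hypothesis places the channel $z$ among the free names of an immediate subprocess, and since $z$ is by convention distinct from every bound name, it remains free in the enclosing process. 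Unfolding the shorthand $\lin{\Delta}{\fns{P}}$ to $\lin{x}{\fns{P}}$ for each $x \in \dom(\Delta)$ and applying this fact gives $\dom(\Delta) \subseteq \fn(\fns{P})$, and combining the two inclusions yields the result.

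The only real difficulty is the free-name bookkeeping across binders in the two inductions: one must track exactly which names each process former binds and verify that the side conditions already built into the rules---such as $x \notin \fn(\fns{P})$ in \getrn{Lwait} and the convention that $z$ avoids bound names in the congruence rules---guarantee that the relevant channels are neither captured nor dropped. Once this is in place, every case is routine, and the argument hinges on the pleasant observation that the two hypotheses of the lemma each power exactly one of the two inclusions.
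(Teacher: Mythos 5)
Your proposal is correct and follows essentially the same route as the paper: the paper likewise splits the equality into $\dom(\Delta) \subseteq \fn(\fns{P})$, obtained from the auxiliary fact that $\lin{x}{\fns{P}}$ implies $x \in \fn(\fns{P})$ (proved by induction on the linearity derivation), and $\fn(\fns{P}) \subseteq \dom(\Delta)$, obtained by induction on the typing derivation. Your case analysis simply spells out the details that the paper leaves implicit (and that it records separately as its ``Linear Names are Free'' and ``Free Names are Typed'' lemmas in the appendix).
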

\begin{proof}
  By induction, $\lin{x}{\fns{P}}$ implies $x \in \fn(\fns{P})$, so $\lin{\Delta}{\fns{P}}$ implies $\dom(\Delta) \subseteq \fn(\fns{P})$.
  For the opposite inclusion, $\stp{\Delta}{\fns{P}}$ implies $\dom(\Delta) \supseteq \fn(\fns{P})$ by induction, so $\fn(\fns{P}) = \dom(\Delta)$.
\end{proof}

\begin{theorem}[Subject Reduction]
  \label{thm:scp-sr}
  If\/ \(\stp{\Delta}{\fns{P}}\), \(\lin{\Delta}{\fns{P}}\), and \(\fns{P} \sst \fns{Q}\), then \(\stp{\Delta}{\fns{Q}}\) and \(\lin{\Delta}{\fns{Q}}\).
\end{theorem}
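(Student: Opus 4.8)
The plan is to reduce SCP subject reduction to CP type preservation (\Cref{thm:cp-sr}) by transporting the reduction step across the adequacy bijection (\Cref{thm:cp-scp}). The definition of \(\sst\) is engineered precisely for this: \(\fns{P} \sst \fns{Q}\) holds exactly when \(\dscp{\fns{P}} \st Q\) and \(\escp{Q} = \fns{Q}\) for some CP process \(Q\), so a single SCP step is literally a CP reduction sandwiched between a decoding and a re-encoding. The proof is therefore a round trip: decode \(\fns{P}\) into CP, reduce there, and encode the result back into SCP.

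First I would extract the witnessing CP process: from \(\fns{P} \sst \fns{Q}\) I obtain a \(Q\) with \(\dscp{\fns{P}} \st Q\) and \(\escp{Q} = \fns{Q}\). To apply \Cref{thm:cp-scp}(2) and turn the SCP derivation \(\stp{\Delta}{\fns{P}}\) into a CP derivation, I must first discharge its side condition \(\fn(\fns{P}) = \dom(\Delta)\); this is exactly \Cref{lem:scp-lin-fn}, applied to the hypotheses \(\stp{\Delta}{\fns{P}}\) and \(\lin{\Delta}{\fns{P}}\). Taking the ambient context \(\Gamma\) of \Cref{thm:cp-scp}(2) to be empty, I then obtain a CP derivation of \(\tp{\Delta}{\dscp{\fns{P}}}\).

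Next I would invoke CP type preservation. Since \(\tp{\Delta}{\dscp{\fns{P}}}\) and \(\dscp{\fns{P}} \st Q\), \Cref{thm:cp-sr} yields \(\tp{\Delta}{Q}\) over the \emph{same} context \(\Delta\). Finally I would re-encode: applying \Cref{thm:cp-scp}(1) to this CP derivation produces an SCP derivation of \(\stp{\Delta}{\escp{Q}}\) together with \(\lin{\Delta}{\escp{Q}}\). Rewriting along \(\escp{Q} = \fns{Q}\) gives \(\stp{\Delta}{\fns{Q}}\) and \(\lin{\Delta}{\fns{Q}}\), which is exactly the conclusion sought; note that the linearity predicate on the reduct comes for free from the encoding direction of the bijection rather than requiring a separate argument.

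Because all the genuine work has already been localized in the adequacy theorem, CP type preservation, and \Cref{lem:scp-lin-fn}, I expect no deep obstacle here; the care is entirely in matching the hypotheses of each imported result. The most delicate point is the side condition of \Cref{thm:cp-scp}(2): the decoding direction of the bijection is only available for SCP processes whose free names exactly coincide with their context, which is why \Cref{lem:scp-lin-fn} must be applied before decoding. One should also confirm that CP type preservation leaves \(\Delta\) unchanged—which it does—so that the same \(\Delta\) is available when re-encoding via \Cref{thm:cp-scp}(1), and that the \(\st\) witnessing the step may be any CP reduction (including the \(\equiv\)-closure), since CP type preservation is already stated for the full relation.
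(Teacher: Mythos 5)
Your proposal is correct and follows essentially the same route as the paper's own proof: discharge the side condition of the decoding direction via \Cref{lem:scp-lin-fn}, decode to CP, apply \Cref{thm:cp-sr}, and re-encode via \Cref{thm:cp-scp} to recover both the typing and the linearity of the reduct. The paper's proof is merely a terser rendering of exactly these steps.
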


\begin{proof}
  Assume \(\stp{\Delta}{\fns{P}}\), \(\lin{\Delta}{\fns{P}}\), and \(\fns{P} \sst \fns{Q}\).
  Then \(\fn(\fns{P}) = \dom(\Delta)\) by \Cref{lem:scp-lin-fn}.
  Adequacy (\Cref{thm:cp-scp}) implies \(\tp{\Delta}{\dscp{\fns{P}}}\).
  By the assumption \(\fns{P} \sst \fns{Q}\), there exists a \(Q\) such that \(\dscp{\fns{P}} \st Q\) and \(\escp{Q} = \fns{Q}\).
  Subject reduction for CP (\Cref{thm:cp-sr}) implies \(\tp{\Delta}{Q}\), so \(\stp{\Delta}{\fns{Q}}\) and \(\lin{\Delta}{\fns{Q}}\) by adequacy again.
\end{proof}

We could instead directly prove \Cref{thm:scp-sr} by induction on the reduction.
This direct proof is mechanized as \Cref{thm:lf-sr}.

\defrule{Sbetafwd}{\srn{$\beta_{\text{fwd}}$}}{\spcomp{x}{A}{\sfwd{x}{y}}{\fns{Q}} \sst \subst{y}{x}{\fns{Q}}}{\mathstrut}
\defrule{Sbetainl1}{\srn{$\beta_{\text{inl}1}$}}{\spcomp{x}{A \oplus B}{\sinl{x}{w}{\fns{P}}}{\schoice{x}{w}{\fns{Q_1}}{w}{\fns{Q_2}}} \sst \spcomp{w}{A}{\fns{P}}{\fns{Q_1}}}{\mathstrut}
\defrule{Skappainl}{\srn{$\kappa_{\text{inl}}$}}{\spcomp{z}{C}{\sinl{x}{w}{\fns{P}}}{\fns{Q}} \sst \sinl{x}{w}{\spcomp{z}{C}{\fns{P}}{\fns{Q}}}}{\mathstrut}
\defrule{Sbetacut1}{\srn{$\beta_{\text{cut}1}$}}{\spcomp{x}{A}{\fns{P}}{\fns{Q}} \sst \spcomp{x}{A}{\fns{P'}}{\fns{Q}}}{\fns{P} \sst \fns{P'}}
\defrule{Sequivcomm}{\srn{$\equiv_{\text{comm}}$}}{\spcomp{x}{A}{\fns{P}}{\fns{Q}} \equiv \spcomp{x}{A^\bot}{\fns{Q}}{\fns{P}}}{\mathstrut}
\defrule{Sequivassoc}{\srn{$\equiv_{\text{assoc}}$}}{\spcomp{y}{B}{\spcomp{x}{A}{\fns{P}}{\fns{Q}}}{\fns{R}} \equiv \spcomp{x}{A}{\fns{P}}{\spcomp{y}{B}{\fns{Q}}{\fns{R}}}}{\mathstrut}

Since we mechanize SCP, it is convenient to have the reduction and equivalence rules expressed directly in SCP.
We show some such rules below.
They are obtained by translating the rules in \autoref{ssec:cp-reduction} (the second congruence rule for cut omitted).
\begin{gatherrules}
  \getrule*{Sbetafwd}
  \quad
  \getrule*{Sbetacut1}
  \\
  \getrule*{Sbetainl1}
  \\
  \getrule*{Skappainl}
\end{gatherrules}
We obtain SCP's structural equivalence in a similar manner:  \(\fns{P} \equiv \fns{Q}\) whenever \(\dscp{\fns{P}} \equiv \dscp{\fns{Q}}\). We show two cases of this direct translation.
\begin{gatherrules}
  \getrule*{Sequivcomm}
  \quad
  \getrule*{Sequivassoc}
\end{gatherrules}

\section{Encoding SCP in LF}
\label{sec:encoding}
We now encode each component of SCP in the logical framework LF. 
Throughout this section, we make liberal modifications to the working code for presentation/readability purposes.

\subsection{Types}
We encode session types in LF by defining the LF type  @tp: type@.
The type constants for this type correspond to the type constructors in SCP.
\begin{multicols}{2}
  \begin{lstlisting}
1 : tp.               % termination ("provider")
bot : tp.               % termination ("client")
times : tp → tp → tp. % channel output
  \end{lstlisting}
  \columnbreak
  \begin{lstlisting}
par : tp → tp → tp. % channel input
& : tp → tp → tp. % receive choice
+ : tp → tp → tp. % send choice
  \end{lstlisting}
\end{multicols}
We use the LF type family @dual: tp → tp → type@ to represent duality as a relation between two types.
The constants of this type family correspond to the equational definition of duality.
In particular, @dual A A'@ encodes $A = A^\bot$ (where $A^\bot = A'$).
\begin{multicols}{2}
  \begin{lstlisting}
D1 : dual 1 ⊥.
D⊥ : dual ⊥   1.
D⊗: dual A A' → dual B B'
   → dual (A ⊗ B) (A' ⅋ B').
  \end{lstlisting}
  \columnbreak
  \begin{lstlisting}
D⅋: dual A A' → dual B B'
   → dual (A ⅋ B) (A' ⊗ B').
D&: dual A A' → dual B B'
   → dual (A & B) (A' ⊕ B').
D⊕: dual A A' → dual B B'
   → dual (A ⊕ B) (A' & B').
 \end{lstlisting}
\end{multicols}

\subsection{Processes}
We give an encoding of processes by interpreting all channel
bindings as intuitionistic functions in LF.  First, we define channel names as the
type family @name@. Unlike in the functional setting where everything is an
expression, in the process calculus setting, channels and processes are
distinct. This leads to a so-called weak-HOAS encoding~\cite{Despeyroux:TLCA95}. We then
introduce the predicate @proc@, standing for processes.

\begin{lstlisting}
name : type.      % channel names
proc : type.      % process
\end{lstlisting}

We first encode $\sfwd{x}{y}, \sclose{x},$  and $\swait{x}{\fns{P}}$, which introduce no channel bindings.
The former requires two names and the latter two require one name.
\begin{lstlisting}
fwd : name → name → proc.       % fwd x y
close : name → proc.            % close x
wait : name → proc → proc. % wait x; P
\end{lstlisting}

The processes $\sinl{x}{w}{\fns{P}}, \sinr{x}{w}{\fns{P}},$ and $\schoice{x}{w}{\fns{P}}{w}{\fns{Q}}$ first require some name $x$.
They then bind a fresh continuation channel $w$ to the continuation processes $P$ and $Q$.
We therefore encode the continuation processes as intuitionistic functions @name → proc@.
\begin{lstlisting}
inl : name → (name → proc) → proc.  % x.inl; w.P
inr : name → (name → proc) → proc.  % x.inr; w.P
choice : name → (name → proc) → (name → proc) → proc. % case x (w.P, w.Q)
\end{lstlisting}

Channel output $\sout{x}{y}{\fns{P}}{w}{\fns{Q}}$ binds the channel $y$ to the process $\fns{P}$ and
the continuation channel $w$ to the process $\fns{Q}$.
We therefore encode $y.\fns{P}$ and $w.\fns{Q}$ as intuitionistic functions @name → proc@:
\begin{lstlisting}
out : name → (name → proc) → (name → proc) → proc.  % out x y; (y.P || w.Q)
\end{lstlisting}

Similarly, channel input $\sinp{x}{y}{w}{\fns{P}}$ binds two channels to $\fns{P}$: the continuation channel $w$ and the received channel $y$.
We therefore encode $\fns{P}$ as a function with two channel names as input.
\begin{lstlisting}
inp : name → (name → name → proc) → proc.   % inp x; (w.y.P)
\end{lstlisting}

Parallel composition $\spcomp{x}{A}{\fns{P}}{\fns{Q}}$ takes some session type $A$ and binds a fresh $x$ to both $\fns{P}$ and $\fns{Q}$, so we encode both processes as functions.
\begin{lstlisting}
pcomp : tp → (name → proc) → (name → proc) → proc. % νx:A. (P || Q)
\end{lstlisting}

\subsection{Linearity Predicate}
On paper, we inductively defined a predicate $\lin{x}{\fns{P}}$ that checks if $x$
occurs ``linearly'' in a process $\fns{P}$. This predicate clearly respects
renaming -- if $\lin{x}{\fns{P}}$ and $y$ is fresh with respect to $\fns{P}$, then $\lin{y}{[y/x]\fns{P}}$.
We encode this predicate in LF as a type family over functions from names $x$ to processes $\fns{P}$.
Inhabitants of this family correspond to functions that produces a process that treats its input channel linearly.
\begin{lstlisting}
linear : (name → proc) → type.
\end{lstlisting}


Unlike our encodings of types and duality, processes can depend
on assumptions of the form @x1:name, ..., xn:name@ that are stored
in the so-called \emph{ambient} context. In fact, in Beluga, we always
consider an object with respect to the context in which it is
meaningful. In the on-paper definition of linearity (see
\cref{ssec:scp-lin}) we left this context implicit and only remarked
that the set of free names  $\fn(\fns{P})$ of a process $\fns{P}$ is a subset of
this ambient context of channel names. However, when we encode the
linearity predicate in LF, we need to more carefully quantify over
channel names as we recursively analyze the linearity of a given process.

Intuitively, we define the constructors for linearity by pattern
matching on various process constructors. By convention, we will use
capital letters for metavariables that are implicitly quantified at
the outside. 
These metavariables describe closed LF terms; in particular when the
metavariables stand for processes, it requires that the processes
\emph{not} depend on any local, internal bindings.
We heavily exploit this feature in our encoding to obtain side conditions of the form $x \notin \fn(P)$ for free.

We begin by translating the axioms in \autoref{ssec:scp-lin}:\nopagebreak\\[-0.5\baselineskip]\nopagebreak%
\begin{minipage}[t]{0.4\linewidth}
\begin{lstlisting}
l_fwd1  : linear (\x. fwd x Y).
l_fwd2  : linear (\x. fwd Y x).
l_close : linear (\x. close x).
l_wait  : linear (\x. wait x P).
\end{lstlisting}
\end{minipage}%
\begin{minipage}[t]{0.6\linewidth}
  \begin{gatherrules}
    \getrule{Lfwd1}
    \qquad
    \getrule{Lfwd2}
    \\
    \getrule{Lclose}
    \qquad
    \getrule{Lwait}
  \end{gatherrules}
\end{minipage}\\

\smallskip
\noindent
Here, @Y:name@ in both @l_fwd1@ and @l_fwd2@ are implicitly quantified
at the outside and cannot depend on the input channel \ie. $x
\neq Y$.
Similarly, the metavariable @P:proc@ in @l_wait@ cannot depend on
the input channel $x$, satisfying the condition that
$\getrh{Lwait}{1}$.

The remaining principal cases must continue to check for linearity in the
continuation process. Consider the principal case for channel output:\\[-0.5\baselineskip]
\begin{minipage}[t]{0.5\linewidth}%
  \begin{lstlisting}
% where Q : (name → proc)
l_out : linear Q → linear (\x. out x P Q).
  \end{lstlisting}
\end{minipage}%
\begin{minipage}[t]{0.5\linewidth}%
  \[
    \getrule{Lout}
  \]
\end{minipage}\\
\noindent The premise $\getrh{Lout}{1}$ corresponds to the input @linear Q@ for this constructor because we encode $\fns{Q}$ as a function @name → proc@.
The additional condition that $x$ does not appear in $\fns{P}$ and $\fns{Q}$ follows because @P@ and @Q@ are metavariables, meaning they cannot depend on the internally bound @x:name@.

The encoding of the principal case for channel input requires a bit more care.
Recall the on-paper rule:
\[
  \getrule{Linp}
\]
Following the strategy for channel output, we would like to continue checking that the continuation channel $w$ appears linearly in $\fns{P}$ by requiring it as an input in our encoding.
But since we encode $\fns{P}$ as a two argument function @name → name → proc@, we cannot simply say
\begin{lstlisting}
l_inp : linear P
      → linear (\x. inp x P). % WRONG
\end{lstlisting}
Instead, what we need as our premise is the fact that $\fns{P}$ is linear with respect to some input $w$ given \emph{any} $y$.
To check this, we universally quantify over @y@
using the syntax @{y:name}@:
\begin{lstlisting}
l_inp : ({y:name} linear (\w. P w y))
      → linear (\x. inp x P).
\end{lstlisting}
The condition that $x$ does not appear in $\fns{P}$ again follows from the
fact that $\fns{P}$ must be closed.

The other principal cases are standard translations, which we present in a less verbose manner.
The continuation channels are checked in the same style as in channel output.\\[-0.5\baselineskip]
\begin{minipage}[t]{0.45\linewidth}
  \begin{lstlisting}
l_inl : linear P → linear (\x. inl x P).
l_inr : linear P → linear (\x. inr x P).
l_choice : linear P → linear Q
         → linear (\x. choice x P Q).
  \end{lstlisting}
\end{minipage}\hspace{-2.1em}%
\begin{minipage}[t]{0.65\linewidth}
\begin{gatherrules}
    \getrule{Linl}
    \quad
    \getrule{Linr}
    \\
    \getrule{Lcase}
\end{gatherrules}
\end{minipage}
\smallskip
The congruence cases follow similar ideas except with complex bindings as in the
principal case for input. The simplest case is the encoding of wait:
\begin{multicols}{2}
  \begin{lstlisting}
l_wait2 : linear P → linear (\z. wait X (P z)).
  \end{lstlisting}
  \columnbreak
  \[
    \getrule{Lwait2}
  \]
\end{multicols}
\noindent
Here, it is important to recognize that @(P z)@ is of type @proc@ according to the @wait@ constructor, meaning @P@ is of type @name → proc@.
Therefore, requiring @linear P@ corresponds to checking $\getrh{Lwait2}{1}$.

The congruence case for input is perhaps the most extreme instance of this complex binding:
\begin{multicols}{2}
  \begin{lstlisting}
l_inp2 : ({w:name}{y:name} linear (\z. P z w y))
       → linear (\z. inp X (P z)).
  \end{lstlisting}
  \columnbreak
  \[
    \getrule{Linp2}
  \]
\end{multicols}
\noindent
Here, @(P z)@ is of type @name → name → proc@, so we check for linearity of @z@ by requiring it to be linear with any @w@ and @y@.

Next, we consider the congruence cases for parallel composition.
\begin{multicols}{2}
  \begin{lstlisting}
l_pcomp1 : ({x:name} linear  (\z. P x z))
         → linear (\z. (pcomp A (\x. P x z) Q)).
l_pcomp2 : ({x:name} linear  (\z. Q x z))
         → linear (\z. pcomp A P (\x. Q x z)).
  \end{lstlisting}
  \columnbreak
  \begin{gatherrules}
    \getrule{Lpcomp1}
    \\
    \getrule{Lpcomp2}
  \end{gatherrules}
\end{multicols}
Since @Q@ is a metavariable in @l_pcomp1@, it must be closed with respect to @z@, so it satisfies the condition $\getrh{Lpcomp1}{2}$.
The condition $\getrh{Lpcomp2}{2}$ in @l_pcomp2@ is satisfied for the same reason.

We summarize the remaining cases below.\\[-0.5\baselineskip]
\begin{minipage}[t]{0.55\linewidth}
  \begin{lstlisting}
l_out2 : ({y:name} linear (\z. P z y))
       → linear (\z. out X (P z) Q).
l_inl2 : ({x':name} linear (\z. P z x'))
       → linear (\z. inl X (P z)).
l_choice2 : ({x':name} linear (\z. P z x'))
          → ({x':name} linear (\z. Q z x'))
          → linear (\z. choice X (P z) (Q z)).
  \end{lstlisting}
\end{minipage}%
\begin{minipage}[t]{0.45\linewidth}
  \begin{lstlisting}
l_out3 : ({x':name} linear (\z. Q z x'))
       → linear (\z. out X P (Q z)).
l_inr2 : ({x':name} linear (\z. P z x'))
       → linear (\z. inr X (P z)).
  \end{lstlisting}
\end{minipage}

\subsection{Type Judgments}

To encode session typing, we follow the encoding for the sequent
calculus in the logical framework LF (see for example \cite{twelf-popl-tutorial}).
Since type judgments depend on assumptions of the form $x:A$, we introduce
the type family @hyp : name → tp → type@ to associate a channel name
with a session type.
We then encode the type judgment $\stp{\Gamma}{P}$ as a judgment on a
process: @wtp : proc → type@ with ambient assumptions of the form
@x1:name,h1:hyp x1 A1, ..., xn:name,hn:hyp xn An@ which represent
$\Gamma$. Note that the use of these assumptions is unrestricted, but
the linearity predicate ensures that if an assumption is used,
then it is used linearly.
As an example, we could encode the rule
\[
  \getrule{S1}
\]
in an obvious manner:
\begin{lstlisting}
wtp_close : {X:name}hyp X 1 → wtp (close X).
\end{lstlisting}
To establish \lstinline!wtp (close X)!, we must have an assumption
\lstinline!hyp X 1!. While it is not strictly necessary to
explicitly  quantify over the channel name \lstinline!X!,
doing so makes encoding the metatheory easier.

Forwarding requires two channels of dual type:\\[-0.5\baselineskip]
\begin{minipage}[t]{0.5\linewidth}%
  \begin{lstlisting}
wtp_fwd : dual A A'
        → {X:name} hyp X A → {Y:name} hyp Y A'
        → wtp (fwd X Y).
  \end{lstlisting}
\end{minipage}%
\begin{minipage}[t]{0.5\linewidth}
  \vspace{0.75em}
  \[
    \getrule{Sid}
  \]
\end{minipage}\\
We encode this rule by requiring a duality relation between two session types $A$ and $A'$ alongside corresponding hypotheses that $X$ and $Y$ are of type $A$ and $A'$ respectively.

The encoding of parallel composition requires a similar trick for duality.\\[-0.5\baselineskip]
\noindent\begin{minipage}[t]{0.4\linewidth}%
  \begin{lstlisting}
wtp_pcomp : dual A A'
  → ({x:name} hyp x A → wtp (P x))
  → ({x:name} hyp x A' → wtp (Q x))
  → linear P → linear Q
  → wtp (pcomp A P Q).
  \end{lstlisting}
\end{minipage}%
\begin{minipage}[t]{0.6\linewidth}%
  \vspace{1em}
  \[
    \getrule{Scut}
  \]
\end{minipage}

\noindent
We encode the premise $\getrh{Scut}{1}$ as a function that takes
some @x:name@ and assumption @hyp x A@ to prove that @(P x)@ is well-typed.
A different reading of this premise is simply as ``for all
@x:name@, assuming  @hyp x A@, we show that @wtp (P x)@''.
The premise $\getrh{Scut}{2}$ corresponds to @linear P@ since @P@ is of type @name -> proc@, and the remaining two premises follow the same idea.

Continuation channels are simply treated as bindings in the same way we treat cut. For instance:\\[-0.5\baselineskip]
\begin{minipage}[t]{0.5\linewidth}%
  \begin{lstlisting}
wtp_inl : {X:name} hyp X (A + B)
          → ({w:name} hyp w A → wtp (P w))
          → wtp (inl X P).
\end{lstlisting}
\end{minipage}%
\begin{minipage}[t]{0.5\linewidth}
  \vspace{0.5em}
  \[
    \getrule{Sinl}
  \]
\end{minipage}\\
The first two inputs to the constructor is a name @X@ and a hypothesis that @X@ is of type @A + B@.
The next input is that the continuation process @(P w)@ is well-typed given an assumption @w:name@ and @hyp w A@, corresponding to the premise of the \getrn{Sinl} rule.

The remaining cases follows a similar pattern. Linearity is checked for the freshly bound channels on channel output and input as in the typing for parallel composition.
We defer the full encoding to the attached artifact.

\subsection{Reductions and Structural Equivalence}
We model both reductions $\fns{P} \sst \fns{Q}$ and structural equivalences $\fns{P} \equiv \fns{Q}$ as
relations.
\begin{lstlisting}
step : proc → proc → type.
equiv : proc → proc → type.
\end{lstlisting}

The encoding is fairly simple. For example, consider\\
\begin{minipage}[t]{0.5\linewidth}%
  \begin{lstlisting}
betafwd : step (pcomp A (\x. fwd x Y) Q) (Q Y).
  \end{lstlisting}
\end{minipage}%
\begin{minipage}[t]{0.5\linewidth}%
  \vspace{-1em}
  \[
    \getrule{Sbetafwd}
  \]
\end{minipage}
\medskip

\noindent
Since @Y:name@ and @Q:name -> proc@, we rely on the LF application
@(Q Y)@ to accomplish the object-level substitution $\subst{y}{x}{\fns{Q}}$.

We write congruence rules by requiring the inner process to step under some arbitrary @x:name@:
\begin{multicols}{2}
  \begin{lstlisting}
betacut1 : ({x:name} step ((P x) (P' x)))
      → step (pcomp A P Q) (pcomp A P' Q).
  \end{lstlisting}
  \columnbreak
  \[
    \getrule{Sbetacut1}
  \]
\end{multicols}
Principal rules, such as
\[
  \getrule{Sbetainl1}
\]
can be encoded straightforwardly:
\begin{lstlisting}
betainl : step (pcomp (A + B) (\x. inl x P) (\x. choice x Q R))
            (pcomp A P Q).
\end{lstlisting}
The names of the bound channels $x$ and $w$ are not explicit since the metavariables @P@, @Q@, and @R@ are all functions @name -> proc@ and can take an arbitrary name.

The remaining reduction rules and structural equivalences are similarly encoded.
Since there are no interesting cases to discuss, we defer the complete presentation to the included artifact.

\section{Adequacy of the Encoding}
We prove adequacy for each component of our encoding of SCP.
The proofs are quite tedious as is usual for these proofs, so we defer a more detailed overview of the proofs in an attached appendix.
In this section we focus on stating the right adequacy lemmas while also providing a high-level overview on the proof strategy for the more complex lemmas.

\subsection{Notation}
We use the sequent $\lf{\Gamma}{M : \tau}$ to refer to judgments within LF.
For instance, $\lftp{M}$ asserts that the LF term $M$ is of type ${\color{magenta}\mathrm{tp}}$ under no assumptions.
Similarly, $\lfwtp{\Gamma}{D}{P}$ asserts that the LF term $D$ is of type ${\color{magenta}\mathrm{wtp}} \; P$ where $P$ is some LF term of type ${\color{magenta}\mathrm{proc}}$.
Informally, $D$ in this context would correspond to a typing derivation.
We also work with LF \emph{canonical} forms, essentially the $\beta \eta$ normal forms of a given type, as is standard in adequacy statements.

\subsection{Session Types and Duality}
Adequacy for the encoding of session types can be shown with the obvious
translation function $\etp{-}$ that maps session types $A$ to LF terms $\etp{A}$ of type @tp@.
\begin{lemma}[Adequacy of {\color{magenta}tp}] \label{lem:adeq-tp}
  There exists a bijection between the set of session types and canonical LF terms $M$ such that $\lftp{M}$.
\end{lemma}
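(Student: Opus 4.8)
The plan is to exhibit an explicit translation function $\etp{-}$ mapping session types to canonical LF terms of type @tp@, and then prove it is a bijection by constructing its inverse and verifying both directions. The translation is defined by structural recursion on session types: $\etp{1} = @1@$, $\etp{\bot} = @⊥@$, $\etp{\stensor{A}{B}} = @times@\;\etp{A}\;\etp{B}$, and analogously for the remaining three connectives $\parr, \with, \oplus$, each sending the SCP connective to its corresponding LF type constant applied to the recursively translated subtypes. This mirrors exactly the one-to-one correspondence between SCP's type constructors and the LF constants @1@, @bot@, @times@, @par@, @&@, @+@ declared in the encoding of @tp@.

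The core of the argument is to show $\etp{-}$ is well-defined (every output is a canonical LF term $M$ with $\lftp{M}$) and bijective. First I would verify by induction on $A$ that $\lftp{\etp{A}}$ holds and that $\etp{A}$ is in canonical form: this is immediate since @tp@ has no higher-order constructors --- every constant either is a base constant or takes arguments that are themselves of type @tp@, so full application yields a $\beta\eta$-normal term with no pending redexes or $\eta$-expandable subterms. For surjectivity and injectivity, I would define an inverse decoding $\delta_{@tp@}$ by recursion on the structure of canonical LF terms of type @tp@ and show $\delta_{@tp@}(\etp{A}) = A$ and $\etp{\delta_{@tp@}(M)} = M$ by mutual induction.

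The main obstacle, and the only nonroutine step, is the canonical-forms analysis establishing that \emph{every} canonical LF term $M$ with $\lftp{M}$ arises as $\etp{A}$ for a unique $A$. The key lemma is an inversion principle: since @tp@ is an LF type with no dependencies and is inhabited only by the six listed constants, a canonical term of type @tp@ in the empty context must be headed by one of these constants, with arguments that are themselves canonical terms of type @tp@. I would establish this by appealing to the standard characterization of canonical (long $\beta\eta$-normal) forms in LF: a canonical term at a base type is an atomic term whose head is a constant or variable, and since the context is empty the head must be a constant of @tp@. A short case analysis on which of the six constants heads the term, combined with the inductive hypothesis on the subterms, gives both the existence of a preimage $A$ and its uniqueness, yielding the bijection. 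The absence of binders in the @tp@ encoding keeps this analysis elementary --- no higher-order unification or context reasoning is needed here, unlike the later adequacy lemmas for processes and typing derivations.
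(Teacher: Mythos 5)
Your proposal is correct and follows essentially the same route as the paper: define the syntax-directed encoding $\etp{-}$, define the inverse decoding, and check they are mutually inverse by induction. The paper's own proof simply writes out both translations and asserts they are "clearly inverses"; your additional spelling-out of the canonical-forms inversion principle (every canonical term of type @tp@ in the empty context is headed by one of the six constants) is exactly the reasoning the paper leaves implicit.
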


Adequacy of duality is also easy to show once stated properly.
Since there is a slight difference between the on-paper definition of duality as a unary function and the LF encoding of duality as a relation, we state adequacy for the encoding of duality as follows.
\begin{lemma}[Adequacy of {\color{magenta}dual}] \label{lem:adeq-dual}
  \leavevmode
  \begin{enumerate}
    \item For any session type $A$, there exists a unique LF canonical form $D$ such
      that $\lfdual{D}{\etp{A}}{\etp{A^\bot}}$.
    \item For any LF canonical form $D$ such that $\lfdual{D}{\etp{A}}{\etp{A'}}$, $\;A' = A^\bot$.
  \end{enumerate}
\end{lemma}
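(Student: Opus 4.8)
The plan is to prove both parts by induction, using \Cref{lem:adeq-tp} to mediate between session types and their LF encodings. The key structural observation is that the type family @dual@ has no higher-order argument positions---each of its six constructors takes only @dual@-typed subderivations---and that we reason in the empty ambient context, so it contains no @dual@ hypotheses. Consequently every canonical form $D$ with $\lfdual{D}{M}{N}$ is a closed, first-order application of one of the constructors $\mathtt{D}_1, \mathtt{D}_\bot, \mathtt{D}_\otimes, \mathtt{D}_\parr, \mathtt{D}_\&, \mathtt{D}_\oplus$ to canonical @dual@ subderivations. This reduces the whole argument to a case analysis on constructor heads, with subderivations handled by the induction hypothesis.

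For Part 1, I first establish existence by induction on the session type $A$. The base cases $A = 1$ and $A = \bot$ are witnessed by $\mathtt{D}_1$ and $\mathtt{D}_\bot$; in each inductive case, say $A = A_1 \otimes A_2$, the induction hypothesis supplies canonical derivations of $\lfdual{D_i}{\etp{A_i}}{\etp{A_i^\bot}}$, and since $\etp{-}$ commutes with the connectives, $\mathtt{D}_\otimes\,D_1\,D_2$ is a canonical derivation whose second index is $\etp{A_1^\bot \parr A_2^\bot}$; the defining equation $(A_1 \otimes A_2)^\bot = A_1^\bot \parr A_2^\bot$ identifies this with $\etp{A^\bot}$. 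Uniqueness follows because the head constructor of $\etp{A}$---determined by $A$ through the bijection of \Cref{lem:adeq-tp}---selects at most one of the six constructors as a possible head for $D$, their conclusions having pairwise distinct first indices $1, \bot, \otimes, \parr, \&, \oplus$; the canonical subderivations are then unique by the induction hypothesis.

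For Part 2, I induct on the canonical form $D$, inverting on its head constructor. Each case fixes the shapes of both indices: for instance, if $D = \mathtt{D}_\otimes\,D_1\,D_2$ then the first index is $\etp{A_1} \otimes \etp{A_2}$ and the second is $\etp{A_1'} \parr \etp{A_2'}$ with $\lfdual{D_i}{\etp{A_i}}{\etp{A_i'}}$. Injectivity of $\etp{-}$ (\Cref{lem:adeq-tp}) lets me recover $A = A_1 \otimes A_2$ and $A' = A_1' \parr A_2'$ at the level of session types, the induction hypothesis gives $A_i' = A_i^\bot$, and the defining equation of duality yields $A' = A_1^\bot \parr A_2^\bot = (A_1 \otimes A_2)^\bot = A^\bot$. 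The remaining cases are symmetric.

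The main obstacle is the bookkeeping that connects an LF-level decomposition of the indices, obtained by inversion on $D$, back to a decomposition at the level of session types. This step rests entirely on \Cref{lem:adeq-tp}: because $\etp{-}$ is a compositional bijection, a factorization $\etp{A} = \etp{A_1} \otimes \etp{A_2}$ forces $A = A_1 \otimes A_2$ with the very same $A_1, A_2$, so that the induction hypotheses apply to genuine subterms of $A$. Once this correspondence is in place, every case is routine.
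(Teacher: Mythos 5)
Your proposal is correct and follows essentially the same route as the paper: part~1 by induction on the session type $A$ with uniqueness from inspecting constructor heads, and part~2 by induction on the canonical form $D$ with the encoding's compositionality used to recover the decomposition of $A$ and $A'$. The only difference is that you spell out some details the paper leaves implicit (that canonical @dual@ forms in the empty context are first-order constructor applications, and the explicit appeal to \Cref{lem:adeq-tp} for injectivity of $\etp{-}$), which is fine.
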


\subsection{Processes}
Adequacy of the process encoding also follows naturally from our encoding. In
particular, all channel bindings, which we encode as intuitionistic functions,
precisely match the process syntax of SCP. We can therefore define a translation
$\enc{-}$ from processes in SCP to LF normal forms and its decoding $\dec{-}$ in
the obvious manner.

\begin{definition}
  The encoding of name sets to an LF context is given as follows:
  \[
    \enc{x_1, \ldots, x_n} = \lfn{x_1}, \ldots, \lfn{x_n}
  \]
\end{definition}

\begin{lemma}[Adequacy of {\color{magenta}proc}] \label{lem:adeq-proc}
  For each SCP process \(\fns{P}\), there exists a unique canonical LF form \(\lfproc{\enc{\fn(\fns{P})}}{\enc{\fns{P}}}\) and \(\dec{\enc{\fns{P}}} = \fns{P}\).
  Conversely, if\/ \(\lfproc{\Gamma}{M}\) is a canonical LF form, then \(\dec{M}\) is an SCP process, \(\enc{\dec{M}} = M\), and \(\enc{\fn(\dec{M})} \subseteq \Gamma\).
\end{lemma}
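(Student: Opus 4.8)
The plan is to prove the two halves of the statement — the encoding half and the decoding half — by structural induction, and then to read off the bijection and uniqueness claims from the fact that $\enc{-}$ and $\dec{-}$ are mutually inverse. First I would isolate the only fact about LF that the argument really needs: the relevant ambient contexts consist exclusively of name declarations $\lfn{x}$, and \texttt{proc} is an atomic LF type. Hence a canonical (that is, $\beta$-normal, $\eta$-long) term of type \texttt{proc} in such a context must be an atomic term $h\,M_1\cdots M_n$ whose head $h$ cannot be a context variable — every such variable has the atomic type \texttt{name}, which is distinct from \texttt{proc}, so it can neither head a nontrivial application nor be a term of type \texttt{proc} — and must therefore be one of the process constants \texttt{fwd}, \texttt{close}, \texttt{wait}, \texttt{inl}, \texttt{inr}, \texttt{choice}, \texttt{out}, \texttt{inp}, \texttt{pcomp}, each applied to canonical arguments of the types dictated by its signature. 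In particular, every function-typed argument, such as the $(\mathtt{name} \to \mathtt{proc})$ argument of \texttt{inl}, is itself $\eta$-long, i.e. a $\lambda$-abstraction $\lambda w.\,M'$ with $M'$ canonical of type \texttt{proc} in the context extended by $\lfn{w}$.

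For the encoding half I would induct on the structure of the SCP process $\fns{P}$, following the compositional definition of $\enc{-}$. The constructors that bind no channel, $\sfwd{x}{y}$ and $\sclose{x}$, map to the atomic terms $\mathtt{fwd}\;x\;y$ and $\mathtt{close}\;x$, which are canonical and live in the context $\enc{\fn(\fns{P})}$ by inspection. For a binding constructor such as $\sinl{x}{w}{\fns{P}'}$, the encoding produces $\mathtt{inl}\;x\;(\lambda w.\,\enc{\fns{P}'})$; the induction hypothesis supplies the canonical $\enc{\fns{P}'}$ of type \texttt{proc} over $\enc{\fn(\fns{P}')}$, and since $\fn(\sinl{x}{w}{\fns{P}'}) = \{x\} \cup (\fn(\fns{P}') \setminus \{w\})$, the $\eta$-long argument under $\lfn{w}$ yields a canonical term over exactly $\enc{\fn(\sinl{x}{w}{\fns{P}'})}$. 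The remaining constructors — including the two-place binder of $\sinp{x}{y}{w}{\fns{P}'}$ and the compositions — are handled identically. That $\dec{\enc{\fns{P}}} = \fns{P}$ then follows because $\dec{-}$ inverts each clause of $\enc{-}$.

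For the decoding half I would induct on the canonical term $M$, immediately applying the canonical-forms observation to split on the head constant. In each case the arguments are forced: name arguments are context variables, and each function argument is a $\lambda$-abstraction whose body is canonical of type \texttt{proc} in an extended name context, to which the induction hypothesis applies. Reassembling the corresponding SCP constructor — choosing a fresh name for each $\lambda$-bound variable, which is exactly where LF's meta-level binding discharges $\alpha$-equivalence for us — defines $\dec{M}$ as a genuine SCP process; $\enc{\dec{M}} = M$ holds clause by clause from the induction hypothesis, and $\enc{\fn(\dec{M})} \subseteq \Gamma$ because $\dec{-}$ simply never inspects the name declarations of $\Gamma$ that $M$ does not mention, so $\fn(\dec{M}) \subseteq \dom(\Gamma)$. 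Finally, uniqueness in the first part follows from $\enc \circ \dec = \mathrm{id}$: any canonical $M$ with $\dec{M} = \fns{P}$ satisfies $M = \enc{\dec{M}} = \enc{\fns{P}}$.

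The main obstacle is neither induction in isolation but the bookkeeping of ambient contexts across the HOAS binders. The first clause pins the context to exactly $\enc{\fn(\fns{P})}$, whereas the converse allows an arbitrary larger $\Gamma$; reconciling these — tracking how each $\lambda$-binder both enlarges the name context by one declaration and removes that name from the free-name set of the decoded process, and appealing to the weakening and strengthening properties of LF canonical forms at the level of name contexts to move between the exact and the inflated contexts — is the delicate part, and it is precisely what justifies the equality $\enc{\fn(\fns{P})}$ in the first clause against the inclusion $\enc{\fn(\dec{M})} \subseteq \Gamma$ in the second.
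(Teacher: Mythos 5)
Your proposal is correct and follows essentially the same route as the paper: structural induction on \(\fns{P}\) for the encoding direction and induction on the canonical LF term (split on the head constant, using that context variables have type \texttt{name} and so cannot inhabit \texttt{proc}) for the decoding direction, with the free-name/context bookkeeping handled clause by clause at each binder. The only cosmetic difference is that you derive uniqueness from the retraction identity \(\enc{\dec{M}} = M\), whereas the paper argues it by inspection of constructors; both are fine.
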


The context $\enc{\fn(\fns{P})}$ captures the required assumptions to construct a LF term corresponding to a given process.
For example, an encoding of $\sfwd{x}{y}$ corresponds to the LF term \\${\lfproc{\lfn{x}, \lfn{y}}{\lffwd{x}{y}}}$.
Indeed, $\enc{\fn(\sfwd{x}{y})} = \lfn{x}, \lfn{y}$, allowing the @fwd@ constructor to be applied with the assumptions $\lfn{x}$ and $\lfn{y}$.

Unfortunately, we cannot give a clean bijection result due to weakening in LF derivations.
For example, there is a derivation of $\lfproc{\Gamma, \lfn{x}, \lfn{y}}{\lffwd{x}{y}}$ for any $\Gamma$, and such derivations all correspond to the SCP process $\sfwd{x}{y}$.
Therefore, we only require that the overall context include the free names for the converse direction.
This weaker statement does not affect later developments since weakening in LF does not change the structure of the derivation.
This phenomenon repeats for later adequacy results due to weakening.

\subsection{Linearity}
We define an encoding $\elin{-}$ that maps derivations of linearity predicates in SCP of form $\lin{x}{\fns{P}}$ to LF canonical forms of type
$\lfL{x}{\eproc{\fns{P}}}$. Similarly, we define a decoding $\dlin{-}$ that maps LF canonical forms of type $\lfLo{M}$, where $M$ is of type @name -> proc@, to
derivations of $\lin{x}{\dproc{M\ x}}$.
\begin{lemma}[Adequacy of {\color{magenta}linear}] \label{lem:adeq-linear}
  For each derivation $\mc{D}$ of \/ $\lin{x}{\fns{P}}$, there exists a unique canonical LF term $L = \elin{\mc{D}}$ such that
  $\lflind{\enc{\fn(\fns{P}) \setminus x}}{L}{x}{\eproc{\fns{P}}}$ and $\dlin{L} = \mc{D}$.
  Conversely, if \/ $\lflinm{\Gamma}{L}{M}$ is a canonical LF form, then $\dlin{L}$ is a derivation of \/ $\lin{x}{\dproc{M\ x}}$
  and $\lflinm{\enc{\fn(\dproc{M\ x}) \setminus x}}{\elin{\dlin{L}}}{M}$ where $\enc{\fn(\dproc{M\ x})} \subseteq \Gamma$.
\end{lemma}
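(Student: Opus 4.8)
The plan is to follow the standard recipe for LF adequacy proofs: define the encoding $\elin{-}$ by recursion on the linearity derivation, define the decoding $\dlin{-}$ by recursion on the canonical LF term, and then show the two are mutually inverse. Both maps are syntax-directed — each on-paper linearity rule corresponds to exactly one constructor of the @linear@ family, and conversely each @linear@ constructor determines the last rule applied — so once the correspondence between the on-paper side conditions and LF's metavariable/context discipline is pinned down, the bijection is largely bookkeeping. Throughout I would lean on \Cref{lem:adeq-proc} to transport free-name information between SCP processes and their LF encodings, and on the renaming-invariance of $\lin{x}{\fns{P}}$ noted earlier.

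For the forward direction I would induct on the derivation $\mc{D}$ of $\lin{x}{\fns{P}}$. Each axiom, \getrn{Lfwd1}, \getrn{Lfwd2}, \getrn{Lclose}, and \getrn{Lwait}, maps to the corresponding premise-free constructor @l_fwd1@, @l_fwd2@, @l_close@, @l_wait@; the principal and first-order congruence rules \getrn{Lout}, \getrn{Linl}, \getrn{Linr}, \getrn{Lcase}, \getrn{Lwait2}, \getrn{Lout2}, \getrn{Lout3}, \getrn{Linl2}, \getrn{Linr2}, \getrn{Lcase2} map to constructors whose @linear@-typed arguments are produced by the induction hypothesis. The context of $\elin{\mc{D}}$ is exactly $\enc{\fn(\fns{P}) \setminus x}$, since \Cref{lem:adeq-proc} makes $\eproc{\fns{P}}$ well-formed precisely in $\enc{\fn(\fns{P})}$ and the abstraction $\lambda x.\,\eproc{\fns{P}}$ removes $x$. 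Uniqueness of $L$ and $\dlin{\elin{\mc{D}}} = \mc{D}$ then follow because the head constructor of $\elin{\mc{D}}$ is fixed by the last rule of $\mc{D}$, and canonical forms contain no redexes.

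For the converse I would invert a canonical form $L$ of type $\lflinm{\Gamma}{L}{M}$. Being $\beta\eta$-normal, $L$ must be a possibly $\Pi$-abstracted application of one of the @linear@ constructors; case analysis on this head determines the shape of $M$ by higher-order pattern unification, and \Cref{lem:adeq-proc} guarantees that $\dproc{M\ x}$ is a genuine SCP process with $\enc{\fn(\dproc{M\ x})} \subseteq \Gamma$. Recursing on the @linear@-typed subterms yields the subderivations, which I assemble with the matching linearity rule to obtain $\dlin{L}$, a derivation of $\lin{x}{\dproc{M\ x}}$; re-encoding recovers $L$ by the induction hypothesis, and the weakening slack in $\Gamma$ is harmless for exactly the reason it was in \Cref{lem:adeq-proc}.

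The main obstacle will be the higher-order premises — the constructors @l_inp@, @l_inp2@, @l_pcomp1@, and @l_pcomp2@, whose arguments carry types such as @{y:name} linear (\w. P w y)@. For these I must show that LF's $\Pi$-quantifier over a fresh name matches the implicit ``for any fresh channel'' reading of the on-paper premises, and that introducing such names grows the ambient context $\Gamma$ in step with the free-name set. This means threading the condition $\enc{\fn(\fns{P}) \setminus x}$ correctly through the binders and invoking renaming-invariance, while using the closedness of the metavariables @P@ and @Q@ in rules like @l_out@ and @l_pcomp1@ to recover the on-paper side conditions such as $\getrh{Lout}{2}$ and $\getrh{Lpcomp1}{2}$ for free.
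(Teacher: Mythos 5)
Your proposal is correct and follows essentially the same route as the paper's proof: induction on the derivation (respectively, on the canonical form) exploiting the one-to-one correspondence between linearity rules and \lstinline{linear} constructors, with the context bookkeeping reducing to $\enc{\fn(\fns{P})\setminus x}$ and the $x \notin \fn(\fns{P})$ side conditions recovered from the closedness of LF metavariables. The paper itself only works out two representative cases (\getrn{Lfwd1} and \getrn{Linl}), and your identification of the higher-order premises as the delicate point matches its emphasis.
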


Here, the encoding of the context is slightly tricky because we define the linearity predicate on paper using the syntax $\lin{x}{\fns{P}}$, meaning ${x \in \fn(\fns{P})}$.
In LF however, since we encode the linearity predicate
@linear: (name → proc) → type@ over intuitionistic functions taking some name
$x$, we must use the context $\enc{\fn(\fns{P}) \setminus x}$ when encoding an on-paper derivation of some linearity predicate.
More informally, we establish a correspondence between derivations of
$\lin{x}{\fns{P}}$ and LF canonical forms of $\lfL{x}{\eproc{\fns{P}}}$ under an LF
context \emph{without} the assumption $\lfn{x}$.

At a high level, the proof of this lemma mostly involves ensuring that the various $x \notin \fn(\fns{P})$ conditions are fulfilled by our higher-order encoding and vice versa.
For example, the encoding of
\[
  \getrule{Linl}
\]
is @l_inl: linear M -> linear (\x.inl x M)@, and in particular, @M@ is a metavariable, meaning it cannot depend on the internally bound $x$, satisfying the side condition of $\getrh{Linl}{2}$.

\subsection{Type Judgments}
To establish a relation between SCP type judgments $\stp{\Gamma}{\fns{P}}$ and LF derivations of $\mathtt{wtp }\enc{\fns{P}}$, we must define a context mapping of typing assumptions $\Gamma = x_1:A_1, \ldots, x_n:A_n$.
\begin{definition}
  A context encoding $\enc{\Gamma}$ is defined by introducing LF assumptions $\lfb{h}{x}{A}$ for each typing assumption in $\Gamma$:
  \[
    \enc{x_1:A_1, \ldots, x_n:A_n} = \lfb{h_1}{x_1}{A_1}, \ldots, \lfb{h_n}{x_n}{A_n}
  \]
\end{definition}
We define an encoding $\ewtp{-}$ and decoding $\dwtp{-}$ of type derivations in our adequacy statement.
\begin{lemma}[Adequacy of {\color{magenta}wtp}] \label{lem:adeq-wtp}
  There exists a bijection between typing derivations in SCP of form
  $\stp{\Gamma}{\fns{P}}$ and LF canonical forms $D$ such that
  $\lfwtp{\enc{\Gamma}}{D}{\enc{\fns{P}}}$
\end{lemma}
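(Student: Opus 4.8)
The plan is to prove the bijection by exhibiting the mutually inverse encoding $\ewtp{-}$ and decoding $\dwtp{-}$ explicitly and then verifying that they are inverse. Both functions are defined by structural recursion—$\ewtp{-}$ on the SCP typing derivation and $\dwtp{-}$ on the LF canonical form—each step peeling off a single typing rule or head constructor. Throughout, I would lean on the three adequacy results already established: \Cref{lem:adeq-proc} to transport between $\fns{P}$ and $\enc{\fns{P}}$, \Cref{lem:adeq-dual} for the duality side conditions in \getrn{Sid} and \getrn{Scut}, and \Cref{lem:adeq-linear} for the linearity premises in \getrn{Scut}, \getrn{Sotimes}, and \getrn{Sparr}.

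For the encoding direction, I would induct on the derivation $\mc{D}$ of $\stp{\Gamma}{\fns{P}}$, sending each SCP rule to its homonymous LF constructor: \getrn{Sid} to @wtp_fwd@, \getrn{S1} to @wtp_close@, \getrn{Scut} to @wtp_pcomp@, and so on. A rule whose premise extends the conclusion's context with a fresh binding—e.g.\ \getrn{Sinl}, whose premise adds $w : A$—is encoded as a higher-order argument $\lambda w.\,\lambda h.\, \ewtp{\mc{D}'}$ of type @{w:name} hyp w A -> wtp (P w)@, obtained by recursively encoding the subderivation $\mc{D}'$ under the extended context. Duality premises are discharged via \Cref{lem:adeq-dual} and linearity premises via $\elin{-}$ from \Cref{lem:adeq-linear}, while the shape of $\enc{\fns{P}}$ at each step is guaranteed by \Cref{lem:adeq-proc}.

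For the decoding direction, I would invert the canonical form $D$ inhabiting $\lfwtp{\enc{\Gamma}}{D}{\enc{\fns{P}}}$. Because $\enc{\Gamma}$ declares only assumptions of type @name@ and @hyp@, no canonical term of a @wtp@ type can be a bound variable, so $D$ is headed by one of the listed constructors @wtp_fwd@, @wtp_pcomp@, \ldots; which one is forced by the shape of $\enc{\fns{P}}$ through \Cref{lem:adeq-proc}, and this is exactly what makes $\dwtp{-}$ total and syntax-directed. An argument of atomic type @hyp X A@ is itself canonical and hence a variable drawn from $\enc{\Gamma}$, recovering the assumption $x : A$ in $\Gamma$. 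A higher-order argument, being canonical, is $\eta$-long, so an argument of type @{w:name} hyp w A -> wtp (P w)@ necessarily has the form $\lambda w.\,\lambda h.\, D'$ with $D'$ canonical under the extended context, on which I recurse; linearity arguments are decoded by $\dlin{-}$.

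The round-trip equations $\dwtp{\ewtp{\mc{D}}} = \mc{D}$ and $\ewtp{\dwtp{D}} = D$ then follow by a routine induction, piggybacking on the analogous round-trip identities of \Cref{lem:adeq-proc}, \Cref{lem:adeq-dual}, and \Cref{lem:adeq-linear}. The main obstacle I anticipate is the bookkeeping for the higher-order premises: I must show that canonical LF derivations under an \emph{extended} ambient context are in bijection with SCP subderivations whose context has been grown, and that this correspondence commutes with the $\eta$-expansion of function-typed arguments and with the separate linearity arguments attached to \getrn{Scut}, \getrn{Sotimes}, and \getrn{Sparr}. Strengthening the induction hypothesis to range over arbitrary such context extensions—rather than a fixed $\Gamma$—is the crux; once this is done, weakening in $\Gamma$ is absorbed uniformly on both sides, which is precisely why this lemma yields a clean bijection with no side condition on the context, unlike \Cref{lem:adeq-proc} and \Cref{lem:adeq-linear}.
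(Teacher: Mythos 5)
Your proposal matches the paper's proof in essentially every respect: both proceed by induction on the SCP derivation (respectively the LF canonical form), exploit the one-to-one correspondence between SCP typing rules and @wtp@ constructors, discharge duality and linearity premises via \Cref{lem:adeq-dual} and \Cref{lem:adeq-linear} (with a weakening step for the linearity derivations), handle context-extending premises as higher-order LF arguments, and obtain the round-trip identities from the syntax-directedness of the translation. The paper's write-up is simply terser, working out only the \getrn{Sid} and \getrn{Scut} cases and leaving the rest as analogous.
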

The proof mostly involves appealing to previous adequacy lemmas and is otherwise fairly straightforward.
In fact, the proof for the linearity predicate is more involved due to the implicit implementation of the free name side-conditions using higher-order encoding.
This is not too surprising: the design of SCP was heavily motivated by a desire for a system more amenable to mechanization in LF.
Furthermore, we have a bijection for type judgments because type judgments in SCP also have weakening, making the adequacy statement very clean.

\subsection{Reductions and Structural Equivalences}
Adequacy of reductions is easy to show; most rules are axioms, so we simply appeal to the adequacy of the underlying processes.
The congruence cases are very simple and follows from the appropriate induction hypotheses.
Adequacy of structural equivalence is similarly easy to show.

The adequacy statements are unfortunately slightly cumbersome for the same reason as \Cref{lem:adeq-proc} and \Cref{lem:adeq-linear} since weakening in LF does not allow for a clean bijection.
Again, we want to emphasize that this does not change the structure of the derivations of both @step@ and @equiv@.

\begin{lemma}[Adequacy of {\color{magenta}step}] \label{lem:adeq-step}
  For each SCP reduction $S$ of \/ $\fns{P} \sst \fns{Q}$, there exists a unique canonical LF derivation
  $\lfstep{\enc{\fn(\fns{P})}}{\enc{S}}{\enc{\fns{P}}}{\enc{\fns{Q}}}$ and $\dec{\enc{S}} = S$.
  Conversely, if\/ \({\lfstep{\Gamma}{D}{M}{N}}\) is a canonical LF form, then \(\dec{D}\) is a derivation of
  a reduction $\dproc{M} \sst \dproc{N}$, \(\enc{\dec{D}} = D\), and \(\enc{\fn(\dec{M})} \subseteq \Gamma\).
\end{lemma}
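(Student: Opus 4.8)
The plan is to mirror the proof strategy of the earlier adequacy lemmas, especially \Cref{lem:adeq-proc}, by defining an encoding $\enc{-}$ and a decoding $\dec{-}$ on reduction derivations by structural recursion and then establishing the two directions by induction. Concretely, I would first fix the definitions: $\enc{S}$ sends each SCP reduction rule to the corresponding @step@ constructor, applying $\enc{-}$ recursively to any premise, and $\dec{D}$ inverts this. With these in hand, the forward direction is proved by induction on the derivation $S$ of $\fns{P} \sst \fns{Q}$, and the converse by induction on the canonical LF derivation $D$; the round-trip identities $\dec{\enc{S}} = S$ and $\enc{\dec{D}} = D$ then fall out of the inductive definitions.

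The axiom cases — the principal reductions and commuting conversions such as \getrn{Sbetafwd}, \getrn{Sbetainl1}, and \getrn{Skappainl} — carry no premises, so each reduces to checking that the encodings of the redex and the contractum are exactly the LF terms appearing in the corresponding @step@ axiom. This is where I would lean entirely on \Cref{lem:adeq-proc}: the source and target processes encode to unique canonical @proc@ terms, and uniqueness of canonical forms then forces the LF derivation to be the single applicable constructor. The one point requiring care is that object-level substitution is realized by LF application, as in the reduct @(Q Y)@ for \getrn{Sbetafwd}: I would need the compositionality property that $\enc{\subst{y}{x}{\fns{Q}}}$ coincides with the LF application of the encoding of $\fns{Q}$, viewed as a function of $x$, to $\enc{y}$. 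This is a standard consequence of the weak-HOAS encoding and is already implicit in the adequacy of @proc@, so I would isolate it as a substitution lemma and invoke it here. Any rule that closes reduction under structural equivalence would analogously appeal to the adequacy of @equiv@.

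The congruence cases, chiefly \getrn{Sbetacut1} (and its omitted symmetric partner), have a single premise and are handled by the induction hypothesis. Here the LF constructor quantifies over a fresh name in its premise (\eg @step (P x) (P' x)@ for all @x@), so after instantiating the binder with a parameter I would apply the IH to the subderivation and reassemble the constructor. The context bookkeeping is the only genuine nuisance: as already flagged for \Cref{lem:adeq-proc}, weakening in LF prevents a clean bijection on the nose, which is why the converse statement only asserts $\enc{\fn(\dec{M})} \subseteq \Gamma$ rather than equality. I would track free names through the recursion and verify that the premise's context extends the conclusion's by exactly the freshly bound channel, so that the $\subseteq$ is preserved without disturbing the round-trip equalities.

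The main obstacle I anticipate is not any individual case but the uniform treatment of the substitution-via-application phenomenon together with the weakening slack in the contexts. Establishing the substitution lemma cleanly, and threading the $\subseteq$-bookkeeping through both the axiom and congruence cases without letting it contaminate the identities $\dec{\enc{S}} = S$ and $\enc{\dec{D}} = D$, is the delicate part; the individual reduction rules are otherwise routine once \Cref{lem:adeq-proc} is available.
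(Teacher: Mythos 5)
Your proposal matches the paper's proof, which is only a two-sentence sketch: axiom cases are discharged by appealing to the adequacy of \lstinline{proc} on the source and target processes, and the congruence cases for \lstinline{pcomp} follow from the induction hypothesis. Your additional care about the substitution-via-application point (e.g.\ \lstinline{(Q Y)} realizing $\subst{y}{x}{\fns{Q}}$) and the $\subseteq$-bookkeeping is a sound elaboration of details the paper leaves implicit, not a different route.
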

\begin{lemma}[Adequacy of {\color{magenta}equiv}] \label{lem:adeq-equiv}
  For each SCP structural equivalence $S$ of \/ $\fns{P} \equiv \fns{Q}$, there exists a unique canonical LF derivation
  $\lfequiv{\enc{\fn(\fns{P})}}{\enc{S}}{\enc{\fns{P}}}{\enc{\fns{Q}}}$ and $\dec{\enc{S}} = S$.
  Conversely, if\/ \(\lfequiv{\Gamma}{D}{M}{N}\) is a canonical LF derivation, then \(\dec{D}\) is a derivation of
  a structural equivalence $\dproc{M} \equiv \dproc{N}$, \(\enc{\dec{D}} = D\), and \(\enc{\fn(\dec{M})} \subseteq \Gamma\).
\end{lemma}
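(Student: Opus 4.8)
The plan is to follow the proof of \Cref{lem:adeq-step} almost verbatim, since structural equivalence is generated by an entirely analogous collection of rules. I would first extend the encoding $\enc{-}$ and decoding $\dec{-}$ to derivations of $\fns{P} \equiv \fns{Q}$, sending each SCP equivalence rule to the corresponding constructor of the LF type family @equiv@, and conversely. The two halves of the lemma are then each established by a routine induction working over canonical forms: the forward half by induction on the derivation $S$ of $\fns{P} \equiv \fns{Q}$, and the converse half by induction on the head constructor of the canonical LF derivation $D$.

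For the forward direction, the base cases are the axioms \getrn{Sequivcomm}, \getrn{Sequivassoc}, and reflexivity. Each has a conclusion mentioning only parallel compositions of fixed subprocesses, so existence and uniqueness of the corresponding canonical LF term reduce to adequacy of processes (\Cref{lem:adeq-proc}): applying $\enc{-}$ to the subprocesses yields the unique LF arguments of the @equiv@ constructor, and injectivity of $\enc{-}$ gives uniqueness of the assembled term. The commutativity case additionally produces the dual type $A^\bot$ on its right-hand side, so I would appeal to adequacy of duality (\Cref{lem:adeq-dual}) to supply the unique duality derivation between $\enc{A}$ and $\enc{A^\bot}$ that the constructor requires. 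The closure rules, namely transitivity and any other inductive case, are discharged immediately by the induction hypotheses on the subderivations. The equation $\dec{\enc{S}} = S$ then holds because $\dec{-}$ and $\enc{-}$ are mutually inverse on the process arguments by \Cref{lem:adeq-proc}.

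The converse direction is the mirror image: I would case-split on the head constructor of $D$, read off the SCP rule it encodes, and decode its arguments using \Cref{lem:adeq-proc} (and \Cref{lem:adeq-dual} for the duality argument of the commutativity constructor), with $\enc{\dec{D}} = D$ following by injectivity. The containment $\enc{\fn(\dec{M})} \subseteq \Gamma$ rather than equality is inherited from the analogous clause in \Cref{lem:adeq-proc}: LF weakening lets $\Gamma$ carry spurious name assumptions, but these change neither the decoded process nor the structure of the derivation. The expected main obstacle is bookkeeping rather than conceptual, namely confirming that the implicit side conditions of the on-paper rules --- in particular that the process $P$ in \getrn{Sequivassoc} does not depend on the bound channel $y$ --- are discharged for free by the HOAS encoding, in which such subprocesses are represented by metavariables that cannot mention locally bound names. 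Since this is precisely the mechanism already exploited in \Cref{lem:adeq-proc} and \Cref{lem:adeq-linear}, no genuinely new difficulty arises, and the weakening phenomenon is the only reason the result is a correspondence rather than a clean bijection.
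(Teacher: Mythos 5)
Your proposal is correct and follows essentially the same route as the paper: the paper's proof (shared with \Cref{lem:adeq-step}) likewise handles every axiom case by appealing to \Cref{lem:adeq-proc} on the two processes involved and discharges the closure/congruence cases by the induction hypothesis. The additional details you supply --- the duality derivation for the commutativity rule via \Cref{lem:adeq-dual}, and the HOAS treatment of the side condition on \getrn{Sequivassoc} --- are consistent elaborations of what the paper leaves implicit.
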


\subsection{Adequacy with respect to CP}
Since we establish a bijection between SCP and our encoding and there exists a
bijection between CP and SCP when restricted to well-typed and linear processes,
we also conclude that our encoding is adequate with respect to CP when
restricted to well-typed and linear processes (in the encoding).

\newcommand{\efull}[1]{\varepsilon_\circ(#1)}
\newcommand{\dfull}[1]{\delta_\circ(#1)}
\begin{definition}
  An encoding map $\varepsilon_\circ$ of processes and typing derivations in CP to LF
  is defined by the composition of the encoding $\varepsilon$ of CP to SCP with the encoding $\enc{-}$ of SCP to LF, \ie, $\varepsilon_\circ = \enc{\varepsilon(-)}$.
  Similarly, a decoding map $\delta_\circ$ of processes and typing derivation in LF to CP is defined by the composition of the decoding $\dec{-}$ of LF to SCP with the decoding $\delta$ of SCP to CP,
  \ie, $\delta_\circ = \delta(\dec{-})$.
\end{definition}

\begin{corollary}
  The encoding function \(\varepsilon_\circ\) is left inverse to \(\delta_\circ\) and
  \begin{enumerate}
    \item If \(\mc{D}\) is a derivation of\/ \(\tp{\Delta}{P}\) where $\Delta = \h{x_1}{A_1}, \ldots, \h{x_n}{A_n}$,
      then there exists a collection of LF canonical forms $\{W, L_1, \ldots, L_n\}$ such that
      \begin{itemize}
        \item $W = \efull{\mc{D}}$ such that $\lfwtp{\enc{\Delta}}{W}{\efull{P}}$
        \item $\lflind{\enc{\fn(P) \setminus x_i}}{L_i}{x_i}{\efull{P}}$ for $1 \leq i \leq n$
        \item $\dfull{\efull{\mc{D}}} = \mc{D}$
      \end{itemize}
    \item If \(\{W, L_1, \ldots, L_n\}\) is a collection of LF derivations such that
      \begin{itemize}
        \item $\lfwtp{\Gamma}{W}{M}$ where $\Gamma = \{\lfb{h_1}{x_1}{A_1}, \ldots, \lfb{h_n}{x_n}{A_n}\}$
        \item $\lflind{\Gamma \setminus \{\lfn{x_i}, \lfh{h_i}{x_i}{A_i}\}}{L_i}{x_i}{M}$ for $1 \leq i \leq n$
      \end{itemize}
      then there exists a derivation $\dfull{W}$ of $\tp{\Delta}{\dfull{M}}$ and $\efull{\dfull{M}} = M$ such that $\Gamma = \enc{\Delta}$.
  \end{enumerate}
\end{corollary}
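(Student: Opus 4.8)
The plan is to prove the corollary by composing the two adequacy results already in hand: the equivalence between CP and linear SCP (\Cref{thm:cp-scp}) and the adequacy of the LF encoding of SCP (\Cref{lem:adeq-wtp,lem:adeq-linear,lem:adeq-proc}). Since $\efull{-} = \enc{\escp{-}}$ and $\dfull{-} = \dscp{\dec{-}}$ by definition, each clause reduces to transporting a derivation first through the CP--SCP layer and then through the SCP--LF layer, checking at every hop that the contexts, free-name sets, and linearity side-conditions line up. No new mathematical content is needed; the work is entirely in threading the bookkeeping through the composition.

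For clause (1), I first apply \Cref{thm:cp-scp}(1) to $\mc{D}$, obtaining the SCP derivation $\escp{\mc{D}}$ of $\stp{\Delta}{\escp{P}}$ together with $\lin{\Delta}{\escp{P}}$ and the identity $\dscp{\escp{\mc{D}}} = \mc{D}$. Unfolding $\lin{\Delta}{\escp{P}}$ yields, for each $x_i \in \dom(\Delta)$, a derivation $\mc{L}_i$ of $\lin{x_i}{\escp{P}}$. I then push each component into LF: \Cref{lem:adeq-wtp} sends $\escp{\mc{D}}$ to the unique canonical form $W = \ewtp{\escp{\mc{D}}}$ with $\lfwtp{\enc{\Delta}}{W}{\enc{\escp{P}}}$, and \Cref{lem:adeq-linear} sends each $\mc{L}_i$ to the unique canonical form $L_i$ with the expected linearity type. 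By the definition of the composite maps, $W = \efull{\mc{D}}$ and $\enc{\escp{P}} = \efull{P}$, so these witnesses have exactly the types demanded by the corollary. Finally, the round-trip $\dfull{\efull{\mc{D}}} = \mc{D}$ follows by chaining the inverse halves of the two results: \Cref{lem:adeq-wtp} gives $\dwtp{W} = \escp{\mc{D}}$, and \Cref{thm:cp-scp}(1) gives $\dscp{\escp{\mc{D}}} = \mc{D}$.

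For clause (2), I run the same chain backwards. The converse of \Cref{lem:adeq-wtp} decodes $W$ to an SCP derivation $\dwtp{W}$ of $\stp{\Delta}{\dec{M}}$, where $\Gamma = \enc{\Delta}$ by the hypothesis on $\Gamma$, and the converse of \Cref{lem:adeq-linear} decodes each $L_i$ to a derivation of $\lin{x_i}{\dec{M}}$, hence $\lin{\Delta}{\dec{M}}$. Before invoking the SCP-to-CP direction I must discharge its hypothesis $\fn(\dec{M}) = \dom(\Delta)$; this is precisely \Cref{lem:scp-lin-fn} applied to $\stp{\Delta}{\dec{M}}$ and $\lin{\Delta}{\dec{M}}$. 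With that established, \Cref{thm:cp-scp}(2) (taking its weakening context to be empty) produces the CP derivation $\dfull{W} = \dscp{(\dwtp{W})}$ of $\tp{\Delta}{\dfull{M}}$ and the identity $\escp{\dscp{\dec{M}}} = \dec{M}$. The remaining equation $\efull{\dfull{M}} = M$ then follows by composing this with $\enc{\dec{M}} = M$ from \Cref{lem:adeq-proc}.

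Since the substantive content has already been discharged by the constituent theorems, the main obstacle is bookkeeping rather than any fresh idea, and two points warrant explicit care. First, I must confirm that free names are invariant under the CP-to-SCP encoding, $\fn(P) = \fn(\escp{P})$, so that the contexts $\enc{\fn(P) \setminus x_i}$ appearing in the statement coincide with the contexts $\enc{\fn(\escp{P}) \setminus x_i}$ delivered by \Cref{lem:adeq-linear}; this is a routine induction on the structure of $P$ but must be stated. Second, the decoding direction hinges on the fact that \Cref{thm:cp-scp}(2) requires $\fn = \dom$ together with a fully linear active context: it is exactly the presence of all $n$ witnesses $L_1, \ldots, L_n$, combined with \Cref{lem:scp-lin-fn}, that collapses the weakening context to nothing and licenses the appeal to \Cref{thm:cp-scp}(2).
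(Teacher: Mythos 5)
Your proposal is correct and follows essentially the same route as the paper, which justifies this corollary purely by composing the CP--SCP equivalence (\Cref{thm:cp-scp}) with the LF adequacy lemmas (\Cref{lem:adeq-proc,lem:adeq-linear,lem:adeq-wtp}); the paper does not even spell out the composition, whereas you correctly identify and discharge the two bookkeeping obligations it leaves implicit ($\fn(P) = \fn(\escp{P})$ and the use of \Cref{lem:scp-lin-fn} to meet the $\fn = \dom$ hypothesis of \Cref{thm:cp-scp}(2) with an empty weakening context).
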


\section{Mechanizing the Type Preservation Proof}
\label{sec:mechanization}
In the previous sections, we focused our attention to the encoding of SCP and its adequacy, which were purely done in the logical framework LF.
Now, we give a brief overview of our mechanization of type preservation in the proof assistant Beluga.
Mechanizations in Beluga involve encoding the syntax and semantics of
the object language in the \emph{LF Layer} and then manipulating LF
terms in the \emph{Computational Layer} using contextual types to
characterize derivation trees together with the context in which they
make sense \cite{Nanevski08tocl,Pientka:POPL08,Pientka:PPDP08,Cave:POPL12}.
The contextual types enable clean statements of various strengthening statements, which comprise the majority of the lemmas used in the type preservation proof.

Since the computational layer in Beluga is effectively a functional programming language, inductive proofs of metatheorems are (terminating) recursive functions that manipulate LF objects.
For presentation purposes, we assume no familiarity with the
computational layer of Beluga and explain the lemmas and theorems informally in words.
We defer to the accompanying artifact for the implementation details
of all the lemmas and theorems below.

\subsection{Lemmas of ${\color{magenta}\mathrm{dual}}$}
Due to our encoding of duality as a relation between two types, we must prove symmetry and uniqueness.
The encoding of symmetry is a recursive function \lstinline!dual_sym! that takes as input a closed LF object of type $\lfDual{A}{A'}$ and outputs a closed LF object of type $\lfDual{A'}{A}$.
The encoding of uniqueness takes two closed LF objects of type $\lfDual{A}{A'}$ and $\lfDual{A}{A''}$ and outputs a proof that $A' = A''$.
To encode the equality of session types $A' = A''$, we follow the
standard technique of defining an equality predicate
\lstinline!eq:  tp → tp → type! over session types with reflexivity as its constructor.
\begin{lstlisting}
% Symmetricity and Uniqueness
rec dual_sym : [ ⊢ dual A A' ] → [ ⊢ dual A' A] =
/ total 1 /
fn d ⇒
case d of
| [ ⊢ D1] ⇒ [ ⊢ D⊥]
| [ ⊢ D⊗ Dl Dr] ⇒
  let [ ⊢ l] = dual_sym [ ⊢ Dl] in
  let [ ⊢ r] = dual_sym [ ⊢ Dr] in
  [ ⊢ D⅋ l r]
| ...

rec dual_uniq : [ ⊢ dual A A' ] → [ ⊢ dual A A''] → [ ⊢ eq A' A''] = ...
\end{lstlisting}
The use of the contextual box with no assumptions @[ ⊢ ...]@ captures
closed objects. The contextual variables (or metavariables) \lstinline!A! and \lstinline!A'!
are implicitly quantified at the outside.
The implementations of the two functions pattern match on the
input with appropriate recursive calls for the binary type
constructors, corresponding to the usual induction proofs for these
lemmas. We show only one base case and one recursive case to give the
flavour of how proofs are written as recursive programs. The totality
annotation checks that the program is covering and that all recursive calls
on the first (explicit) argument are structurally smaller and decreasing.

\subsection{Strengthening Lemmas}
Next, we encode strengthening lemmas for contextual LF terms of various types.
First, we present them informally below using LF-like syntax, using $\vdash$ instead of $\vdash_{LF}$ and omitting LF term names for economical purposes:
\begin{lemma}[Strengthening Lemmas]
  \leavevmode
  \begin{enumerate}
    \item If \/ @Gamma,z:name,h:hyp z C |- hyp X A@ and $z \neq X$, then
      @Gamma |- hyp X A@.
    \item If \/ @Delta,z:name |- linear \x.P@ and $z \notin \fn(P)$, then
      @Delta |- linear \x.P@.
    \item If \/ @Gamma,z:name,h:hyp z C |- wtp P@ and $z \notin \fn(P)$, then
      @Gamma |- wtp P@.
    \item If \/ @Delta,z:name |- step P Q@ and $z \notin \fn(P)$, then $z
      \notin \fn(Q)$ and \/ @Delta |- step P Q@.
    \item If \/ @Delta,z:name |- equiv P Q@ and $z \notin \fn(P)$, then $z
      \notin \fn(Q)$ and \/ @Delta |- equiv P Q@.
  \end{enumerate}
  where @Gamma@ consists of assumptions of form @x1:name,h1:hyp x1 A1,..,xn:name,hn:hyp xn An@ and @Delta@ consists of assumptions of form @x1:name,..,xn:name@.
\end{lemma}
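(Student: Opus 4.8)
The plan is to prove all five statements by structural induction on the LF derivation named in the hypothesis of each, leveraging the central feature of our encoding: any metavariable standing for a subprocess is a \emph{closed} LF term and so cannot mention the assumption we wish to discard. In every case the stated side condition—$z \neq X$ in (1), and $z \notin \fn(\fns{P})$ in (2)--(5)—is exactly what certifies that the assumptions @z:name@ and, where present, @h:hyp z C@, are genuinely unused, making their removal sound. Statement (1) is the base the others lean on and is essentially immediate: the LF type family @hyp@ has no constructors and is inhabited only by context variables, so any canonical proof of @hyp X A@ is a single variable lookup. Since $z \neq X$, that variable cannot be the assumption @h:hyp z C@, whose subject is @z@; it therefore already occurs in @Gamma@, and we strengthen directly.

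For (2) and (3) I would induct on the @linear@, respectively @wtp@, derivation with one case per constructor. The axioms and leaf cases, such as @l_fwd1@, @l_close@, and @wtp_close@, hold because the processes involved are assembled from metavariables closed with respect to @z@, so the assumptions to be removed never appear. The remaining constructors split into two kinds. When a premise is itself a metavariable (\eg the closed continuation in @l_out@, or the second process of @l_pcomp1@), closedness discharges the obligation outright. When a premise descends under a freshly introduced binder—as in @l_inp@, @l_pcomp1@, @wtp_pcomp@, and the choice rules—I would apply the induction hypothesis in the context extended by that fresh name (and, for typing, its hypothesis), using that $z \notin \fn(\fns{P})$ propagates to each subprocess, and then reassemble the strengthened subderivation under the binder.

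Statements (4) and (5) carry the extra obligation of first showing $z \notin \fn(\fns{Q})$. This follows from the fact that no reduction or equivalence rule manufactures a free name absent from its source: the rules only rearrange cuts, resolve a choice, or perform a substitution $\subst{y}{x}{\fns{Q}}$ that can at most delete a bound name. Hence $z \notin \fn(\fns{P})$ yields $z \notin \fn(\fns{Q})$ by a routine induction on the @step@ (resp.\ @equiv@) derivation. With the free-name side condition re-established for the target, strengthening of the derivation itself then follows the same inductive pattern used in (2) and (3).

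I expect the main obstacle to be the bookkeeping in the higher-order cases of (2) and (3), where strengthening must commute with context extension: after invoking the induction hypothesis under a freshly bound name, we must re-abstract without reintroducing a dependence on @z@. This is sound precisely because the fresh binder is distinct from @z@, but making the interchange of the two context operations explicit—and ensuring Beluga's contextual types line up across the extension—is where the care lies. The remaining cases are routine once this pattern is in place.
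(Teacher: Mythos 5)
Your proposal is correct and matches the paper's approach: the paper likewise proves each statement by case analysis (pattern matching) on the input derivation, relying on the closedness of metavariables to discharge the $z \notin \fn(\fns{P})$ side conditions, and observes that the free-name condition propagates to $\fns{Q}$ in the @step@ and @equiv@ cases. The only detail you omit is that the paper proves statements (4) and (5) simultaneously by mutual recursion (since @step@ derivations contain @equiv@ subderivations via the $\beta_\equiv$ closure) and packages the strengthened $\fns{Q}$ in an explicit existential data structure, but these are mechanization bookkeeping rather than a different argument.
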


The use of different contexts @Gamma@ and @Delta@ in these statements mostly indicate the spirit of the judgments that we strengthen.
Linearity for instance should not depend on typing assumptions, so we use @Delta@.
In practice, picking the right kind of context to use proved immensely useful in simplifying the final type preservation proof.
In particular, we found that it is more convenient to weaken the final two lemmas regarding @step@ and @equiv@ by stating them under the richer context @Gamma@.

To encode @Delta@ and @Gamma@ in Beluga, we first define \textit{context schemas}.
In our case, we are interested in contexts containing assumptions of names, \ie, @Delta@, and assumptions of names alongside their types for the typing judgments, \ie, @Gamma@:
\begin{lstlisting}
schema nctx = name;
schema ctx = some [A:tp] block x:name, h:hyp x A;
\end{lstlisting}

In the statement of our lemma, we exploit the full power of contextual
variables to cleanly state the strengthening lemmas.
For instance, we encode the side-condition that $z \neq X$ in the strengthening of @hyp X A@ by requiring that @X@ does not depend on @z@:
\begin{lstlisting}
rec str_hyp : (Gamma:ctx) [Gamma, z:name, h:hyp z C[] ⊢ hyp X[..] A[]] → [Gamma ⊢ hyp X A[]] = ...
\end{lstlisting}
We first implicitly abstract over the context \lstinline!Gamma!
specifying what kind of context we are working in.
Further, contextual variables such as @X@ or @A@ are associated with a
substitution. By default, they are associated with the identity
substitution which can be omitted by the user. However, Beluga also allows
us to associate contextual variables with more interesting substitutions.
The weakening substitution on the name @X[..]@ ensures that @X@ only depends on
@Gamma@ and not @z@ or @h@, which indeed captures the requirement $z
\neq X$. The empty substitutions on the session types @A[]@ and @C[]@ indicate
that they do not depend on anything, \ie, they are closed.
We encode the requirement that $z \notin \fn(P)$ in the strengthening lemmas for linearity and typing using a similar technique:
\begin{lstlisting}
rec str_lin : (Delta:nctx) [Delta, z:name ⊢ linear \y. P[.., y]] → [Delta ⊢ linear \y. P] = ...
rec str_wtp : (Gamma:ctx) [Gamma, z:name, h:hyp z C[] ⊢ wtp P[..]] → [Gamma ⊢ wtp P] = ...
\end{lstlisting}
The substitutions associated with the variable @P@ in @P[.., y]@
and @P[..]@ encode that the process @P@ does not depend on the
assumption @z@ that we want to strengthen out, properly capturing the
side-condition of $z \notin \fn(P)$ in both lemmas. Indeed, @str_wtp@
turns out to be a mechanization of \Cref{lem:scp-str}. The proofs of
these lemmas are straightforward and are given by pattern matching on
the input.

The final two strengthening lemmas are a bit different because of the additional free-name condition in the conclusions.
Suppose we naively follow the prior attempts:
\begin{lstlisting}
rec str_step : (Gamma : ctx) [Gamma, x:name ⊢ step P[..] Q] → [Gamma ⊢ step P Q] = ...
\end{lstlisting}
Unfortunately, the conclusion @Gamma |- step P Q@ is not well-typed since @Q@ as used in the premise depends on @Gamma, x:name@ whereas @Q@ as used in the conclusion only depends on @Gamma@.
If we change the premise to @[Gamma, x:name ⊢ step P[..] Q[..]@ to require that @Q@ only depends on @Gamma@, then the lemma is not strong enough.
Indeed, encoding the strengthening lemma actually requires an existential; we must say that there exists some process @Q'@ such that @Gamma |- step P Q'@ and @Q = Q'@.
However, since LF does not have sigma types, we must further encode this existential using a data structure @Result@, whose only constructor takes the process @Q'@, a proof that @Q = Q'@, and a proof that @step P Q'@.
As before, we define equality of processes @eq_proc@ as a relation with only the reflexivity constructor.

\begin{lstlisting}
inductive Result : (Gamma : ctx){P : [Gamma ⊢ proc]}{Q : [Gamma, x:name ⊢ proc]} → ctype =
| Res : {Q' : [Gamma ⊢ proc]}
      → [Gamma, x:name ⊢ eq_proc Q Q'[..]]
      → [Gamma ⊢ step P Q']
      → Result [Gamma ⊢ P] [Gamma, x:name ⊢ Q];
\end{lstlisting}
We can now state the lemma using this data structure:
\begin{lstlisting}
rec str_step : (Gamma : ctx) [Gamma, x:name ⊢ step P[..] Q] → Result [Gamma ⊢ P] [Gamma, x:name ⊢ Q] = ...
\end{lstlisting}
We follow an analogous procedure for strengthening structural equivalences and prove the two lemmas simultaneously via mutual recursion.

\subsection{Auxiliary Lemmas}
We prove two additional lemmas to aid in the type preservation proof.
The first lemma states that $\lin{x}{\fns{P}}$ implies $x \in \fn(\fns{P})$.
We however work with its contrapositive since we do not directly encode $\fn(\fns{P})$.
\begin{lemma}[Linearity requires usage]
  If $x \notin \fn(\fns{P})$, then @Gamma |- linear (\x.P)@ is not derivable.
\end{lemma}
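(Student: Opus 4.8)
The plan is to prove the contrapositive by induction on the putative derivation of \(\lin{x}{\fns{P}}\); in the mechanization this becomes a total recursive function that takes a derivation of @linear (\x. P[..])@ and returns an inhabitant of an empty type. Here @P@ is a metavariable and the weakening substitution @[..]@ records exactly the hypothesis that the bound channel does not occur in the process, since a metavariable cannot mention an internally bound name. The whole argument then rests on a single structural observation: every linearity rule forces its bound channel to occur in the process it certifies, either at the head of the process (in which case the rule cannot possibly conclude something constant in that channel) or inside a subprocess (in which case we recurse).

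First I would dispatch the rules of the first category, namely the axioms \getrn{Lfwd1}, \getrn{Lfwd2}, \getrn{Lclose}, \getrn{Lwait} and the principal rules \getrn{Lout}, \getrn{Linp}, \getrn{Linl}, \getrn{Linr}, \getrn{Lcase}. In each of these the linear channel appears explicitly as the principal subject of the conclusion --- as one of the names of a forwarder, as the channel being closed or waited on, or as the channel carrying the communication. Matching such a conclusion against @\x. P[..]@, whose body is constant in @x@ by hypothesis, is impossible, because the left-hand side genuinely depends on @x@ while @P[..]@ does not. In Beluga these branches are therefore eliminated by coverage: higher-order unification detects the mismatch and the corresponding cases never arise, so no recursive call or use of the premises is needed.

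The surviving cases are the congruence rules \getrn{Lwait2}, \getrn{Lout2}, \getrn{Lout3}, \getrn{Linp2}, \getrn{Linl2}, \getrn{Linr2}, \getrn{Lcase2}, \getrn{Lpcomp1}, \getrn{Lpcomp2}, where the linear channel is checked inside a subprocess rather than at the head. The key step is to observe that if the whole process is constant in the bound channel, then so is the subprocess checked by the premise: the surrounding constructor and its remaining arguments are metavariables and cannot reintroduce the bound channel. Hence the premise is itself a linearity derivation of a function constant in its argument, matching the hypothesis of the lemma, and I would apply the induction hypothesis to this structurally smaller derivation to reach a contradiction. For the rules with richer binding structure --- \getrn{Linp2}, \getrn{Lcase2}, \getrn{Lpcomp1}, \getrn{Lpcomp2} --- the premise is universally quantified over a freshly bound channel, so I would first instantiate it at an arbitrary name and then recurse.

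I expect the congruence cases to be the main obstacle: the delicate point is ensuring that the ``constant in the bound channel'' shape is correctly transported from the conclusion into the premise through the contextual weakening substitutions, so that the induction hypothesis applies on the nose. Once the shapes are aligned, termination is immediate, since every congruence case recurses only on an immediate subderivation and the first-category cases make no recursive calls at all.
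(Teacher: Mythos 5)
Your proposal is correct and takes essentially the same route as the paper: the paper proves the positive statement (\(\lin{x}{\fns{P}}\) implies \(x \in \fn(\fns{P})\)) by induction on the linearity derivation and mechanizes exactly the contrapositive you describe, with the weakening substitution on the metavariable capturing \(x \notin \fn(\fns{P})\), the principal/axiom cases discharged by higher-order unification failure, and the congruence cases handled by recursion on the immediate subderivation. No gaps.
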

We encode the contradiction in the lemma using the standard LF technique of defining a type @imposs@ without any constructors.
The encoding of the lemma is therefore a function that takes as input @[Delta ⊢ linear (\x. P[..])]@ and outputs some @imposs@.
The substitution @P[..]@ indicates that the process does not depend on the input name $x$ which properly captures the premise $x \notin \fn(\fns{P})$.
\begin{lstlisting}
imposs : type.
% no constructor for imposs
rec lin_name_must_appear : (Delta : nctx) [Delta ⊢ linear (\x. P[..])] → [ ⊢ imposs] = ...
\end{lstlisting}

Next, we show that structural equivalence preserves both linearity and typing.
To state preservation for linearity, we have to reconcile the fact that linearity is defined parametric to some channel name, so we must extend the context of @equiv@ with an additional name.
\begin{lemma}[Structural Equivalence preserves linearity and typing]
  \label{lem:s_equiv}
  \leavevmode
  \begin{enumerate}
    \item If \/ @Gamma,x:name |- equiv P Q@ and @Gamma |- linear \x.P@,
      then @Gamma |- linear \x.Q@.
    \item If \/ @Gamma |- equiv P Q@ and @Gamma |- wtp P@,
      then @Gamma |- wtp Q@.
  \end{enumerate}
\end{lemma}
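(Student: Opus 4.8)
The plan is to prove both statements by structural induction on the derivation of the structural equivalence $\fns{P} \equiv \fns{Q}$, i.e., by case analysis on the @equiv@ derivation with recursive calls handling its reflexive and transitive closure. Reflexivity is immediate, since the premise and conclusion coincide, and transitivity follows by composing the two induction hypotheses. The only substantive base cases are commutativity $\getrn{Sequivcomm}$ and associativity $\getrn{Sequivassoc}$, both of which act at the top level on a parallel composition; since the relation is symmetric I would discharge both orientations of each.

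For part (2), typing preservation, the commutativity case proceeds by inverting the @wtp_pcomp@ derivation of $\stp{\Gamma}{\spcomp{x}{A}{\fns{P}}{\fns{Q}}}$. This yields a duality witness for $A$, typing derivations for $\fns{P}$ and $\fns{Q}$ under their respective extended contexts, and the two linearity obligations $\lin{x}{\fns{P}}$ and $\lin{x}{\fns{Q}}$. I then rebuild @wtp_pcomp@ with the two subderivations swapped, using @dual_sym@ to reverse the duality witness so that it matches $A^\bot$ in the commuted process. The associativity case is analogous but requires inverting two nested cuts: after inversion I reassemble the composition in the other association, using the fact that metavariables in the encoding cannot depend on locally bound channels to discharge the implicit dependency conditions (that $\fns{P}$ does not mention $y$ and $\fns{R}$ does not mention $x$), and invoking strengthening (@str_wtp@, \Cref{lem:scp-str}) and weakening (\Cref{lem:scp-wk}) to relocate the subderivations between the $x$-extended and $y$-extended contexts.

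For part (1), linearity preservation, I invert the linearity derivation, whose subject is a parallel composition, so it must end in $\getrn{Lpcomp1}$ or $\getrn{Lpcomp2}$. This pins down whether the tracked channel occurs linearly in the left or the right subprocess, together with the side condition that it is absent from the other. For commutativity I simply reapply the opposite $L_\nu$ rule after the swap. For associativity I perform a nested case analysis on which of $\fns{P}$, $\fns{Q}$, $\fns{R}$ contains the tracked channel linearly and rebuild the derivation in the reassociated shape; the $\fn$-absence side conditions carried by the $L_\nu$ rules, obtained for free from metavariable scoping, guarantee that the absence requirements of the reassembled rules are met.

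I expect the associativity case to be the main obstacle in both parts, because reassociating the two cuts relocates the scopes of the bound channels $x$ and $y$. This forces careful bookkeeping of the ambient contexts and of the weakening substitutions attached to each metavariable, and it is precisely here that the strengthening lemmas and the HOAS discipline keeping metavariables free of locally bound names do the real work of discharging the implicit freshness conditions of $\getrn{Sequivassoc}$.
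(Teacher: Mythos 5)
Your proposal is correct and takes essentially the same route as the paper, which defers the proof to the mechanized artifact where both functions proceed by case analysis on the \texttt{equiv} derivation exactly as you outline. The ingredients you identify --- \texttt{dual\_sym} to flip the duality witness in the commutativity case, strengthening and weakening to relocate subderivations across the reassociated cuts, and inversion on the $L_{\nu 1}$/$L_{\nu 2}$ rules with the freshness side conditions obtained from metavariable scoping --- are precisely the lemmas and HOAS facts the paper sets up for this purpose.
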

Although the first lemma can in spirit be stated under a context of names @Delta@, we used the more general context of names and types @Gamma@ to better suit our type preservation proof.
\begin{lstlisting}
rec lin_s_equiv : (Gamma : ctx) [Gamma, x:name ⊢ equiv P Q]
                            → [Gamma ⊢ linear (\x. P)]
                            → [Gamma ⊢ linear (\x. Q)] = ...
rec wtp_s_equiv : (Gamma : ctx) [Gamma ⊢ equiv P Q]
                            → [Gamma ⊢ wtp P]
                            → [Gamma ⊢ wtp Q] = ...
\end{lstlisting}
Note that our proof shows that linearity is preserved for any given (free) channel $x$, meaning that the on-paper predicate $\lin{\Delta}{\fns{P}}$ is also preserved by structural equivalence.

\subsection{Type Preservation}
Finally, we are ready to state the main theorem.
To state preservation of linearity, we extend the contexts of other judgments appropriately in the same manner as for @equiv@.
\begin{theorem}[Type Preservation] \label{thm:lf-sr}
  \leavevmode
  \begin{enumerate}
    \item If \/ @Gamma,x:name |- step P Q@ and
      @Gamma,x:name,h:hyp x A |- wtp P@ and\\
      @Gamma |- linear \x.P@, then
      @Gamma |- linear \x.Q@.
    \item If \/ @Gamma |- step P Q@ and @Gamma |- wtp P@,
      then @Gamma |- wtp Q@.
  \end{enumerate}
\end{theorem}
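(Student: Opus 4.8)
The plan is to prove both parts \emph{together} by structural induction on the derivation of the reduction @step P Q@, since re-establishing the typing of a reduct requires re-establishing the linearity of its subprocesses, and vice versa. In every case the basic moves are the same: we \emph{invert} the given typing derivation @wtp P@ (and, for the first part, the linearity derivation @linear \x.P@) to expose the typing and linearity derivations of the immediate subprocesses of the redex, and we then \emph{reassemble} derivations witnessing the typing and linearity of the reduct. Where two dual session types must be identified we appeal to uniqueness of duality (@dual_uniq@), and where a channel is consumed by a reduction we transport the relevant free-name conditions using the strengthening lemmas.

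For the typing part, the axiom and principal reductions are handled by inversion and reassembly. For the forwarding reduction $\spcomp{x}{A}{\sfwd{x}{y}}{\fns{Q}} \sst \subst{y}{x}{\fns{Q}}$, inverting @wtp@ yields a duality witness together with the typing of @Q@, and inverting the typing of $\sfwd{x}{y}$ under the assumption $x : A$ gives, via @dual_uniq@, that $y : A^\bot$; applying @Q@'s typing to $y$ then types the reduct. For a principal reduction such as the one on $\oplus$ and $\with$, inverting @wtp@ together with the two @linear@ premises of the cut supplies exactly the sub-typings and sub-linearities needed to rebuild the smaller cut along the continuation channel. The commuting conversions are similar, except that the cut migrates under a prefix: here we reorder the parametric hypotheses (which LF grants us for free, up to exchange and renaming) and rebuild the linearity premise of the inner cut using the congruence linearity rules. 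The congruence rules are discharged by the induction hypothesis, and it is precisely here that the typing part invokes the linearity part: to retype $\spcomp{x}{A}{\fns{P'}}{\fns{Q}}$ after $\fns{P} \sst \fns{P'}$, we re-type $\fns{P'}$ by the induction hypothesis but must also produce @linear P'@, which is exactly the linearity part with the cut channel as the tracked channel. Finally, the closure under structural equivalence is discharged by \Cref{lem:s_equiv}.

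For the linearity part, we additionally invert @linear \x.P@ to locate which subprocess of the redex the tracked channel $x$ occurs in, and then follow $x$ through the reduction. When $x$ lies in a subprocess left untouched by the reduction (for instance the passive side of a cut in a congruence step), we rebuild its linearity derivation directly, using @str_step@ to carry the side condition $x \notin \fn(\fns{P})$ across the reduction performed on the other subprocess. When $x$ lies in the reduced subprocess we recurse. The supplied typing premise is threaded through these inversions so that the typing needed to apply the induction hypothesis is always available, and the ``linearity requires usage'' lemma discharges those configurations that inversion leaves open but that cannot actually occur. The equivalence closure is again handled by \Cref{lem:s_equiv}.

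The main obstacle is the linearity part. Unlike typing, whose adequacy is a clean bijection and whose assumptions weaken transparently, the linearity predicate is defined parametrically over a channel and fans out into many congruence cases that must be matched against the shape of each reduction. The delicate reductions are the principal ones, in which a bound continuation channel of the redex becomes the cut channel of the reduct, as when $\sinl{x}{w}{\fns{P}}$ meets $\schoice{x}{w}{\fns{Q_1}}{w}{\fns{Q_2}}$: we must argue that a channel that was linear around the redex remains linear once the binding structure has been rearranged, which forces us to thread the extracted linearity derivations carefully and to strengthen away the assumptions for the now-consumed channels. Because the two parts feed each other in the congruence cases, they cannot be separated and must be carried through the induction as a single mutual argument.
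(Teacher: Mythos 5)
Your proposal matches the paper's proof: both proceed by case analysis on the \lstinline!step! derivation with the two parts handled as mutually recursive arguments (the typing part calling the linearity part to rebuild the \lstinline!linear! premises of a reassembled cut), using the strengthening lemmas to remove consumed channels, \lstinline!dual_uniq! for the forwarding case, the \lstinline!lin_name_must_appear! lemma to rule out impossible configurations, \Cref{lem:s_equiv} for the equivalence closure, and the same case split on which side of a cut the tracked channel inhabits. The only nuance is that the dependency is one-directional in the paper --- the linearity part consumes the typing \emph{hypothesis} (obtained by inversion) rather than invoking typing preservation itself --- but this does not affect the correctness of your argument.
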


\noindent
The encodings for these statements are very similar to the encodings for \Cref{lem:s_equiv}:

\begin{lstlisting}
rec lin_s : (Gamma : ctx) [Gamma, x:name, h:hyp x A[] ⊢ wtp P[..,x]]
                     → [Gamma, x:name ⊢ step P Q]
                     → [Gamma ⊢ linear (\x. P)]
                     → [Gamma ⊢ linear (\x. Q)] = ...
and rec wtp_s : (Gamma : ctx) [Gamma ⊢ wtp P]
                     → [Gamma ⊢ step P Q]
                     → [Gamma ⊢ wtp Q] = ...
\end{lstlisting}
The implementations for both functions proceed by case analysis on the term of type \\@[Gamma, x:name |- step P Q]@.
Preservation of linearity is perhaps the more interesting part of this theorem.
For instance, consider the case $\getrn{Sbetainl1}$:
\[
  \getrc{Sbetainl1}
\]
To show that linearity of some free channel $z$ is preserved under this reduction, we must check for the case where $z$ appears in the left process or in the right process by pattern matching on the linearity assumption.
\begin{lstlisting}
rec lin_s : (Gamma : ctx) [Gamma, x:name, h:hyp x A[] ⊢ wtp P[..,x] ]
                     → [Gamma, x:name ⊢ step P Q]
...
=
/ total 2 /
fn tpP ⇒ fn sPQ ⇒ fn linP ⇒
case sPQ of
...
| [g, z:name ⊢ betainl1] ⇒
  (case linP of
% z appears on the left -- the linearity must be the congruence case for inl
  | [g ⊢ l_pcomp1 (\x. l_inl2 (\w.linP'))] ⇒
    [g ⊢ l_pcomp1 (\w. linP'[..,w,w])]
% z appears on the right -- the linearity must be the congruence case for the 'case' construct
  | [g ⊢ l_pcomp2 (\x. l_choice2 (\w. linP') (\w._))] ⇒
    [g ⊢ l_pcomp2 (\w. linP'[..,w,w])]
  )
\end{lstlisting}
The first @w@ in the substitution @linP'[..,w,w]@ correspond to substituting $w$ for $x$, which may seem like a violation of linearity.
However, for well-typed processes, the linearity predicate for $x$ will ensure that $x$ is no longer used in the inner process, meaning this substitution does not lead to duplication of $w$ and is safe.

The implementation for @wtp_s@ is mostly bureaucratic and involves using many of the prior strengthening lemmas to ensure that the communicated channel $x$ can be safely removed from the context.

One interesting observation is that although preservation of typing does not require
any assumptions about linearity, preservation of linearity does require the
assumption that the original process is well-typed. This is primarily due to the
reduction rule $\getrn{Sbetafwd}$:
\[
  \getrc{Sbetafwd}
\]
Here, if we want to show that the linearity of channel $y$ is preserved, we need to know that $\fns{Q}$ treats $x$ linearly, or $\lin{x}{\fns{Q}}$.
We can only obtain this from the assumption that the original process is
well-typed since $x$ in process $\fns{Q}$ is not a continuation channel of $y$
in $\fns{P}$.

\section{Related Work} \label{sec:related-works}
The linearity predicate that we develop in this paper is based on Crary's mechanization of the linear $\lambda$-calculus in Twelf~\cite{Crary10icfp}.
Adapting his ideas to the session-typed setting was non-trivial due to the many differences between the two systems, such as channel mobility, the distinction between names and processes, and continuation channels.
Our bijection proof between CP and SCP is similar to Crary's adequacy proof of his encoding,
where he showed that typing derivations of linear $\lambda$-calculus expressions were in bijection with typing derivations in the encoding alongside a proof of linearity for each free variable.
Indeed, this side condition is analogous to our criterion that $\lin{\Delta}{\fns{P}}$.

\subsection{HOAS Mechanizations}
R{\"o}ckl, Hirschkoff, and Berghofer~\cite{Rockl01fossacs} encode the untyped $\pi$-calculus in Isabelle/HOL and prove that their encoding is adequate.
Much of their technical development concerns eliminating \textit{exotic terms}.
To do so, they introduce local well-formedness conditions, similar in spirit to how we use the linearity predicates to eliminate non-linear processes.
In LF, such exotic terms do not typically arise, as there is a
bijection between the canonical representation in LF and its on-paper
counterpart.
Moreover, they do not encode any process reductions or mechanize any metatheorems.

\citet{Despeyroux00tcs} gives a HOAS encoding of a typed $\pi$-calculus in Coq and uses it to mechanize a proof of subject reduction.
This encoding is less involved than ours because their type system is very simple and, in particular, does not involve linearity.
Thus, they did not need to account for complex operations on contexts.
Furthermore, they do not discuss the adequacy of the encoding.

\citet{Tiu:TOCL10} give a weak HOAS encoding of the finite
$\pi$-calculus together with its operational semantics using the late
transition system within a logic that contains the $\nabla$ quantifier
for encoding generic judgments and definitions.
They then specify a bisimulation for late transition systems
and show that it is reflexive and transitive.
Tiu and Miller prove that their encoding is adequate.
However, their system does need to deal with linearity
and is also not typed and hence does not face the same challenges as ours.

The closest existing literature to our work is by \citet{ZalakainMS19}, who uses parametric HOAS~\cite{Chlipala:ICFP08} to mechanize a session-typed process calculus in Coq.
They use a global linearity predicate as a well-formedness condition and directly encode the $x \notin \fn(P)$ style side conditions as a predicate.
They further prove that linearity is preserved under all reductions except those using the structural equivalence $P \mid Q \equiv Q \mid P$, which corresponds to \getrn{Sequivcomm} in our setting.
This equivalence is problematic in their setting because of interactions between their linearity predicate, scope expansion, and parallel composition.
They do not discuss the adequacy of their encoding. We instead localize the linearity predicates within type judgments and leverage higher-order encoding to obtain some side conditions ``for free''.
As in their setting, we prove subjection reduction for linearity but also for typing, obtaining the usual type preservation result.
Furthermore, the structural equivalence rule $\getrc{Sequivcomm}$ presents no notable difficulties in our setting.

\subsection{Other Approaches to Mechanizing Session Types and Typed
Process Calculi}
\citet{Gay01tphols} uses Isabelle/HOL to give one of the first mechanizations of a linearly typed process calculus and its reduction relation.
Bindings are handled via de Bruijn indexing and linearity is enforced by modeling a linear context with relevant operations.
Interestingly, he does not directly encode processes in Isabelle/HOL.
Instead, he mechanizes a \(\lambda\)-calculus with constants as a
metalanguage and then encodes channel bindings in the process calculus
through \(\lambda\)-abstractions in the metalanguage in  a HOAS-like manner.

\citet{Thiemann19ppdp} mechanizes a functional language with session-typed communication in Agda.
He too uses de Bruijn indexing to handle binding and directly implements linear contexts.
The system is intrinsically typed, meaning subject reduction is obtained ``for free''.
However, the encoding is operational in nature, and for example, the operational semantics depends on a ``scheduler'' that globally identifies channels and performs communication.
Showing adequacy of the encoding is therefore quite complicated because of the disconnect between the on-paper theory and the actual implementation, which the author mentions.

Zalakain and Dardha model contexts using leftover typing in Agda~\cite{Zalakain21forte}.
This technique avoids context splits by modifying type judgments to add an additional output context, making explicit what resources are not used by a given process in a type judgment.
However, their approach still requires proving certain metatheorems about their leftover typing and still embeds some form of linearity.
It is therefore not well-suited for a HOAS-style encoding in LF, although it is less clear what are the trade-offs between their approach and our approach in non-HOAS settings.
They also make no mention of adequacy.

Castro-Perez, Ferreira, and Yoshida~\cite{Castro-Perez20tacas} use a locally nameless representation to develop a general framework of mechanizing session-typed process calculi in Coq.
They observe that a naïve usage of locally nameless representations cannot handle higher-order communication, \ie, channel transmission.
To encode such communications, they employ a strategy to syntactically distinguish between different forms of channel bindings, working with four sets of channel names.
Our approach encodes all forms of channel bindings via intuitionistic functions over the same set of names in LF and handles higher-order communication.

\subsection{HOAS with Linearity}
Perhaps one natural approach to a HOAS encoding of a linear system like session types is to use a logical framework with direct support for linear implications.
Unfortunately, these systems are far less understood, and implementations of such systems are often preliminary.

Concurrent LF ~\cite{Schack-Nielsen:IJCAR08} is an extension of the
logical framework LF to support the specification of linear and even
concurrent formal systems.
Its implementation, Celf, has been used to encode systems such as the untyped $\pi$-calculus~\cite{Cervesato02tr}.
Although encoding a session-typed system certainly seems plausible in Celf, it remains unclear how to encode metatheoretic proofs such as subject reduction.

LINCX~\cite{Georges:ESOP17} is a proof environment that follows in the
footsteps of Beluga.
Instead of specifying formal systems in LF as in Beluga,
one specifies formal systems in linear LF in LINCX.
Metatheoretic proofs are then implemented as recursive functions over linear
contextual objects. This framework should in principle be capable of
representing session-type systems and their metatheory more
directly, but there is presently no implementation for it.

Linear Hybrid ~\cite{Felty19,Felty:MSCS21} is designed
to support the use of higher-order abstract syntax for representing
and reasoning about formal systems, and it is implemented in the Coq Proof
Assistant. To support representation of linear systems it implements a
linear specification logic in Coq. Felty and collaborators have used
this framework to, for example, encode the type system of a quantum
$\lambda$-calculus with linear typing and its
metatheoretic properties.
It would be interesting to see how to use this framework to specify
session types together with their metatheory.

\section{Conclusion}
We demonstrate a higher-order encoding and mechanization of CP, a session-typed process calculus.
Our main technique is using linearity predicates that act as well-formedness conditions on processes.
In particular, this lets us encode linearity without relying on linear contexts which are difficult to work with in mechanizations and which are not well-suited for HOAS-style encodings.
We decomposed our encoding in two steps: an on-paper formulation of SCP using linearity predicates, and a mechanization of SCP in Beluga.

Our development of SCP, which arose as a byproduct of our mechanization, provides a foundation for mechanizing session-typed process calculi in settings with structural contexts.
We prove that CP is fully embedded in SCP and furthermore, that the restriction imposed by the linearity predicates captures the fragment of SCP that correspond to CP.
More precisely, we prove that there is a structure-preserving bijection between the processes and typing derivations in CP and those in SCP when we subject SCP to the condition that it treats its free names linearly.

We then mechanize SCP in Beluga and prove the adequacy of our encoding, thereby showing that our encoding is adequate with respect to CP. As we demonstrate through our mechanization, SCP particularly synergizes with a HOAS encoding over Beluga, which utilizes contextual type theory, allowing for side-conditions related to free names to be encoded ``for free''.

In general however, using an SCP-like presentation has the benefit of using intuitionistic contexts, which are better understood and easier to work with in proof assistants.
Whether the encoding style implicitly uses an intuitionistic context like for LF is not particularly important; even an encoding style that explicitly models a context can benefit from this approach.
Our development of SCP shows how to shift the work required for linear context management to local side conditions, or linearity predicates, which we believe leads to a more tractable way to both encode and reason with linearity.
Although our approach is certainly heavily inspired by the constraints imposed by LF and HOAS, SCP is still a promising system to mechanize over CP using other proof assistants and encoding styles such as de Bruijn or locally nameless.
In particular, Zalakain's encoding~\cite{ZalakainMS19} of a similar session-typed system using parametric HOAS gives strong evidence that an SCP-style calculus extends well to Coq.

It is however important to acknowledge that this approach comes at the cost of managing linearity predicates and free names in processes.
Although these were easy to work with in our setting (in particular, managing free names was obtained for free from higher-order unification),
it would be interesting to understand more clearly the costs and benefits from the additional side conditions compared to dealing with linear contexts in the context of other proof assistants and encoding styles.

\subsection{Towards more complex language constructs}

We illustrated how linearity predicates could be used to mechanize a fragment of Wadler's CP~\cite{Wadler12icfp}, and it is natural to ask whether this technique scales to the full system.
It is also natural to ask whether this technique scales to more complex extensions of session-typed systems, such as notions of sharing~\cite{Balzer17icfp, Rocha21icfp}, equi-recursion~\cite{Gay05acta}, and integrations with functional languages~\cite{Gay10jfp, Toninho13esop}.
We believe that linearity predicates are a mechanization technique that is sufficiently robust and scalable to handle these richer language constructs.
To guide future applications of our approach, we sketch the key patterns and principles for its application to new program constructs:
\begin{enumerate}
  \item Determine if the construct binds any new linear channels. If so, then its typing judgments must check their linearity. In our development, this is illustrated by the typing rules \getrn{Sparr}, \getrn{Sotimes}, and \getrn{Scut}.
  \item Determine if the construct requires the absence of other linear assumptions. If so, then there should be no congruence rules for the linearity predicate. In our development, this is illustrated by the linearity predicates for $\close{x}$ and $\fwd{x}{y}$.
  \item Determine if the construct uses a continuation channel. If so, then the linearity predicate should check that the continuation channel is used linearly. Otherwise, the linearity predicate should be an axiom. These two cases are respectively illustrated by \getrn{Linl} and \getrn{Lwait}.
  \item Determine if linear channels are shared between subterms composed by the construct. If they are not shared, then the linearity predicate must ensure that no sharing occurs. This is illustrated by \getrn{Lpcomp1} and \getrn{Lpcomp2}.
\end{enumerate}

With regard to extending our mechanization to the entirety of CP, we believe that its polymorphic constructors $\forall$ and $\exists$ will pose no technical challenges.
Indeed, they operationally correspond to receiving and sending types, and types are treated in an unrestricted manner.
Therefore, they do not interact with linearity in an interesting way.

However, the exponentials $!$ and $?$ may be more challenging to mechanize.
Channels of type $?A$ are not treated linearly: they may be dropped or copied.
Intuitively, this means that we should \textit{not} check for linearity of channels of type $?A$.
In Crary's encoding of the linear $\lambda$-calculus, there was only one syntactical construct that bound assumptions of type $?\tau$, making this easy to do.
In contrast, CP channels of type $?A$ can arise from many sources, such as inputs from channels of form $(?A) \parr B$, as channel continuations of any connective such as ${?A} \oplus {?B}$.
This means that we cannot determine solely from the syntax of processes whether a bound channel is of type $?A$.
However, we only ever use the linearity predicate to check the linearity of channels whose type is known.
We believe that by using this type information and by making the linearity predicate type aware, \ie, of the form $\lin{\h{x}{A}}{\fns{P}}$, we can give a sufficiently refined analysis of linearity to support channels of type \(?A\).

\subsection{Future Work}

Our work lays the groundwork for two main directions of future work.
The first is to explore the trade-offs encountered when encoding SCP in various proof assistants and mechanization styles.
Given that SCP was designed with an LF encoding in mind, it is not entirely clear whether the overhead of linearity predicates and free name conditions is offset by the advantages of working with unrestricted contexts in other settings.
Nevertheless, we believe that SCP provides a scalable basis for mechanizations with proofs of adequacy in mind.

The second direction is to extend SCP and its encoding to better understand the scalability of our technique.
Although we sketched the general roadmap for such extensions, it is interesting to verify that our technique is indeed scalable and to also understand its limitations.
Mechanizing metatheory beyond subject reduction will further elucidate our technique's scalability.
For example, we believe that our linearity predicate will be essential to mechanizing a progress theorem for SCP processes.
Progress for SCP processes corresponds to top-level cut elimination.
Well-typed linear SCP processes support top-level cut elimination by their correspondence with CP processes (\Cref{thm:cp-scp}) and the fact that CP processes enjoy this same property.
This indirect proof sketch is similar to our indirect proof of subject reduction (\Cref{thm:scp-sr}).
A direct proof of progress is a natural next metatheorem to mechanize and, based on our preliminary investigations, seems to be relatively straightforward.

\section*{Data-Availability Statement}
The software containing the encoding of SCP (\Cref{sec:encoding}) and mechanization of the subject reduction proof (\Cref{sec:mechanization}) is available on Zenodo~\cite{Sano23oopsla-art}.

\begin{acks}
This work was funded by the Natural Sciences and Engineering Research
Council of Canada (grant number 206263), Fonds de recherche
du Qu{\'e}bec - Nature et Technologies (grant number 253521), a
Tomlinson Doctoral Fellowship awarded to the first author, and Postdoctoral
Fellowship from  Natural Sciences and Engineering Research
Council of Canada awarded to the second author.

 We also thank the anonymous reviewers for their valuable comments and feedback.
\end{acks}

\bibliography{cites}

\newpage
\appendix

\section{Proofs for Equivalence of CP and SCP}

We give the details for the proof of equivalence between SCP and SCP typing sequents.
We start by giving all cases for the encoding and decoding functions.
The function \(\escp{{-}}\) takes CP processes to SCP processes, while \(\dscp{{-}}\) maps SCP processes to CP processes.
They are recursively defined on the structure of processes:
\begin{align*}
  \escp{\fwd{x}{y}} &= \sfwd{x}{y} \\
  \escp{\pcomp{x}{A}{P}{Q}} &= \spcomp{x}{A}{\escp{P}}{\escp{Q}} \\
  \escp{\out{x}{y}{P}{Q}} &= \sout{x}{y}{\escp{P}}{x}{\escp{Q}} \\
  \escp{\inp{x}{y}{P}} &= \sinp{x}{y}{x}{\escp{P}} \\
  \escp{\inl{x}{P}} &= \sinl{x}{x}{\escp{P}} \\
  \escp{\inr{x}{P}} &= \sinr{x}{x}{\escp{P}} \\
  \escp{\choice{x}{P}{Q}} &= \schoice{x}{x}{\escp{P}}{x}{\escp{Q}} \\
  \escp{\close{x}} &= \sclose{x} \\
  \escp{\wait{x}{P}} &= \swait{x}{\escp{P}} \\
  \dscp{\sfwd{x}{y}} &= \fwd{x}{y} \\
  \dscp{\spcomp{x}{A}{\fns{P}}{\fns{Q}}} &= \pcomp{x}{A}{\dscp{\fns{P}}}{\dscp{\fns{Q}}} \\
  \dscp{\sout{x}{y}{\fns{P}}{w}{\fns{Q}}} &= \out{x}{y}{\dscp{\fns{P}}}{\subst{x}{w}{\dscp{\fns{Q}}}} \\
  \dscp{\sinp{x}{y}{w}{\fns{P}}} &= \inp{x}{y}{\subst{x}{w}{\dscp{\fns{P}}}} \\
  \dscp{\sinl{x}{w}{\fns{P}}} &= \inl{x}{\subst{x}{w}{\dscp{\fns{P}}}} \\
  \dscp{\sinr{x}{w}{\fns{P}}} &= \inr{x}{\subst{x}{w}{\dscp{\fns{P}}}} \\
  \dscp{\schoice{x}{w}{\fns{P}}{w}{\fns{Q}}} &= \choice{x}{\subst{x}{w}{\dscp{\fns{P}}}}{\subst{x}{w}{\dscp{\fns{Q}}}} \\
  \dscp{\sclose{x}} &= \close{x} \\
  \dscp{\swait{x}{\fns{P}}} &= \wait{x}{\dscp{\fns{P}}}
\end{align*}

We next state some structural lemmas:

\begin{lemma}[Weakening]
  If\/ $\stp{\Gamma}{\fns{P}}$, then $\stp{\Gamma, x:A}{\fns{P}}$.
\end{lemma}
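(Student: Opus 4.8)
The plan is to proceed by structural induction on the derivation $\mc{D}$ of $\stp{\Gamma}{\fns{P}}$, weakening by a fresh hypothesis $z : C$. The observation that makes the argument routine is twofold: every SCP typing rule is stated over an arbitrary structural context that is only ever \emph{extended} (never split) as one moves to the premisses, and the linearity side-conditions $\lin{x}{\fns{P}}$ appearing in rules such as \getrn{Scut}, \getrn{Sotimes}, and \getrn{Sparr} depend only on the process, not on the typing context. Consequently, adding $z : C$ to the conclusion can never interfere with a linearity premiss, and never forces any context to be divided.

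First I would dispatch the axiomatic cases. For \getrn{Sid}, the conclusion has the form $\stp{\Gamma_0, x:A, y:A^\bot}{\sfwd{x}{y}}$, and $\stp{\Gamma_0, z:C, x:A, y:A^\bot}{\sfwd{x}{y}}$ is again an instance of \getrn{Sid} with ambient context $\Gamma_0, z:C$; the case of \getrn{S1} is identical. These base cases need no induction hypothesis.

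For every rule-driven case the recipe is uniform: apply the induction hypothesis to each typing premiss to weaken it by $z:C$, carry every linearity premiss over verbatim, and re-apply the same rule. For example, in the \getrn{Scut} case we have premisses $\stp{\Gamma, x:A}{\fns{P}}$, $\lin{x}{\fns{P}}$, $\stp{\Gamma, x:A^\bot}{\fns{Q}}$, and $\lin{x}{\fns{Q}}$; the induction hypothesis gives $\stp{\Gamma, z:C, x:A}{\fns{P}}$ and $\stp{\Gamma, z:C, x:A^\bot}{\fns{Q}}$, and re-applying \getrn{Scut} yields $\stp{\Gamma, z:C}{\spcomp{x}{A}{\fns{P}}{\fns{Q}}}$. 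The cases \getrn{Sinl}, \getrn{Sinr}, \getrn{Swith}, \getrn{Sotimes}, \getrn{Sparr}, and \getrn{Sbot} are treated the same way, weakening each typing premiss and leaving the linearity premisses unchanged.

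The only bookkeeping subtlety --- and the closest thing to an obstacle --- is that several rules bind fresh continuation or transmitted channels (the names $x$, $y$, $w$ of the premisses), which must be kept apart from the weakening variable $z$. This is handled by the standard variable convention: since $z$ is chosen fresh and bound names may be $\alpha$-renamed, we may assume $z$ is distinct from all bound names, so that $\Gamma, z:C$ stays well-formed and the premisses supplied by the induction hypothesis have exactly the shape needed to re-apply the rule, up to the exchange property of the structural context. Because no context is ever split, SCP weakening is essentially immediate, in sharp contrast to the care such a lemma would demand in CP.
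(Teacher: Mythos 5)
Your proposal is correct and matches the paper's proof, which is simply stated as ``by induction on the derivation of $\stp{\Gamma}{\fns{P}}$''; your write-up fills in the routine details (no context splitting, linearity premisses untouched, $\alpha$-renaming to keep the weakening variable apart from bound names) exactly as intended.
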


\begin{proof}
  By induction on the derivation of $\stp{\Gamma}{\fns{P}}$.
\end{proof}

\begin{lemma}[Strengthening]
  If\/ $\stp{\Gamma, x:A}{\fns{P}}$ and $x \notin \fn(\fns{P})$, then $\stp{\Gamma}{\fns{P}}$.
\end{lemma}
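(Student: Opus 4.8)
The plan is to proceed by induction on the derivation of $\stp{\Gamma, x:A}{\fns{P}}$, dually to the proof of \Cref{lem:scp-wk}. The guiding observation is that, because $x \notin \fn(\fns{P})$, the channel $x$ must be distinct from every principal channel and every bound channel introduced by the last rule of the derivation. Hence $x$ is always a genuinely unused hypothesis living in the ``$\Gamma$ part'' of the context, and the structural character of SCP contexts (weakening, contraction, and in particular exchange) lets me freely relocate it and ultimately discard it.

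For the axioms \getrn{Sid} and \getrn{S1}, the side condition $x \notin \fn(\fns{P})$ forces $x$ to differ from the two forwarded channels (respectively from the closed channel); since both rules are stated for an arbitrary ambient context, I simply reapply the same rule to the context with $x:A$ removed. For each inductive rule I first use $x \notin \fn(\fns{P})$ to conclude that $x$ is distinct from the bound and principal channels, and therefore that $x \notin \fn$ of each immediate subprocess; I then apply the induction hypothesis to every typing premise — after using exchange to return $x:A$ to the outermost position of a premise context that may have grown with fresh continuation or transmitted channels — and finally reassemble the derivation with the original rule. The case \getrn{Scut}, where $\fns{P} = \spcomp{z}{C}{\fns{P_1}}{\fns{P_2}}$, is representative: since $z$ is bound we obtain $x \notin \fn(\fns{P_1})$ and $x \notin \fn(\fns{P_2})$; the typing premises $\stp{\Gamma, x:A, z:C}{\fns{P_1}}$ and $\stp{\Gamma, x:A, z:C^\bot}{\fns{P_2}}$ each yield strengthened derivations by the induction hypothesis; and the linearity premises $\lin{z}{\fns{P_1}}$ and $\lin{z}{\fns{P_2}}$ are reused verbatim.

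I do not expect a genuine obstacle, as this is a routine structural induction; the two points that warrant care are worth flagging precisely because they reflect the design of SCP. First, the only rule that departs from the ``contexts only grow'' pattern is \getrn{Sbot}, whose premise $\stp{\Gamma'}{\fns{P'}}$ discards the principal channel. Here I match $\Gamma, x:A$ against $\Gamma', x':\bot$, observe that $x = x'$ is impossible because $x' \in \fn(\swait{x'}{\fns{P'}})$, locate $x:A$ within $\Gamma'$, strengthen the premise by the induction hypothesis, and reapply \getrn{Sbot}. Second, the linearity-predicate premises occurring in rules such as \getrn{Scut} and \getrn{Sotimes} are judgments about processes alone; they make no reference to the typing context and are therefore entirely insensitive to strengthening, so they survive the induction unchanged. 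It is exactly this independence of linearity from the ambient context that makes both the strengthening and weakening lemmas go through so smoothly.
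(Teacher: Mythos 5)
Your proposal is correct and takes exactly the same route as the paper, which proves \Cref{lem:scp-str} simply ``by induction on the derivation of $\stp{\Gamma, x:A}{\fns{P}}$'' without spelling out the cases. Your additional observations --- that $x \notin \fn(\fns{P})$ together with freshness of bound names propagates the side condition to all subprocesses, that the linearity premises are context-independent and pass through unchanged, and that \getrn{Sbot} is the one rule whose premise context shrinks --- are precisely the details the paper leaves implicit, and they are all accurate.
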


\begin{proof}
  By induction on the derivation of $\stp{\Gamma, x : A}{\fns{P}}$.
\end{proof}

\begin{lemma}[Free Names are Typed]
  \label{lemma:main_writeup:2}
  \leavevmode
  \begin{enumerate}
  \item If \(\tp{\Delta}{P}\), then \(x \in \fn(P)\) if and only if \(x \in \dom(\Delta)\).
  \item If\/ \(\stp{\Gamma}{\fns{P}}\) and \(x \in \fn(\fns{P})\), then \(x \in \dom(\Gamma)\).
  \end{enumerate}
\end{lemma}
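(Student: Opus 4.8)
The plan is to prove both parts by structural induction on the given typing derivation, with one case per process constructor. Part~1 is really the set equality $\fn(P) = \dom(\Delta)$ written as a biconditional, so I would establish that equality directly; Part~2 asks only for the inclusion $\fn(\fns{P}) \subseteq \dom(\Gamma)$, and indeed the converse genuinely fails in SCP because of weakening: by \getrn{S1} the judgment $\stp{\Gamma, x:1}{\close{x}}$ is derivable for any $\Gamma$, so $\dom(\Gamma)$ may contain names that never occur in the process. The forward inclusion of Part~1 is the exact CP analogue of Part~2, so the real content of Part~1 is the reverse inclusion, which is where CP's linearity is used.

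For Part~1 I would proceed by cases on the last rule. The axioms \getrn{Cid} and \getrn{C1} are immediate, since the free names of $\fwd{x}{y}$ and $\close{x}$ are exactly the channels declared in the context. For the unary rules \getrn{Cparr}, \getrn{Cinl}, \getrn{Cinr}, and \getrn{Cbot}, I would apply the induction hypothesis to the single premise and recompute the free-name set, deleting any constructor-bound name (e.g.\ the received channel $y$ in $\inp{x}{y}{P}$) and adding back the principal channel. The additive rule \getrn{Cwith} is the interesting non-splitting case: both premises carry the same context extended by $x:A$ and $x:B$ respectively, so the induction hypothesis yields $\fn(P) = \fn(Q) = \dom(\Delta) \cup \{x\}$; these coincide, which is precisely what is needed to conclude the claim for $\choice{x}{P}{Q}$.

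The multiplicative, context-splitting rules \getrn{Ccut} and \getrn{Cotimes} carry the bulk of the bookkeeping, and I expect this to be the main obstacle. Here the premise contexts $\Delta_1$ and $\Delta_2$ are disjoint, and I would combine the two induction hypotheses with this disjointness to show that the union of the free names of the subprocesses---after removing the bound channel ($x$ in \getrn{Ccut}, the sent channel $y$ in \getrn{Cotimes})---is exactly $\dom(\Delta_1) \cup \dom(\Delta_2)$, together with the principal channel in the $\otimes$ case. The forward inclusion is routine, but the reverse inclusion is where linearity is essential: because CP admits no weakening, every declared channel is forced to appear, and disjointness (in particular, freshness of the bound channel relative to each $\dom(\Delta_i)$) guarantees that a name declared in one half is not deleted when we strip the bound name, nor silently absorbed by the other half. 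This is exactly the property that collapses in the structural setting and explains why Part~2 states only one direction.

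Part~2 then reuses the same case analysis but is strictly easier, since I need only the forward inclusion $\fn(\fns{P}) \subseteq \dom(\Gamma)$ and never appeal to disjointness. Because SCP never splits contexts, each premise carries all of $\Gamma$ extended by the freshly bound channels; applying the induction hypothesis and then removing the bound names---the continuation channel $w$, the received channel $y$, and the cut channel $x$, as appropriate---immediately yields the inclusion, with \getrn{Sbot} and the axioms serving as base cases.
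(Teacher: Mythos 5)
Your proposal is correct and follows exactly the route the paper takes: the paper's proof is simply ``by induction on the derivations of \(\tp{\Delta}{P}\) and of \(\stp{\Gamma}{\fns{P}}\),'' and your case analysis fills in precisely those details, including the right observations about where context disjointness and freshness of bound names are needed in \getrn{Ccut} and \getrn{Cotimes}, and why Part~2 is only an inclusion because of weakening.
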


\begin{proof}
  By induction on the derivations of \(\tp{\Delta}{P}\) and of \(\stp{\Gamma}{\fns{P}}\).
\end{proof}

\begin{lemma}[Genericity of Linearity]
  \label{lemma:main_writeup:3}
  If\/ \(\lin{x}{P}\) and \(w \notin \fn(P)\), then \(\lin{x}{\subst{w}{y}{P}}\) for~any~\(y\).
\end{lemma}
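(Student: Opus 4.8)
The plan is to argue by induction on the derivation of \(\lin{x}{P}\), showing in each case that the renaming \(\subst{w}{y}{P}\) transports the derivation to one of the same shape. Before starting the induction I would record two facts that keep the substitution well-behaved. First, since \(\lin{x}{P}\) holds, a routine induction (the same one invoked in the proof of \Cref{lem:scp-lin-fn}) gives \(x \in \fn(P)\); together with the hypothesis \(w \notin \fn(P)\) this yields \(x \neq w\), so the renaming can never collide with or resurrect the tracked channel. Second, because \(w \notin \fn(P)\) and bound names may be \(\alpha\)-renamed away from \(w\), the substitution is capture-avoiding and commutes with every process former, with \(\fn(\subst{w}{y}{P}) \subseteq (\fn(P) \setminus \{y\}) \cup \{w\}\). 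I would also dispatch the degenerate situations up front: if \(y \notin \fn(P)\) the substitution is the identity and the goal is the hypothesis; and if \(y = x\) one is in fact renaming the principal channel, which would conclude \(\lin{w}{\subst{w}{x}{P}}\) rather than \(\lin{x}{\subst{w}{x}{P}}\) (this is the principal-channel renaming remarked on earlier). Since the statement fixes the tracked channel \(x\), I may therefore assume \(y \in \fn(P)\) and \(y \neq x\).

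\textbf{Inductive cases.} I would proceed rule by rule on the root of \(\lin{x}{P}\). The axioms \(\getrn{Lfwd1}\), \(\getrn{Lfwd2}\), and \(\getrn{Lclose}\) are immediate: because \(y \neq x\), the tracked occurrence of \(x\) survives the renaming, so the same axiom re-applies to \(\subst{w}{y}{P}\). In every other case I would push the substitution through the outermost former, apply the induction hypothesis to each subderivation — noting that a premise may track a continuation channel rather than \(x\), which is unproblematic since the hypothesis is quantified over whichever channel the subderivation tracks — and then re-apply the same rule at the root. The recurring obligation is to re-establish the side conditions: the principal rules \(\getrn{Lwait}\), \(\getrn{Linp}\), \(\getrn{Lout}\), \(\getrn{Linl}\), and \(\getrn{Lcase}\) carry constraints of the form \(x \notin \fn(P')\), while the congruence rules \(\getrn{Lpcomp1}\), \(\getrn{Lpcomp2}\), \(\getrn{Lout2}\), and \(\getrn{Lout3}\) carry disjoint-usage constraints of the form \(z \notin \fn(Q)\). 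Each such constraint is preserved because the renaming only ever introduces \(w\) (deleting \(y\)), and the forbidden channel is distinct from \(w\); concretely, from \(\fn(\subst{w}{y}{Q}) \subseteq (\fn(Q) \setminus \{y\}) \cup \{w\}\) and \(x \neq w\) the offending channel cannot reappear.

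\textbf{Main obstacle.} The difficulty I anticipate is bureaucratic rather than conceptual: keeping the freshness discipline coherent across binders. In \(\getrn{Linp}\) and the congruence cases \(\getrn{Linp2}\) and \(\getrn{Lpcomp1}\), \(\getrn{Lpcomp2}\), the induction hypothesis must be applied underneath one or two binders (the received channel and the continuation channel), so I must ensure that both the renamed name \(y\) and its fresh target \(w\) stay distinct from those bound names, invoking \(\alpha\)-renaming where necessary. This is precisely the kind of condition that the later HOAS encoding discharges automatically, but on paper it calls for the usual Barendregt-style care. Once this discipline is fixed, each case closes by a direct re-application of the rule used at the root, so no strengthening of the induction hypothesis is required.
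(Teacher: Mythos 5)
Your proof follows the same route as the paper, which disposes of this lemma with the single line ``by induction on the derivation of \(\lin{x}{P}\)''; your rule-by-rule analysis and the bookkeeping for the \(x \notin \fn(\cdot)\) and disjointness side conditions supply exactly the details that induction requires, so the proposal is correct and essentially identical in approach. Your observation about the case \(y = x\) — that since \(\lin{x}{P}\) forces \(x \in \fn(P)\), the literal conclusion \(\lin{x}{\subst{w}{x}{P}}\) cannot hold and the intended conclusion there is \(\lin{w}{\subst{w}{x}{P}}\) — is a genuine subtlety the paper's one-line proof glosses over even though its adequacy argument later invokes the lemma in precisely that renamed-tracked-channel form, so your explicit handling of it is the right reading.
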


\begin{proof}
  By induction on the derivation of \(\lin{x}{P}\).
\end{proof}

\begin{lemma}
  \label{lemma:adequacy-proof:1}
  If \(\tp{\Delta, z : A}{P}\) and \(w \notin \fn(P)\), then \(\subst{w}{z}{\escp{P}} = \escp{\subst{w}{z}{P}}\).
\end{lemma}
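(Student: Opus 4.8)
The plan is to argue by induction on the derivation of $\tp{\Delta, z : A}{P}$, working with processes up to $\alpha$-equivalence. Since $\varepsilon$ is defined by recursion on the structure of $P$, most cases are purely homomorphic: the renaming $z \mapsto w$ commutes with the top-level constructor, and one simply appeals to the induction hypothesis on each immediate subprocess. This covers \getrn{Cid} and \getrn{C1} directly, \getrn{Cbot} using that the waited-on channel does not persist into the continuation, and \getrn{Ccut} after noting that its bound channel $x$ satisfies $x \notin \fn(P)$, hence $x \neq z$ by \Cref{lemma:main_writeup:2}; after $\alpha$-renaming $x$ away from $z$ and $w$ the renaming distributes over both components. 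In every appeal to the induction hypothesis the side condition $w \notin \fn(\cdot)$ for the relevant subprocess follows from $w \notin \fn(P)$, since the free names of a subprocess are contained in those of $P$ together with the channels bound at that node.

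The delicate cases are precisely the connectives whose encoding reuses the principal channel name as the binder for a continuation channel, namely \getrn{Cinl}, \getrn{Cinr}, \getrn{Cwith}, \getrn{Cotimes}, and \getrn{Cparr}; recall, for instance, that $\escp{\inl{x}{P'}} = \sinl{x}{x}{\escp{P'}}$, where the second $x$ binds the continuation of $x$ inside the body. For each such rule I would split on whether the renamed channel $z$ is the principal channel $x$. When $z \neq x$ the situation is just like the homomorphic cases: because $x \in \fn(P)$ while $w \notin \fn(P)$ we have $w \neq x$, so the renaming $z \mapsto w$ slides under the continuation binder $x$ without capture, and the induction hypothesis on the body closes the subcase.

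The main obstacle is the subcase $z = x$. On the CP side, renaming $x$ to $w$ affects both the principal occurrence of $x$ and every persisting free occurrence of $x$ inside the subprocess (for example, the continuation of $\otimes$ or the body of $\parr$). On the SCP side, however, the reused $x$ is now the continuation binder, so the renaming touches only the free principal-argument occurrence and is blocked by that binder from descending into the body. The reconciliation is to exploit freshness: since $w \notin \fn(P)$, I may $\alpha$-rename the continuation binder from $x$ to $w$, which converts the blocked substitution into exactly the renaming $x \mapsto w$ of the body, and the induction hypothesis then identifies this with $\varepsilon$ applied to the correspondingly renamed subprocess. For \getrn{Cotimes} I would additionally use that the principal channel does not occur free in the left subprocess (which provides only $y : A$), so that component is left untouched on both sides; \getrn{Cwith} is identical except that two continuation binders must be $\alpha$-renamed and the induction hypothesis applied to both branches. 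The crux throughout is to track carefully which occurrences of $x$ are free and which are bound, and to use the freshness of $w$ to relocate the continuation binder so that the two sides coincide.
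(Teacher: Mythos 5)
Your proposal is correct and follows the same route as the paper, which proves this lemma simply ``by induction on the derivation of \(\tp{\Delta, z : A}{P}\)'' without giving any case details. Your elaboration correctly identifies the one genuinely delicate point---that \(\varepsilon\) reuses the principal channel name as the continuation binder, so in the \(z = x\) subcase the two sides only agree after \(\alpha\)-renaming that binder to the fresh \(w\)---and your handling of it (including the use of \(w \notin \fn(P)\) to license the renaming and of \Cref{lemma:main_writeup:2} to rule out clashes with cut-bound names) is sound.
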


\begin{proof}
  By induction on the derivation of \(\tp{\Delta, z : A}{P}\).
\end{proof}

\begin{lemma}
  \label{lemma:main_writeup:6}
  If\/ \(\lin{z}{\fns{P}}\) and \(w \notin \fn(\fns{P})\), then \(\subst{w}{z}{\dscp{\fns{P}}} = \dscp{\subst{w}{z}{\fns{P}}}\).
\end{lemma}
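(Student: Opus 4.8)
The plan is to proceed by induction on the derivation of \(\lin{z}{\fns{P}}\), mirroring the proof of \Cref{lemma:main_writeup:3}. The subtlety in \(\dscp{-}\) is that it substitutes each principal channel for its bound continuation channel (\eg\ \(\dscp{\sinl{x}{v}{\fns{P'}}} = \inl{x}{\subst{x}{v}{\dscp{\fns{P'}}}}\)), so the outer substitution \(\subst{w}{z}{-}\) must be reconciled with these internal continuation substitutions. Before the main induction I would record one routine auxiliary fact, proved by a straightforward structural induction on \(\fns{P}\): decoding does not introduce free names, \ie\ \(\fn(\dscp{\fns{P}}) = \fn(\fns{P})\) (only the inclusion \(\subseteq\) is needed below). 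The cases of the linearity derivation then split naturally into the principal rules, where \(z\) is the principal channel being consumed (\getrn{Lfwd1}, \getrn{Lfwd2}, \getrn{Lclose}, \getrn{Lwait}, \getrn{Lout}, \getrn{Linp}, \getrn{Linl}, \getrn{Linr}, \getrn{Lcase}), and the congruence rules, where \(z\) is some other channel (\getrn{Lwait2}, \getrn{Lout2}, \getrn{Lout3}, \getrn{Linp2}, \getrn{Linl2}, \getrn{Linr2}, \getrn{Lcase2}, \getrn{Lpcomp1}, \getrn{Lpcomp2}).

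The principal recursive cases are where the linearity side conditions do the real work: they are exactly what makes the two nested substitutions collapse. Consider \getrn{Linl}, so \(z = x\), \(\fns{P} = \sinl{x}{v}{\fns{P'}}\), and the rule supplies \(x \notin \fn(\fns{P'})\). Expanding the left-hand side gives \(\subst{w}{x}{\inl{x}{\subst{x}{v}{\dscp{\fns{P'}}}}} = \inl{w}{\subst{w}{x}{\subst{x}{v}{\dscp{\fns{P'}}}}}\) (after choosing \(w\) distinct from the bound \(v\)), while the right-hand side is \(\dscp{\sinl{w}{v}{\fns{P'}}} = \inl{w}{\subst{w}{v}{\dscp{\fns{P'}}}}\), using \(x \notin \fn(\fns{P'})\) to discharge \(\subst{w}{x}{-}\) before decoding. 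These agree because \(x \notin \fn(\fns{P'})\) gives \(x \notin \fn(\dscp{\fns{P'}})\) by the auxiliary fact, so every occurrence of \(x\) in \(\subst{x}{v}{\dscp{\fns{P'}}}\) originates from the inner substitution and hence \(\subst{w}{x}{\subst{x}{v}{\dscp{\fns{P'}}}} = \subst{w}{v}{\dscp{\fns{P'}}}\). The remaining principal cases (\getrn{Linp}, \getrn{Lout}, \getrn{Lcase}, \getrn{Linr}) are identical in spirit, using the corresponding side conditions \(x \notin \fn(\cdot)\); the axioms \getrn{Lfwd1}, \getrn{Lfwd2}, and \getrn{Lclose} are immediate, and \getrn{Lwait} is the unary version of the same argument.

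In the congruence cases \(z\) differs from the principal channel and from all bound names, so \(\subst{w}{z}{-}\) commutes past the constructor and past the continuation substitution \(\subst{x}{v}{-}\), after which the induction hypothesis on the relevant subprocess closes the case; for the branching rules \getrn{Lout2}, \getrn{Lout3}, \getrn{Lpcomp1}, and \getrn{Lpcomp2} the premise's freshness condition (\eg\ \(z \notin \fn(\fns{Q})\)) lets \(\subst{w}{z}{-}\) act trivially on the untouched subprocess, matching that it fixes that subprocess on the right. The main obstacle is bookkeeping rather than conceptual: in every case I must justify the commutation of \(\subst{w}{z}{-}\) with the decoding's internal continuation substitutions, which requires keeping \(w\) and \(z\) distinct from the bound continuation and payload channels (handled by \(\alpha\)-renaming together with the hypothesis \(w \notin \fn(\fns{P})\)), and—in the principal cases—invoking \(\fn(\dscp{\fns{P'}}) \subseteq \fn(\fns{P'})\) so that the linearity side condition applies to the decoded process rather than only to its SCP source.
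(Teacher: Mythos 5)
Your proposal is correct and follows essentially the same route as the paper: induction on the derivation of \(\lin{z}{\fns{P}}\), with bound names chosen apart from free ones, the congruence cases handled by commuting the substitution past the constructor and applying the induction hypothesis (using the freshness premises for the branching rules), and the principal cases resolved by collapsing the nested substitutions. The only difference is one of detail: the paper dismisses the principal cases as immediate, whereas you correctly identify and make explicit the auxiliary fact \(\fn(\dscp{\fns{P}}) \subseteq \fn(\fns{P})\) needed to transfer the side condition \(x \notin \fn(\fns{P'})\) to the decoded process.
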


\begin{proof}
  By induction on the derivation \(\lin{z}{\fns{P}}\).
  Assume without loss of generality that all bound names are chosen distinct from free names.
  The principal cases, \ie, \getrn{Lfwd1}, \getrn{Lfwd2}, \getrn{Lclose}, \getrn{Lwait}, \getrn{Lout}, \getrn{Linp}, \getrn{Linl}, \getrn{Linr}, and \getrn{Lcase}, are all immediate.
  \begin{proofcases}
  \item[\getrn{Lwait2}] Follows easily by the induction hypothesis:
    \begin{align*}
      &\subst{w}{z}{(\dscp{\swait{x}{\fns{P}}})}\\
      &= \subst{w}{z}{(\wait{x}{\dscp{\fns{P}}})}\\
      &= \wait{x}{\subst{w}{z}{\dscp{\fns{P}}}}\\
      &= \wait{x}{\dscp{\subst{w}{z}{\fns{P}}}}\\
      &= \dscp{\swait{x}{\subst{w}{z}{\fns{P}}}}\\
      &= \dscp{\subst{w}{z}{(\swait{x}{\fns{P}})}}.
    \end{align*}
  \item[\getrn{Lout2}] Follows easily by the induction hypothesis:
    \begin{align*}
      &\subst{w}{z}(\dscp{\sout{x}{y}{\fns{P}}{u}{\fns{Q}}})\\
      &= \subst{w}{z}{(\out{x}{y}{\dscp{\fns{P}}}{\subst{x}{u}{\dscp{\fns{Q}}}})}\\
      &= \out{x}{y}{\subst{w}{z}{\dscp{\fns{P}}}}{\subst{w, x}{z, u}{\dscp{\fns{Q}}}}\\
      \shortintertext{but \(z \notin \fn{\fns{Q}}\)}
      &= \out{x}{y}{\subst{w}{z}{\dscp{\fns{P}}}}{\subst{x}{u}{\dscp{\fns{Q}}}}\\
      \shortintertext{by the induction hypothesis}
      &= \out{x}{y}{\dscp{\subst{w}{z}{\fns{P}}}}{\subst{x}{u}{\dscp{\fns{Q}}}}\\
      &= \dscp{\sout{x}{y}{\subst{w}{z}{\fns{P}}}{u}{\fns{Q}}}\\
      \shortintertext{again because \(z \notin \fn{\fns{Q}}\)}
      &= \dscp{\sout{x}{y}{\subst{w}{z}{\fns{P}}}{u}{\subst{w}{z}{\fns{Q}}}}\\
      &= \dscp{\subst{w}{z}{(\sout{x}{y}{\fns{P}}{u}{\fns{Q}})}}
    \end{align*}
  \item[\getrn{Linp2}] Follows easily by the induction hypothesis:
    \begin{align*}
      &\subst{w}{z}{(\dscp{\sinp{x}{y}{u}{\fns{P}}})}\\
      &= \subst{w}{z}{(\inp{x}{y}{\subst{x}{u}{\dscp{\fns{P}}}})}\\
      &= \inp{x}{y}{\subst{w, x}{z, u}{\dscp{\fns{P}}}}\\
      \shortintertext{by the induction hypothesis}
      &= \inp{x}{y}{\subst{x}{u}{\dscp{\subst{w}{z}{\fns{P}}}}}\\
      &= \dscp{\sinp{x}{y}{u}{\subst{w}{z}{\fns{P}}}}\\
      &= \dscp{\subst{w}{z}{(\sinp{x}{y}{u}{\fns{P}})}}
    \end{align*}
  \end{proofcases}
  The remaining cases are analogous.
\end{proof}

\begin{lemma}[Linear Names are Free]
  \label{lemma:main_writeup:4}
  If\/ \(\lin{x}{\fns{P}}\), then \(x \in \fn(\fns{P})\).
  Consequently, if\/ \(\lin{\Gamma}{\fns{P}}\) and \(\stp{\Gamma}{\fns{P}}\), then \(\dom(\Gamma) = \fn(\fns{P})\).
\end{lemma}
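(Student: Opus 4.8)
The plan is to prove the two statements in sequence: the first by a direct structural induction on the derivation of \(\lin{x}{\fns{P}}\), and the second as a short corollary that combines the first statement with the already-established \Cref{lemma:main_writeup:2} (Free Names are Typed). No auxiliary machinery beyond these is needed, so the bulk of the work is simply checking that each linearity rule preserves the invariant.

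For the first statement, I would perform induction on the derivation of \(\lin{x}{\fns{P}}\), splitting the rules into the two families introduced in \autoref{ssec:scp-lin}. The axioms \getrn{Lfwd1}, \getrn{Lfwd2}, \getrn{Lclose}, and \getrn{Lwait} are all immediate: in each the channel being checked occurs as a principal subject of the process, so it lies in \(\fn(\fns{P})\) by definition of the free-name function. The same holds for the principal-continuation rules \getrn{Lout}, \getrn{Linp}, \getrn{Linl}, \getrn{Linr}, and \getrn{Lcase}, where the subject \(x\) again appears syntactically in the head of the process. The remaining (congruence) rules \getrn{Lwait2}, \getrn{Lout2}, \getrn{Lout3}, \getrn{Linp2}, \getrn{Linl2}, \getrn{Linr2}, \getrn{Lcase2}, \getrn{Lpcomp1}, and \getrn{Lpcomp2} are handled by the induction hypothesis: each derives \(\lin{z}{\fns{P}}\) from a premise \(\lin{z}{\fns{P}_i}\) for some immediate subprocess \(\fns{P}_i\), so the induction hypothesis gives \(z \in \fn(\fns{P}_i)\), and since \(z\) is distinct from every bound name (by the standing convention of \autoref{ssec:scp-lin}) this occurrence survives into the free names of the whole process, yielding \(z \in \fn(\fns{P})\).

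For the second statement, I would prove the two inclusions separately. For \(\dom(\Gamma) \subseteq \fn(\fns{P})\), take any \(x \in \dom(\Gamma)\); the hypothesis \(\lin{\Gamma}{\fns{P}}\) unfolds to \(\lin{x}{\fns{P}}\), and the first statement then gives \(x \in \fn(\fns{P})\). For the reverse inclusion \(\fn(\fns{P}) \subseteq \dom(\Gamma)\), take any \(x \in \fn(\fns{P})\); since \(\stp{\Gamma}{\fns{P}}\), part~(2) of \Cref{lemma:main_writeup:2} immediately yields \(x \in \dom(\Gamma)\). Combining the two inclusions gives \(\dom(\Gamma) = \fn(\fns{P})\).

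No step presents a genuine difficulty; the argument is essentially bookkeeping over the linearity rules. The only point requiring care is in the congruence cases of the induction, where one must verify that a free occurrence of \(z\) in a subprocess is not absorbed by a binder when passing to the enclosing process. This relies on the convention that the checked channel is chosen distinct from all bound names, so that moving from a subprocess to the whole process can only add, never remove, the occurrence of \(z\); with this convention in force, the composition of free-variable sets is immediate in every case, and there is no real obstacle to completing the proof.
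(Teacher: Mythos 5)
Your proposal is correct and follows essentially the same route as the paper: the first claim by induction on the derivation of \(\lin{x}{\fns{P}}\) (the paper leaves the case analysis implicit, which you spell out correctly, including the point about the checked channel being distinct from bound names in the congruence cases), and the second claim by combining the first with part~(2) of \Cref{lemma:main_writeup:2} to get the two inclusions.
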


\begin{proof}
  The first claim is by induction on the derivation of \(\lin{x}{\fns{P}}\).
  To show the second claim, assume \(\lin{\Gamma}{\fns{P}}\) and \(\stp{\Gamma}{\fns{P}}\).
  By the first claim, \(\dom(\Gamma) \subseteq \fn(\fns{P})\), and \(\fn(\fns{P}) \subseteq \dom(\Gamma)\) by \cref{lemma:main_writeup:2}.
  It follows that \(\dom(\Gamma) = \fn(\fns{P})\).
\end{proof}

\begin{lemma}[Syntax-Directedness]
  \label{lemma:main_writeup:1}
  \leavevmode
  Typing judgments and linearity predicates are syntax-directed:
  \begin{enumerate}
  \item For all \(\Delta\) and \(P\), there exists at most one derivation of \(\tp{\Delta}{P}\).
  \item For all \(\Delta\) and \(\fns{P}\), there exists at most one derivation of \(\stp{\Delta}{\fns{P}}\).
  \item For all \(x\) and \(\fns{P}\), there exists at most one derivation of\/ \(\lin{x}{\fns{P}}\).
  \end{enumerate}
\end{lemma}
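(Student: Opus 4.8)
The plan is to prove the three claims in the order (3), (1), (2). This order reflects the dependency structure: the linearity predicate is defined without reference to either typing judgment, the CP typing judgment splits contexts in a way governed by free names, and the only subtlety in the SCP typing judgment is that three of its rules carry linearity premises whose uniqueness is exactly claim (3). Each claim will be proved by induction on the process term, and in every case the argument has the same shape. Given two derivations of the same judgment, I first show their final rule is forced; I then observe that the premises of that rule are determined by its conclusion; and finally I appeal to the induction hypothesis to conclude that the subderivations coincide.

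Claim (3) is where the real work lies, since several constructors admit more than one rule whose conclusion matches a given $\lin{c}{\fns{P}}$: for example \(\swait{x}{\fns{P}}\), \(\sout{x}{y}{\fns{P}}{w}{\fns{Q}}\), and \(\spcomp{x}{A}{\fns{P}}{\fns{Q}}\). First I would check that the candidate rules for each such constructor are pairwise mutually exclusive. The key tool is \Cref{lemma:main_writeup:4}, which gives \(\lin{z}{\fns{Q}} \Rightarrow z \in \fn(\fns{Q})\). For \(\swait{x}{\fns{P}}\), the candidates are \getrn{Lwait}, whose side condition forces \(x \notin \fn(\fns{P})\), and \getrn{Lwait2} with queried channel \(x\), which demands \(\lin{x}{\fns{P}}\) and hence \(x \in \fn(\fns{P})\); these are contradictory, so at most one applies. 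The same principal-versus-congruence dichotomy, resolved by the \(x \notin \fn(\cdot)\) side conditions, handles the choice and channel-transmission constructors. For the constructors carrying two congruence rules — \getrn{Lout2} versus \getrn{Lout3}, and \getrn{Lpcomp1} versus \getrn{Lpcomp2} — exclusivity follows because each rule requires the queried channel to occur linearly (hence freely, by \Cref{lemma:main_writeup:4}) in one subprocess while the other rule forbids its occurrence there. Once the final rule is pinned down, the bound continuation channel and the relevant subprocess are recovered from the process term up to \(\alpha\)-equivalence, so the linearity premises are uniquely determined and structurally smaller, and the induction hypothesis closes the case.

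For claim (1), each CP process constructor corresponds to exactly one typing rule, so the final rule is immediate; the only work is recovering the premises from the conclusion. Types are read off from the conclusion context together with the process term (e.g. \(A\) and \(B\) in \getrn{Cotimes} from \(\Delta(x) = \stensor{A}{B}\)), and the context splits in \getrn{Ccut} and \getrn{Cotimes} are recovered using \Cref{lemma:main_writeup:2}: since \(\tp{\Delta'}{R}\) entails \(\dom(\Delta') = \fn(R)\), the partition of the ambient context is forced by which channels occur free in each subprocess. With the premises determined, the induction hypothesis applies to each subprocess.

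Claim (2) is analogous but simpler, because SCP contexts only ever grow and are never split: the premise contexts are obtained by adjoining the appropriate typed channels to the conclusion context, and every type appears explicitly either in the process syntax (as in \(\spcomp{x}{A}{\fns{P}}{\fns{Q}}\)) or in the conclusion context. The new ingredient is that \getrn{Scut}, \getrn{Sotimes}, and \getrn{Sparr} carry linearity premises, but these contribute nothing to non-uniqueness precisely by claim (3). I expect claim (3) to be the main obstacle, as it is the only part where distinct rules share a conclusion's outermost constructor; the heart of the argument is the systematic use of the side conditions together with \Cref{lemma:main_writeup:4} to rule out every spurious overlap.
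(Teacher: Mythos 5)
Your proposal is correct and takes essentially the same route as the paper, whose entire proof is an induction on the derivation ``using the observation that each judgment appears as the conclusion of at most one rule.'' The substance you add---showing that this observation actually holds for the linearity predicate by using \Cref{lemma:main_writeup:4} together with the \(x \notin \fn(\fns{P})\) side conditions to make the principal and congruence rules (and the two congruence rules for output and for parallel composition) pairwise exclusive---is precisely the content the paper's one-sentence proof leaves implicit, so this is an elaboration rather than a different argument.
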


\begin{proof}
  By induction on the derivation, using the observation that each judgment appears as the conclusion of at most one rule.
\end{proof}

\begin{theorem}[Adequacy]
  The function \(\delta\) is left inverse to \(\varepsilon\), \ie, \(\dscp{\escp{P}} = P\) for all CP processes \(P\).
  Their syntax-directed nature induces functions \(\varepsilon\) and \(\delta\) between CP typing derivations and typing derivations of linear SCP processes:
  \begin{enumerate}
  \item If \(\mc{D}\) is a derivation of\/ \(\tp{\Delta}{P}\), then there exists a derivation \(\escp{\mc{D}}\) of \(\stp{\Delta}{\escp{P}}\), and $\lin{\Delta}{\escp{P}}$ and \(\dscp{\escp{\mc{D}}} = \mc{D}\).
  \item If \(\mc{D}\) is a derivation of \(\stp{\Gamma, \Delta}{\fns{P}}\) where \(\fn(\fns{P}) = \dom(\Delta)\) and \(\lin{\Delta}{\fns{P}}\), then there exists a derivation \(\dscp{\mc{D}}\) of\/ \(\tp{\Delta}{\dscp{\fns{P}}}\), and \(\escp{\dscp{\fns{P}}} = \fns{P}\).
    Moreover, \(\mc{D}\) is the result of weakening the derivation \(\escp{\dscp{\mc{D}}}\) of\/ \(\stp{\Delta}{\fns{P}}\) by \(\Gamma\).
  \end{enumerate}
\end{theorem}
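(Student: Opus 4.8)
The plan is to establish the three assertions in turn, offloading every derivation-level equality onto the syntax-directedness of the judgments (\Cref{lemma:main_writeup:1}). First I would prove $\dscp{\escp{P}} = P$ by structural induction on the CP process $P$. The only cases needing thought are those in which $\varepsilon$ reuses the principal channel name as the continuation name, such as $\escp{\inl{x}{P}} = \sinl{x}{x}{\escp{P}}$, since there $\delta$ reintroduces a substitution $\subst{x}{w}{\dscp{\escp{P}}}$ with $w = x$. As that substitution is the identity, each such case collapses to the induction hypothesis, and the remaining cases are immediate.

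For part 1 (embedding CP into SCP) I would induct on the derivation $\mc{D}$ of $\tp{\Delta}{P}$, applying the matching SCP rule in each case. The crux is reconciling CP's split contexts with SCP's shared ones using Weakening (\Cref{lem:scp-wk}): in \getrn{Ccut}, for instance, the two induction hypotheses yield SCP derivations over $\Delta_1, x:A$ and $\Delta_2, x:A^\bot$ that I weaken to the common context $\Delta_1, \Delta_2$ before applying \getrn{Scut}. The linearity side-conditions demanded by the SCP rules, such as $\lin{x}{\escp{P}}$ in \getrn{Scut} or $\lin{y}{\escp{P}}$ in \getrn{Sotimes}, are supplied directly by the linearity clause of the induction hypothesis. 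To discharge the conclusion $\lin{\Delta}{\escp{P}}$ I would, for each $v \in \dom(\Delta)$, build the derivation from the congruence rules for linearity (e.g. \getrn{Lpcomp1} and \getrn{Lpcomp2} in the cut case), invoking \Cref{lemma:main_writeup:2} on the already-constructed SCP typing to discharge side premises of the form $v \notin \fn(\escp{Q})$ from the disjointness of the CP contexts. The equality $\dscp{\escp{\mc{D}}} = \mc{D}$ then needs no separate argument: $\dscp{\escp{\mc{D}}}$ derives $\tp{\Delta}{\dscp{\escp{P}}} = \tp{\Delta}{P}$, so it coincides with $\mc{D}$ by uniqueness of CP derivations.

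Part 2 (projecting SCP back to CP) is the heart of the proof, and the hard part is recovering CP's context split from the linearity hypothesis. Inducting on $\mc{D}$, the delicate cases are exactly the context-splitting CP rules \getrn{Ccut} and \getrn{Cotimes}. There I partition $\Delta$ by inverting the linearity derivations: for each $v \in \dom(\Delta)$ the unique (by \Cref{lemma:main_writeup:1}) derivation of $\lin{v}{\fns{P}}$ ends in either \getrn{Lpcomp1} or \getrn{Lpcomp2}, and this choice decides whether $v$ joins $\Delta_1$ or $\Delta_2$; the $v \notin \fn(\fns{P})$ and $v \notin \fn(\fns{Q})$ side-conditions of those rules guarantee the result is genuinely a partition. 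With the split fixed, I verify the preconditions of the induction hypothesis for each premise — that the free names of the subprocess exactly match its sub-context and that each sub-context channel is used linearly there — by combining \Cref{lemma:main_writeup:4} with the inverted congruence derivations and \Cref{lemma:main_writeup:2}.

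The second subtlety in part 2 is the treatment of continuation channels: in the $\otimes$, $\parr$, and choice cases the decoding renames the continuation name $w$ back to the principal name $x$, and I would justify this using the \getrn{Lout}-style premise $x \notin \fn(\fns{Q})$ (obtained by inverting the relevant linearity derivation, which also supplies $\lin{w}{\fns{Q}}$) together with \Cref{lemma:main_writeup:6} to commute the renaming past $\delta$. Finally, the two remaining equalities follow for free from syntax-directedness, exactly as in part 1: $\escp{\dscp{\mc{D}}}$ derives $\stp{\Delta}{\escp{\dscp{\fns{P}}}} = \stp{\Delta}{\fns{P}}$ (whence $\escp{\dscp{\fns{P}}} = \fns{P}$), and weakening it by $\Gamma$ (\Cref{lem:scp-wk}) yields a derivation of $\stp{\Gamma, \Delta}{\fns{P}}$ that must equal $\mc{D}$ by uniqueness of SCP derivations.
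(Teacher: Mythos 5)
Your proposal is correct and follows essentially the same route as the paper's proof: the left-inverse property by structural induction (with the identity substitutions collapsing), part 1 by induction on the CP derivation using Weakening (\Cref{lem:scp-wk}) to reconcile split contexts and the congruence linearity rules plus \Cref{lemma:main_writeup:2} for disjointness, part 2 by inverting the linearity derivations to recover the context split and using \Cref{lemma:main_writeup:6} to commute the continuation-channel renaming past the translations, with all derivation-level equalities discharged by syntax-directedness (\Cref{lemma:main_writeup:1}). The only detail the paper makes more explicit is the use of the genericity lemmas to $\alpha$-rename continuation channels to fresh names $w$ when building the SCP derivations in part 1, but this does not change the structure of the argument.
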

\begin{proof}
  We show that \(\varepsilon\) is a section by induction on the structure of \(P\).
  Nearly all cases follow immediately by the induction hypothesis.
  The interesting cases involve binding.
  Where each final equality is given by the respective induction hypothesis, they are
  \begin{align*}
    &\dscp{\escp{\inl{x}{P}}} \\
    &= \dscp{\sinl{x}{x}{\escp{P}}} \\
    &= \inl{x}{\subst{x}{x}{\left(\dscp{\escp{P}}\right)}} \\
    &= \inl{x}{P}
  \end{align*}
  and
  \begin{align*}
    &\dscp{\escp{\out{x}{y}{P}{Q}}} \\
    &= \dscp{\sout{x}{y}{\escp{P}}{x}{\escp{Q}}} \\
    &= \out{x}{y}{\dscp{\escp{P}}}{\subst{x}{x}{\left(\dscp{\escp{Q}}\right)}} \\
    &= \out{x}{y}{P}{Q}.
  \end{align*}
  The remaining cases are analogous.

  Next, we show that \(\varepsilon\) induces a mapping from CP typing derivations to typing derivations of linear SCP processes.
  Assume that \(\tp{\Delta}{P}\).
  We show that \(\stp{\Delta}{\escp{P}}\) and \(\lin{\Delta}{\escp{P}}\) by induction on the derivation of \(\tp{\Delta}{P}\).
  We will show that \(\dscp{\escp{\mc{D}}} = \mc{D}\) later, once we have defined the action of \(\delta\) on derivations.
  \begin{proofcases}
  \item[\getrn{Cid}] The derivation is:
    \[
      \getrule{Cid}
    \]
    Let \(\escp{\mc{D}}\) be given by \getrn{Sid}.
    The rules \getrn{Lfwd1} and \getrn{Lfwd2} imply the desired linearity predicate.

  \item[\getrn{Ccut}] Assume \(\mc{D}\) is given by
    \[
      \infer[\getrn{Ccut}]{
        \getrc{Ccut}
      }{
        \deduce{
          \getrh{Ccut}{1}
        }{
          \mc{D}_1
        }
        &
        \deduce{
          \getrh{Ccut}{2}
        }{
          \mc{D}_2
        }
      }
    \]
    There exist derivations
    \begin{enumerate}
    \item a derivation \(\escp{\mc{D}_1}\) of \(\stp{\Delta_1, x:A}{\escp{P}}\) \hfill (induction hypothesis) \label{item:adequacy-proof:1}
    \item a derivation \(\escp{\mc{D}_2}\) of \(\stp{\Delta_2, x:A^\bot}{\escp{Q}}\) \hfill (induction hypothesis) \label{item:adequacy-proof:2}
    \item a derivation \(\mc{L}_u\) of \(\lin{u}{\escp{P}}\) for each \(u \in \dom(\Delta_1, x : A)\) \hfill (induction hypothesis)
    \item a derivation \(\mc{L}'_u\) of \(\lin{u}{\escp{Q}}\) for each \(u \in \dom(\Delta_2, x : A^\bot)\) \hfill (induction hypothesis)
    \item a derivation \(\mc{W}_1\) of \(\stp{\Delta_1, \Delta_2, x : A}{\escp{P}}\) \hfill (\cref{lem:scp-wk} and \ref{item:adequacy-proof:1})
    \item a derivation \(\mc{W}_2\) of \(\stp{\Delta_1, \Delta_2, x : A^\bot}{\escp{Q}}\) \hfill (\cref{lem:scp-wk} and \ref{item:adequacy-proof:2})
    \end{enumerate}
    Let \(\escp{\mc{D}}\) be given by
    \[
      \infer[\getrn{Scut}]{
        \tp{\Delta_1, \Delta_2}{\pcomp{x}{A}{\escp{P}}{\escp{Q}}}
      }{
        \deduce{
          \stp{\Delta_1, \Delta_2, x:A}{\escp{P}}
        }{
          \mc{W}_1
        }
        &
        \deduce{
          \lin{x}{\escp{P}}
        }{
          \mc{L}_x
        }
        &
        \deduce{
          \stp{\Delta_1, \Delta_2, x:A^\bot}{\escp{Q}}
        }{
          \mc{W}_2
        }
        &
        \deduce{
          \lin{x}{\escp{Q}}
        }{
          \mc{L}'_x
        }
      }
    \]
    To deduce \(\lin{\Delta_1, \Delta_2}{\pcomp{x}{A}{\escp{P}}{\escp{Q}}}\), observe that \(\Delta_1\) and \(\Delta_2\) type disjoint sets of names by the well-formedness of \(\getrc{Ccut}\).
    \Cref{lemma:main_writeup:2} then implies that each free name in \(\pcomp{x}{A}{\escp{P}}{\escp{Q}}\) appears in either \(\escp{P}\) or \(\escp{Q}\), but not both.
    We are then done by \getrn{Lpcomp1} and \getrn{Lpcomp2} using the derivations \(\mc{L}_u\) and \(\mc{L}'_u\).

  \item[\getrn{Cotimes}] Assume \(\mc{D}\) is given by
    \[
      \infer[\getrn{Cotimes}]{
        \getrc{Cotimes}
      }{
        \deduce{
          \getrh{Cotimes}{1}
        }{
          \mc{D}_1
        }
        &
        \deduce{
          \getrh{Cotimes}{2}
        }{
          \mc{D}_2
        }
      }
    \]
    Let \(w \notin \dom(\Delta_1, \Delta_2, x : A \otimes B)\) be a fresh channel name.
    There exist derivations
    \begin{proofenum}
    \item \(\escp{\mc{D}_1}\) of \(\stp{\Delta_1, y:A}{\escp{P}}\) \hfill (induction hypothesis) \label{item:main_writeup:2}
    \item \(\escp{\mc{D}_2}\) of \(\stp{\Delta_2, x:B}{\escp{Q}}\) \hfill (induction hypothesis) \label{item:main_writeup:1}
    \item \(\mc{L}_u\) of \(\lin{u}{\escp{P}}\) for each \(u \in \dom(\Delta_1, y : A)\) \hfill (induction hypothesis)
    \item \(\mc{L}'_u\) of \(\lin{u}{\escp{Q}}\) for each \(u \in \dom(\Delta_2, x : B)\)  \hfill (induction hypothesis) \label{item:main_writeup:8}
    \item \(\subst{w}{x}{\escp{\mc{D}_2}}\) of \(\stp{\Delta_2, w:B}{\subst{w}{x}{\escp{Q}}}\) \hfill (\labelcref{item:main_writeup:1} and genericity)
    \item \(\mc{W}_1\) of \(\stp{\Delta_1, \Delta_2, x : A \otimes B, y : A}{\escp{P}}\) \hfill (\cref{lem:scp-wk} and \labelcref{item:main_writeup:2})
    \item \(\mc{W}_2\) of \(\stp{\Delta_1, \Delta_2, x : A \otimes B, w : B}{\subst{w}{x}{\escp{Q}}}\) \hfill (\cref{lem:scp-wk} and \labelcref{item:main_writeup:1})
    \item  \(\subst{w}{x}{\mc{L}'_u}\) of \(\lin{u}{\subst{w}{x}{\escp{Q}}}\) for each \(u \in \dom(\Delta_2, x : B)\) \hfill (\cref{lemma:main_writeup:3} and \ref{item:main_writeup:8})
    \end{proofenum}

    Let \(\escp{\mc{D}}\) be given by
    \[
      \infer[\getrn{Sotimes}]{
        \stp{\Delta_1, \Delta_2, x : A \otimes B}{\sout{x}{y}{\escp{P}}{w}{\subst{w}{x}{\escp{Q}}}}
      }{
        \deduce{
          \stp{\Delta_1, \Delta_2, x : A \otimes B, y : A}{\escp{P}}
        }{
          \mc{W}_1
        }
        &
        \deduce{
          \lin{y}{P}
        }{
          \mc{L}_y
        }
        &
        \deduce{
          \stp{\Delta_1, \Delta_2, x : A \otimes B, w : B}{\subst{w}{x}{\escp{Q}}}
        }{
          \mc{W}_2
        }
      }
    \]
    This derivation has the correct conclusion: \(w \notin \fn(\escp{Q})\), so by \cref{lemma:main_writeup:2} and \labelcref{item:main_writeup:1},
    \begin{align*}
      &\escp{\getrc{Cotimes}}\\
      &= \sout{x}{y}{\escp{P}}{x}{\escp{Q}}\\
      &\equiv_\alpha \sout{x}{y}{\escp{P}}{w}{\subst{w}{x}{\escp{Q}}}
    \end{align*}
    are \(\alpha\)-equivalent SCP processes.

    To establish \(\lin{u}{\sout{x}{y}{\escp{P}}{w}{\subst{w}{x}{\escp{Q}}}}\) for \(u \in \dom(\Delta_1, \Delta_2, x : A \otimes B)\), we proceed by case analysis on \(u\).
    \begin{itemize}
    \item If \(u \in \dom(\Delta_1)\), then we are done by \(\mc{L}_u\) and \getrn{Lout2}.
    \item If \(u \in \dom(\Delta_2)\), then we are done by \(\subst{w}{x}{\mc{L}'_u}\) and \getrn{Lout3}.
    \item Assume \(u = x\). We know that \(u \notin \fn(\escp{P})\) by \cref{lemma:main_writeup:2} and \(\stp{\Delta_1, y:A}{\escp{P}}\), and \(u \notin \fn(\subst{w}{x}{\escp{Q}})\) by definition of substitution.
      We are done by \getrn{Lout}.
    \end{itemize}

  \item[\getrn{Cparr}]
    Assume \(\mc{D}\) is given by
    \[
      \infer[\getrn{Cparr}]{
        \getrc{Cparr}
      }{
        \deduce{
          \getrh{Cparr}{1}
        }{
          \mc{D}_1
        }
      }
    \]
    Let \(w \notin \dom(\Delta, x : A \parr B)\) be a fresh channel name.
    There exist derivations
    \begin{proofenum}
    \item \(\escp{\mc{D}_1}\) of \(\stp{\Delta, x : B, y:A}{\escp{P}}\) \hfill (induction hypothesis) \label{item:main_writeup:5}
    \item \(\mc{L}_u\) of \(\lin{u}{\escp{P}}\) for each \(u \in \dom(\Delta, x : B, y : A)\) \hfill (induction hypothesis)  \label{item:main_writeup:7}
    \item \(\subst{w}{x}{\escp{\mc{D}_1}}\) of \(\stp{\Delta, w : B, y : A}{\subst{w}{x}{\escp{P}}}\) \hfill (\labelcref{item:main_writeup:5} and genericity)  \label{item:main_writeup:6}
    \item \(\mc{W}\) of \(\stp{\Delta, x : A \parr B, w : B, y : A}{\subst{w}{x}{\escp{P}}}\) \hfill (\cref{lem:scp-wk} and \ref{item:main_writeup:6})
    \item \(\subst{w}{x}{\escp{\mc{D}_1}}\) of \(\stp{\Delta, w : B, y : A}{\subst{w}{x}{\escp{P}}}\) \hfill (\labelcref{item:main_writeup:5} and genericity)  \label{item:main_write up:6}
    \item  \(\subst{w}{x}{\mc{L}_u}\) of \(\lin{u}{\subst{w}{x}{\escp{P}}}\) for each \(u \in \dom(\Delta, x : B, y : A)\) \hfill (\cref{lemma:main_writeup:3} and \ref{item:main_writeup:7})
    \end{proofenum}

    Let \(\escp{\mc{D}}\) be given by
    \[
      \infer[\getrn{Sparr}]{
        \stp{\Delta, x : A \parr B}{\sinp{x}{y}{w}{\subst{w}{x}{\escp{P}}}}
      }{
        \deduce{
          \stp{\Delta, x : A \parr B, w : B, y : A}{\subst{w}{x}{\escp{P}}}
        }{
          \mc{W}
        }
        &
        \deduce{
          \lin{y}{\subst{w}{x}{P}}
        }{
          \subst{w}{x}{\mc{L}_y}
        }
      }
    \]
    This derivation has the correct conclusion by \(\alpha\)-equivalence.

    To establish \(\lin{u}{\sinp{x}{y}{w}{\subst{w}{x}{\escp{P}}}}\) for \(u \in \dom(\Delta, x : A \parr B)\), we proceed by case analysis on \(u\).
    \begin{itemize}
    \item If \(u \in \dom(\Delta)\), then we are done by \(\subst{w}{x}{\mc{L}_u}\) and \getrn{Linp2}.
    \item If \(u = x\), then we are done by \(\subst{w}{x}{\mc{L}_x}\) and \getrn{Linp}.
    \end{itemize}

  \item[\getrn{Cinl}]
    Assume \(\mc{D}\) is given by
    \[
      \infer[\getrn{Cinl}]{
        \getrc{Cinl}
      }{
        \deduce{
          \getrh{Cinl}{1}
        }{
          \mc{D}_1
        }
      }
    \]
    Let \(w \notin \dom(\Delta, x : A \oplus B)\) be a fresh channel name.
    There exist derivations
    \begin{proofenum}
    \item \(\escp{\mc{D}_1}\) of \(\stp{\Delta, x : A}{\escp{P}}\) \hfill (induction hypothesis) \label{item:main_writeup:25}
    \item \(\mc{L}_u\) of \(\lin{u}{\escp{P}}\) for each \(u \in \dom(\Delta, x : A)\) \hfill (induction hypothesis)  \label{item:main_writeup:27}
    \item \(\subst{w}{x}{\escp{\mc{D}_1}}\) of \(\stp{\Delta, w : A}{\subst{w}{x}{\escp{P}}}\) \hfill (\labelcref{item:main_writeup:25} and genericity)  \label{item:main_writeup:26}
    \item \(\mc{W}\) of \(\stp{\Delta, x : A \oplus B, w : A}{\subst{w}{x}{\escp{P}}}\) \hfill (\cref{lem:scp-wk} and \ref{item:main_writeup:26})
    \item  \(\subst{w}{x}{\mc{L}_u}\) of \(\lin{u}{\subst{w}{x}{\escp{P}}}\) for each \(u \in \dom(\Delta, w : A)\) \hfill (\cref{lemma:main_writeup:3} and \ref{item:main_writeup:27})
    \end{proofenum}

    Let \(\escp{\mc{D}}\) be given by
    \[
      \infer[\getrn{Sinl}]{
        \sinl{x}{w}{\subst{w}{x}{\escp{P}}}
      }{
        \deduce{
          \stp{\Delta, x : A \oplus B, w : A}{\subst{w}{x}{\escp{P}}}
        }{
          \subst{w}{x}{\escp{\mc{D}_1}}
        }
      }
    \]
    This derivation has the correct conclusion by \(\alpha\)-equivalence.

    Linearity \(\lin{u}{\sinl{x}{w}{\subst{w}{x}{\escp{P}}}}\) for \(u \in \dom(\Delta, x : A \oplus B)\) is given by \getrn{Linl} and \(\subst{w}{x}{\mc{L}_x}\) if \(u = x\), and by \getrn{Linl2} and \(\subst{w}{x}{\mc{L}_u}\) otherwise.
  \item[\getrn{Cinr}]
    This case is analogous to the case \getrn{Cinl}.
  \item[\getrn{Cwith}]
    Assume \(\mc{D}\) is given by
    \[
      \infer[\getrn{Cwith}]{
        \getrc{Cwith}
      }{
        \deduce{
          \getrh{Cwith}{1}
        }{
          \mc{D}_1
        }
        &
        \deduce{
          \getrh{Cwith}{2}
        }{
          \mc{D}_2
        }
      }
    \]
    Let \(w \notin \dom(\Delta, x : A \with B)\) be a fresh channel name.
    There exist derivations
    \begin{proofenum}
    \item \(\escp{\mc{D}_1}\) of \(\stp{\Delta, x : A}{\escp{P}}\) \hfill (induction hypothesis) \label{item:main_writeup:35}
    \item \(\escp{\mc{D}_2}\) of \(\stp{\Delta, x : B}{\escp{Q}}\) \hfill (induction hypothesis) \label{item:main_writeup:33}
    \item \(\mc{L}_u\) of \(\lin{u}{\escp{P}}\) for each \(u \in \dom(\Delta, x : A)\) \hfill (induction hypothesis)  \label{item:main_writeup:37}
    \item \(\mc{L}'_u\) of \(\lin{u}{\escp{Q}}\) for each \(u \in \dom(\Delta, x : B)\) \hfill (induction hypothesis)  \label{item:main_writeup:38}
    \item \(\subst{w}{x}{\escp{\mc{D}_1}}\) of \(\stp{\Delta, w : A}{\subst{w}{x}{\escp{P}}}\) \hfill (\labelcref{item:main_writeup:35} and genericity)  \label{item:main_writeup:36}
    \item \(\subst{w}{x}{\escp{\mc{D}_2}}\) of \(\stp{\Delta, w : B}{\subst{w}{x}{\escp{Q}}}\) \hfill (\labelcref{item:main_writeup:33} and genericity)  \label{item:main_writeup:39}
    \item \(\mc{W}_1\) of \(\stp{\Delta, x : A \oplus B, w : A}{\subst{w}{x}{\escp{P}}}\) \hfill (\cref{lem:scp-wk} and \ref{item:main_writeup:36})
    \item \(\mc{W}_2\) of \(\stp{\Delta, x : A \oplus B, w : B}{\subst{w}{x}{\escp{Q}}}\) \hfill (\cref{lem:scp-wk} and \ref{item:main_writeup:39})
    \item \(\subst{w}{x}{\mc{L}_u}\) of \(\lin{u}{\subst{w}{x}{\escp{P}}}\) for each \(u \in \dom(\Delta, w : A)\) \hfill (\cref{lemma:main_writeup:3} and \ref{item:main_writeup:37})
    \item \(\subst{w}{x}{\mc{L}'_u}\) of \(\lin{u}{\subst{w}{x}{\escp{Q}}}\) for each \(u \in \dom(\Delta, w : B)\) \hfill (\cref{lemma:main_writeup:3} and \ref{item:main_writeup:38})
    \end{proofenum}

    Let \(\escp{\mc{D}}\) be given by
    \[
      \infer[\getrn{Swith}]{
        \schoice{x}{w}{\subst{w}{x}{\escp{P}}}{w}{\subst{w}{x}{\escp{Q}}}
      }{
        \deduce{
          \stp{\Delta, x : A \oplus B, w : A}{\subst{w}{x}{\escp{P}}}
        }{
          \mc{W}_1
        }
        &
        \deduce{
          \stp{\Delta, x : A \oplus B, w : B}{\subst{w}{x}{\escp{Q}}}
        }{
          \mc{W}_2
        }
      }
    \]
    This derivation has the correct conclusion by \(\alpha\)-equivalence.

    Linearity \(\lin{u}{\schoice{x}{w}{\subst{w}{x}{\escp{P}}}{w}{\subst{w}{x}{\escp{Q}}}}\) for \(u \in \dom(\Delta, x : A \with B)\) is given by case analysis on \(u\):
    \begin{proofcases}
    \item[\(u = x\)] We are done by \(\subst{w}{x}{\mc{L}_x}\),  \(\subst{w}{x}{\mc{L}'_x}\), and \getrn{Lcase}.
    \item[\(u \in \dom(\Delta)\)] We are done by \(\subst{w}{x}{\mc{L}_x}\),  \(\subst{w}{x}{\mc{L}'_x}\), and \getrn{Lcase2}.
    \end{proofcases}

  \item[\getrn{C1}]
    Assume \(\mc{D}\) is given by
    \[
      \getrule{C1}
    \]
    Let \(\escp{\mc{D}}\) be given by \getrn{S1}.
    Linearity \(\lin{x}{\escp{\sclose{x}}}\) is given by \getrn{Lclose}.

  \item[\getrn{Cbot}]
    Assume \(\mc{D}\) is given by
    \[
      \infer[\getrn{Cbot}]{
        \getrc{Cbot}
      }{
        \deduce{
          \getrh{Cbot}{1}
        }{
          \mc{D}_1
        }
      }
    \]
    There exist derivations
    \begin{proofenum}
    \item \(\escp{\mc{D}_1}\) of \(\stp{\Delta}{\escp{P}}\) \hfill (induction hypothesis)
    \item \(\mc{L}_u\) of \(\lin{u}{\escp{P}}\) for each \(u \in \dom(\Delta)\) \hfill (induction hypothesis)
    \end{proofenum}
    Let \(\escp{\mc{D}}\) be given by
    \[
      \infer[\getrn{Sbot}]{
        \stp{\Delta, x : \bot}{\swait{x}{\escp{P}}}
      }{
        \deduce{
          \stp{\Delta}{\escp{P}}
        }{
          \escp{\mc{D}_1}
        }
      }
    \]
    Linearity \(\lin{u}{\swait{x}{\escp{P}}}\) for \(u \in \dom(\Delta, x : \bot)\) is given by \getrn{Lwait} if \(u = x\), and by \getrn{Lwait2} and \(\mc{L}_u\) otherwise.
  \end{proofcases}

  We now show the converse, namely, that if \(\mc{D}\) is a derivation of \(\stp{\Gamma, \Delta}{\fns{P}}\) where \(\fn(\fns{P}) = \dom(\Delta)\) and \(\lin{\Delta}{\fns{P}}\), then there exists a derivation \(\dscp{\mc{D}}\) of \(\tp{\Delta}{\dscp{\fns{P}}}\) and \(\escp{\dscp{\fns{P}}} = \fns{P}\).
  We will repeatedly use the following fact: if \(\stp{\Gamma}{\fns{P}}\), then by \cref{lemma:main_writeup:2} there exists a \(\Delta \subseteq \Gamma\) such that \(\dom(\Delta) = \fn(\fns{P})\)
  We proceed by induction on the derivation \(\mc{D}\) of \(\stp{\Gamma, \Delta}{\fns{P}}\).

  \begin{proofcases}
  \item[\getrn{Sid}]
    Assume \(\mc{D}\) is given by
    \[
      \getrule{Sid}
    \]
    Then \(\Delta = x : A, y : A^\bot\).
    Let \(\dscp{\mc{D}}\) be given by
    \[
      \getrule{Cid}
    \]
    It is clear that \(\escp{\dscp{\sfwd{x}{y}}} = \sfwd{x}{y}\).

  \item[\getrn{Scut}]
    Assume \(\mc{D}\) is given by
    \[
      \infer[\getrn{Scut}]{
        \getrc{Scut}
      }{
        \deduce{
          \getrh{Scut}{1}
        }{
          \mc{D}_1
        }
        &
        \deduce{
          \getrh{Scut}{2}
        }{
          \mc{L}_1
        }
        &
        \deduce{
          \getrh{Scut}{3}
        }{
          \mc{D}_2
        }
        &
        \deduce{
          \getrh{Scut}{4}
        }{
          \mc{L}_2
        }
      }
    \]

    We start by showing that \(\linp{\fns{P}}\).
    Observe that \(x \in \fn(\fns{P})\) by \cref{lemma:main_writeup:4} and \(\getrh{Scut}{2}\).
    Let \(\Delta_1 \subseteq \Gamma\) be such that \(\dom(\Delta_1, x : A) = \fn(\fns{P})\).
    Showing \(\linp{\fns{P}}\) thus requires showing \(\lin{\Delta_1, x : A}{\fns{P}}\).
    By inversion on \getrn{Lpcomp1}, it follows that \(\lin{z}{\fns{P}}\) for all \(z \in \dom(\Delta_1)\).
    We have \(\lin{x}{\fns{P}}\) by assumption.
    This gives \(\linp{\fns{P}}\) as desired.
    By the induction hypothesis, there exists a derivation \(\dscp{\mc{D}_1}\) of \(\tp{\Delta_1, x : A}{\fns{P}}\).
    An identical argument produces a derivation \(\dscp{\mc{D}_2}\) of \(\tp{\Delta_2, x : A^\bot}{\fns{Q}}\).

    Let \(\dscp{\mc{D}}\) be given by
    \[
      \infer[\getrn{Ccut}]{
        \tp{\Delta_1, \Delta_2}{\pcomp{x}{A}{\dscp{\fns{P}}}{\dscp{\fns{Q}}}}
      }{
        \deduce{
          \tp{\Delta_1, x : A}{\dscp{\fns{P}}}
        }{
          \dscp{\mc{D}_1}
        }
        &
        \deduce{
          \tp{\Delta_2, x : A^\bot}{\dscp{\fns{Q}}}
        }{
          \dscp{\mc{D}_2}
        }
      }
    \]

    Finally, we show that \(\escp{\dscp{\spcomp{x}{A}{\dscp{\fns{P}}}{\dscp{\fns{Q}}}}} = \spcomp{x}{A}{\fns{P}}{\fns{Q}}\).
    By the induction hypothesis, \({\escp{\dscp{\fns{P}}} = \fns{P}}\) and analogously for \(\fns{Q}\).
    Using this, we compute:
    \begin{align*}
      &\escp{\dscp{\spcomp{x}{A}{\dscp{\fns{P}}}{\dscp{\fns{Q}}}}}\\
      &= \escp{\pcomp{x}{A}{\dscp{\fns{P}}}{\dscp{\fns{Q}}}}\\
      &= \spcomp{x}{A}{\escp{\dscp{\fns{P}}}}{\escp{\dscp{Q}}}\\
      &= \spcomp{x}{A}{\fns{P}}{\fns{Q}}.
    \end{align*}

  \item[\getrn{Sotimes}]
    Assume \(\mc{D}\) is given by
    \[
      \infer[\getrn{Sotimes}]{
        \getrc{Sotimes}
      }{
        \deduce{
          \getrh{Sotimes}{1}
        }{
          \mc{D}_1
        }
        &
        \deduce{
          \getrh{Sotimes}{2}
        }{
          \mc{L}
        }
        &
        \deduce{
          \getrh{Sotimes}{3}
        }{
          \mc{D}_2
        }
      }
    \]
    We follow a similar approach as in the case \getrn{Scut}.

    We first show \(\linp{\fns{P}}\), which requires checking the linearity of each free name in \(\fns{P}\).
    By \cref{lemma:main_writeup:2}, \(\fn(\fns{P}) \subseteq \dom(\Gamma, x : A \otimes B, y : A)\).
    We claim that \(\fn(\fns{P}) = \dom(\Delta_1, y : A)\) for some \(\Delta_1 \subseteq \Gamma\):
    \begin{itemize}
    \item \(y \in \fn(\fns{P})\) by \(\getrh{Sotimes}{2}\) and \cref{lemma:main_writeup:4};
    \item \(x \notin \fn(\fns{P})\) by inversion on \(\lin{x}{\sout{x}{y}{\fns{P}}{w}{\fns{Q}}}\) and \getrn{Lout};
    \end{itemize}
    This establishes that \(\fn(\fns{P}) = \dom(\Delta_1, y : A)\) for some \(\Delta_1 \subseteq \Gamma\).

    Having established the set of free names that must be linear, we check \(\linp{P}\).
    To do so, we rely on the fact that the sets \(\fn(\fns{P})\) and \(\fn(\fns{Q})\) are disjoint by inversion on \getrn{Lout}, \getrn{Lout2}, \getrn{Lout3} using \(\lin{\Gamma, x : A \otimes B}{\sout{x}{y}{\fns{P}}{w}{\fns{Q}}}\).
    \begin{itemize}
    \item \(\lin{y}{\fns{P}}\) by assumption;
    \item \(\lin{z}{\fns{P}}\) for all \(z \in \dom(\Delta_1)\) by inversion on \getrn{Lout2} and linearity of \(\sout{x}{y}{\fns{P}}{w}{\fns{Q}}\).
      Indeed, \getrn{Lout2} is the only rule that could have been applied for \(z \in \dom(\Delta_1)\) because \(\fns{P}\) and \(\fns{Q}\) have disjoint sets of free names.
    \end{itemize}
    We conclude \(\linp{P}\).
    It follows by the induction hypothesis that there then exists a derivation \(\dscp{\mc{D}_1}\) of \(\tp{\Delta_1, y : A}{\dscp{\fns{P}}}\).

    A similar argument implies that \(\linp{\fns{Q}}\).
    The induction hypothesis produces a derivation \(\dscp{\mc{D}_2}\) of \(\tp{\Delta_2, w : B}{\dscp{\fns{Q}}}\).
    By genericity, it follows that there exists a derivation \(\subst{x}{w}{\dscp{\mc{D}_2}}\) of \(\tp{\Delta_2, x : B}{\subst{x}{w}{\dscp{\fns{Q}}}}\).

    Let the derivation \(\dscp{\mc{D}}\) be given by
    \[
      \infer[\getrn{Cotimes}]{
        \tp{\Delta_1, \Delta_2, x : A \otimes B}{\out{x}{y}{\dscp{\fns{P}}}{\subst{x}{w}{\dscp{Q}}}}
      }{
        \deduce{
          \tp{\Delta_1, y : A}{\dscp{\fns{P}}}
        }{
          \dscp{\mc{D}_1}
        }
        &
        \deduce{
          \tp{\Delta_2, x : B}{\subst{x}{w}{\dscp{\fns{Q}}}}
        }{
          \subst{x}{w}{\dscp{\mc{D}_2}}
        }
      }
    \]

    It remains to show that \(\escp{\dscp{\sout{x}{y}{\fns{P}}{w}{\fns{Q}}}} = \sout{x}{y}{\fns{P}}{w}{\fns{Q}}\).
    Because \(x \notin \fn(\fns{Q})\), it follows that
    \[
      \sout{x}{y}{\fns{P}}{w}{\fns{Q}} \equiv_\alpha \sout{x}{y}{\fns{P}}{x}{\subst{x}{w}{\fns{Q}}}
    \]
    are \(\alpha\)-equivalent processes.
    By the induction hypothesis, we know that \(\escp{\dscp{\fns{P}}} = \fns{P}\) and \(\escp{\dscp{\fns{Q}}} = \fns{Q}\).
    We compute:
    \begin{align*}
      &\escp{\dscp{\sout{x}{y}{\fns{P}}{w}{\fns{Q}}}}\\
      &= \escp{\dscp{\sout{x}{y}{\fns{P}}{x}{\subst{x}{w}{\fns{Q}}}}}\\
      &= \escp{\out{x}{y}{\dscp{\fns{P}}}{\subst{x}{x}{\dscp{\subst{x}{w}{\fns{Q}}}}}}\\
      &= \escp{\out{x}{y}{\dscp{\fns{P}}}{\dscp{\subst{x}{w}{\fns{Q}}}}}\\
      &= \sout{x}{y}{\escp{\dscp{\fns{P}}}}{x}{\escp{\dscp{\subst{x}{w}{\fns{Q}}}}}\\
      \shortintertext{which by applying \cref{lemma:main_writeup:6,lemma:main_writeup:3} to \(\linp{\fns{Q}}\),}
      &= \sout{x}{y}{\escp{\dscp{\fns{P}}}}{x}{\escp{\subst{x}{w}{(\dscp{\fns{Q}})}}}\\
      \shortintertext{which by applying \cref{lemma:adequacy-proof:1} to \(\tp{\Delta_2, x : B}{\subst{x}{w}{\dscp{\fns{Q}}}}\)}
      &= \sout{x}{y}{\escp{\dscp{\fns{P}}}}{x}{\subst{x}{w}{(\escp{\dscp{\fns{Q}}})}}\\
      &= \sout{x}{y}{\escp{\dscp{\fns{P}}}}{w}{\escp{\dscp{\fns{Q}}})}\\
      &= \sout{x}{y}{\fns{P}}{w}{\fns{Q}}
    \end{align*}
    This completes the case.

  \item[\getrn{Sparr}]
    Assume \(\mc{D}\) is given by
    \[
      \infer[\getrn{Sparr}]{
        \getrc{Sparr}
      }{
        \deduce{
          \getrh{Sparr}{1}
        }{
          \mc{D}_1
        }
        &
        \deduce{
          \getrh{Sparr}{2}
        }{
          \mc{L}_y
        }
      }
    \]

    We show that \(\linp{\fns{P}}\).
    We start by showing that \(\fn{\fns{P}} = \dom(\Delta, w : B, y : A)\) for some \(\Delta \subseteq \Gamma\).
    By inversion on \(\getrh{Sparr}{1}\) with \getrn{Linp} and \getrn{Linp2}, we deduce \(\lin{u}{\fns{P}}\) for all \(u \in \dom(\Delta, w : B)\).
    We know \(\lin{y}{\fns{P}}\) by hypothesis.
    We deduce \(\linp{\fns{P}}\).
    By the induction hypothesis, there exists a derivation \(\dscp{\mc{D}_1}\) of \(\tp{\Delta, w : B, y : A}{\dscp{\fns{P}}}\).
    By genericity, there exists a derivation \(\subst{x}{w}{\dscp{\mc{D}_1}}\)  of \(\tp{\Delta, x : B, y : A}{\subst{x}{w}{\dscp{\fns{P}}}}\).

    Let the derivation \(\dscp{\mc{D}}\) be given by:
    \[
      \infer[\getrn{Cparr}]{
        \tp{\Delta, x : A \parr B}{\inp{x}{y}{\subst{x}{w}{\dscp{\fns{P}}}}}
      }{
        \deduce{
          \tp{\Delta, x : B, y : A}{\subst{x}{w}{(\dscp{\fns{P}})}}
        }{
          \subst{x}{w}{\dscp{\mc{D}_1}}
        }
      }
    \]

    It remains to show that \(\escp{\dscp{\sinp{x}{y}{w}{\fns{P}}}} = \sinp{x}{y}{w}{\fns{P}}\).
    We compute:
    \begin{align*}
      &\escp{\dscp{\sinp{x}{y}{w}{\fns{P}}}}\\
      &= \escp{\inp{x}{y}{\subst{x}{w}{\dscp{\fns{P}}}}}\\
      \shortintertext{which by \cref{lemma:adequacy-proof:1}}
      &= \sinp{x}{y}{x}{\escp{\subst{x}{w}{\dscp{\fns{P}}}}}\\
      &= \sinp{x}{y}{x}{\subst{x}{w}{(\escp{\dscp{\fns{P}}})}}\\
      \shortintertext{which by the induction hypothesis}
      &= \sinp{x}{y}{x}{\subst{x}{w}{\fns{P}}}\\
      &= \sinp{x}{y}{w}{\fns{P}}
    \end{align*}

  \item[\getrn{Sinl}]
    Assume \(\mc{D}\) is given by
    \[
      \infer[\getrn{Sinl}]{
        \getrc{Sinl}
      }{
        \deduce{
          \getrh{Sinl}{1}
        }{
          \mc{D}_1
        }
      }
    \]

    We show that \(\linp{\fns{P}}\) to be able to apply the induction hypothesis.
    By inversion on the assumption \(\linp{\sinl{x}{w}{\fns{P}}}\) with \getrn{Linl}, we know that \(x \notin \fn(\fns{P})\) and \(w \in \fn(\fns{P})\).
    Let \(\Delta \subseteq \Gamma\) be such that \(\fn(\fns{P}) = \dom(\Delta, w : A)\).
    Then by the induction hypothesis, there exists a derivation \(\dscp{\mc{D}_1}\) of \(\tp{\Delta, w : A}{\dscp{\fns{P}}}\).
    By genericity, there exists a derivation \(\subst{x}{w}{\dscp{\mc{D}_1}}\) of \(\tp{\Delta, x : A}{\subst{x}{w}{\dscp{\fns{P}}}}\).

    Let the derivation \(\stp{\Delta, x : A \oplus B}{\inl{x}{P}}\) be given by
    \[
      \infer[\getrn{Cinl}]{
        \getrc{Cinl}
      }{
        \deduce{
          \tp{\Delta, x : A}{\subst{x}{w}{\dscp{\fns{P}}}}
        }{
          \subst{x}{w}{\dscp{\mc{D}_1}}
        }
      }
    \]

    It remains to show that \(\escp{\dscp{\sinl{x}{w}{\fns{P}}}} = \sinl{x}{w}{\fns{P}}\).
    We compute:
    \begin{align*}
      &\escp{\dscp{\sinl{x}{w}{\fns{P}}}}\\
      &= \escp{\inl{x}{\subst{x}{w}{\dscp{\fns{P}}}}}\\
      &= \sinl{x}{x}{\escp{\subst{x}{w}{\dscp{\fns{P}}}}}\\
      \shortintertext{which by \cref{lemma:adequacy-proof:1}}
      &= \sinl{x}{x}{\subst{x}{w}{(\escp{\dscp{\fns{P}}})}}\\
      \shortintertext{which by the induction hypothesis}
      &= \sinl{x}{x}{\subst{x}{w}{\fns{P}}}\\
      &= \sinl{x}{w}{\fns{P}}
    \end{align*}

  \item[\getrn{Sinr}] This case in analogous to the case \getrn{Sinl}.
  \item[\getrn{Swith}]
    Assume \(\mc{D}\) is given by
    \[
      \infer[\getrn{Swith}]{
        \getrc{Swith}
      }{
        \deduce{
          \getrh{Swith}{1}
        }{
          \mc{D}_1
        }
        &
        \deduce{
          \getrh{Swith}{2}
        }{
          \mc{D}_2
        }
      }
    \]

    We show \(\linp{\fns{P}}\).
    By inversion on \(\linp{\schoice{x}{w}{\fns{P}}{w}{\fns{Q}}}\) with \getrn{Lcase} and \getrn{Lcase2}, we deduce \(x \notin \fns{P}\) and \(w \in \fns{P}\).
    Let \(\Delta \subseteq \Gamma\) be such that \(\dom(\Delta, w : A) = \fn(\fns{P})\).
    By the induction hypothesis, there exists a derivation \(\dscp{\mc{D}_1}\) of \(\tp{\Delta, w : A}{\dscp{\fns{P}}}\).
    By genericity, there exists a derivation \(\subst{x}{w}{\dscp{\mc{D}_1}}\) of \(\tp{\Delta, x : A}{\subst{x}{w}{\fns{P}}}\).

    An analogous argument gives \(\linp{\fns{Q}}\) and a derivation a derivation \(\subst{x}{w}{\dscp{\mc{D}_2}}\) of \(\tp{\Delta, x : B}{\subst{x}{w}{\fns{Q}}}\).

    Let \(\dscp{\mc{D}}\) be given by
    \[
      \infer[\getrn{Cwith}]{
        \getrc{Cwith}
      }{
        \deduce{
          \tp{\Delta, x : A}{\subst{x}{w}{\fns{P}}}
        }{
          \subst{x}{w}{\dscp{\mc{D}_1}}
        }
        &
        \deduce{
          \tp{\Delta, x : B}{\subst{x}{w}{\fns{Q}}}
        }{
          \subst{x}{w}{\dscp{\mc{D}_2}}
        }
      }
    \]

    The induction hypothesis and \cref{lemma:adequacy-proof:1} imply \(\escp{\dscp{\schoice{x}{w}{\fns{P}}{w}{\fns{Q}}}} = \schoice{x}{w}{\fns{P}}{w}{\fns{Q}}\).

  \item[\getrn{S1}]
    Assume \(\mc{D}\) is given by
    \[
      \getrule{S1}
    \]
    Let \(\dscp{\mc{D}}\) be given by
    \[
      \getrule{C1}
    \]
    It is clear that \(\escp{\dscp{\sclose{x}}} = \sclose{x}\).
  \item[\getrn{Sbot}]
    Assume \(\mc{D}\) is given by
    \[
      \infer[\getrn{Sbot}]{
        \getrc{Sbot}
      }{
        \deduce{
          \getrh{Sbot}{1}
        }{
          \mc{D}_1
        }
      }
    \]

    By inversion on \(\linp{\swait{x}{\fns{P}}}\) and \getrn{Lwait}, \(x \notin \fn(\fns{P})\).
    Let \(\Delta \subseteq \Gamma\) be such that \(\fn(\fns{P}) = \dom(\Delta)\).
    By the induction hypothesis, there exists a derivation \(\dscp{\mc{D}}\) of \(\tp{\Delta}{\dscp{\fns{P}}}\).

    Let the derivation \(\dscp{\mc{D}}\) be given by
    \[
      \infer[\getrn{Cbot}]{
        \getrc{Cbot}
      }{
        \deduce{
          \getrh{Cbot}{1}
        }{
          \dscp{\mc{D}}
        }
      }
    \]

    It is easy to check using the induction hypothesis that \(\escp{\dscp{\swait{x}{\fns{P}}}} = \swait{x}{\fns{P}}\).
  \end{proofcases}

  Finally, we show the desired identities for compositions of \(\varepsilon\) and \(\delta\) on derivations.
  When \(\mc{D}\) is a CP typing derivation, \cref{lemma:main_writeup:1} implies that \(\dscp{\escp{\mc{D}}} = \mc{D}\).
  Indeed, if \(\mc{D}\) is a derivation of \(\tp{\Delta}{P}\), then so is \(\dscp{\escp{\mc{D}}}\) by the above.
  But there exists at most one derivation of \(\tp{\Delta}{P}\), so we conclude \(\mc{D} = \dscp{\escp{\mc{D}}}\).

  If \(\mc{D}\) is a derivation of \(\stp{\Gamma, \Delta}{\fns{P}}\) where \(\dom(\Delta) = \fn(\fns{P})\) and \(\linp{\fns{P}}\), then \(\escp{\dscp{\mc{D}}}\) is a derivation of \(\stp{\Delta}{\fns{P}}\).
  Weakening this derivation by \(\Gamma\) gives a derivation \(\mc{W}\) of \(\stp{\Gamma, \Delta}{\fns{P}}\).
  But \cref{lemma:main_writeup:1} then implies that \(\mc{W} = \mc{D}\).
  This is what we wanted to show.
\end{proof}


\newpage
\section{Adequacy proof of the LF encoding of SCP}

\subsection{Proof of \Cref{lem:adeq-tp}}
\begin{lemma}[Adequacy of {\color{magenta}tp}]
  There exists a bijection between the set of session types and canonical LF terms $M$ such that $\lftp{M}$.
\end{lemma}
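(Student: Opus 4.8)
The plan is to exhibit the bijection as a pair of mutually inverse maps. First I would recall the obvious encoding $\etp{-}$, defined by recursion on the structure of a session type: it sends $1$ and $\bot$ to the constants @1@ and @bot@, and each binary connective $\otimes$, $\parr$, $\with$, $\oplus$ to the corresponding binary constant (@times@, @par@, @&@, @+@) applied to the encodings of its two arguments. Alongside it I would define a decoding $d(-)$ on canonical LF terms of type @tp@ by the evident inverse clauses. The claimed bijection then reduces to three facts: (i) $\etp{A}$ is always a canonical LF term with $\lftp{\etp{A}}$; (ii) $d(\etp{A}) = A$ for every session type $A$; and (iii) every canonical $M$ with $\lftp{M}$ lies in the image of $\etp{-}$, with $\etp{d(M)} = M$, so that $d$ is well-defined on such $M$.

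Facts (i) and (ii) are the easy directions and I would dispatch them by a straightforward induction on the structure of $A$. For (i), each clause of $\etp{-}$ produces one of the six declared constants fully applied to a spine of arguments that are canonical of type @tp@ by the induction hypothesis; since @tp@ is an atomic type, such a fully applied constant is itself atomic and hence canonical, so $\lftp{\etp{A}}$ holds. Fact (ii) is immediate, since the clauses of $d(-)$ are defined to undo those of $\etp{-}$ constructor by constructor. Injectivity of $\etp{-}$ follows from (ii), or directly from the observation that distinct session-type constructors are sent to distinct LF heads and that the binary cases determine their subterms uniquely.

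The substance of the proof is fact (iii), the surjectivity direction, and this is the step I expect to be the main obstacle. Here I would invoke the standard metatheory of LF canonical ($\beta\eta$-normal) forms rather than anything specific to session types. The key inversion is that a canonical term of the atomic type @tp@ is itself atomic: there is no function type at @tp@ forcing an abstraction, so $M$ must be a head applied to a spine of canonical arguments. In the \emph{empty} context there are no free variables available as heads, so the head can only be one of the six declared constants; matching their declared arities then forces $M$ to be either @1@, @bot@, or a binary constant applied to exactly two canonical subterms of type @tp@. By the induction hypothesis each such subterm is in the image of $\etp{-}$, so reading off the corresponding session-type constructor defines $d(M)$ and yields $\etp{d(M)} = M$. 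The only delicate point is ruling out any ``exotic'' inhabitant of @tp@; because @tp@ is a simple, non-dependent base type taking no indices and the ambient context is empty, this canonical-forms inversion is clean, but it is the one step that is not a one-line induction and on which the whole bijection rests.
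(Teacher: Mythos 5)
Your proposal is correct and follows essentially the same route as the paper: the paper defines the structural encoding $\varepsilon(-)$ and decoding $\delta(-)$ clause by clause and then simply observes that they are mutually inverse, which is exactly your facts (i)--(iii). You spell out the canonical-forms inversion for the surjectivity direction in more detail than the paper (which leaves it implicit in ``clearly inverses''), but this is elaboration of the same argument rather than a different approach.
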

\begin{proof}
We define the encoding $\etp{-}$ and decoding $\dtp{-}$ of types in SCP as follows:
\begin{align*}
  \etp{1} &= 1 & \etp{\bot} &= \bot \\
  \etp{A \otimes B} &= \etp{A} \otimes \etp{B} & \etp{A \parr B} &= \etp{A} \parr \etp{B} \\
  \etp{A \oplus B} &= \etp{A} \oplus \etp{B} & \etp{A \with B} &= \etp{A} \with \etp{B} \\
  \dtp{1} &= 1 & \dtp{\bot} &= \bot \\
  \dtp{A \otimes B} &= \dtp{A} \otimes \dtp{B} & \dtp{A \parr B} &= \dtp{A} \parr \dtp{B} \\
  \dtp{A \oplus B} &= \dtp{A} \oplus \dtp{B} & \dtp{A \with B} &= \dtp{A} \with \dtp{B}
\end{align*}
$\etp{-}$ and $\dtp{-}$ are clearly inverses of each other and satisfies adequacy.
\end{proof}

\subsection{Proof of \Cref{lem:adeq-dual}}
\begin{lemma}[Adequacy of {\color{magenta}dual}]
  \leavevmode
  \begin{enumerate}
    \item For any session type $A$, there exists a unique LF canonical form $D$ such
      that $\lfdual{D}{\etp{A}}{\etp{A^\bot}}$
    \item For any LF canonical form $D$ such that $\lfdual{D}{\etp{A}}{\etp{A'}}$, $\;A' = A^\bot$.
  \end{enumerate}
\end{lemma}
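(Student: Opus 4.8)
The plan is to prove both statements by induction, in each case routing between LF terms of type @tp@ and session types through the adequacy of the type encoding (\Cref{lem:adeq-tp}). The one auxiliary fact I would isolate up front is that the encoding $\etp{-}$ is injective and commutes with every connective, \ie, $\etp{A \otimes B} = \etp{A} \otimes \etp{B}$ and likewise for $\parr$, $\oplus$, and $\with$; both properties are immediate from the definition of $\etp{-}$ in the proof of \Cref{lem:adeq-tp}. Since that lemma also guarantees that every canonical LF term of type @tp@ is $\etp{B}$ for a unique session type $B$, I may assume throughout that the indices appearing in any @dual@ judgment are already of the form $\etp{B}$.

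For part (1) I would induct on the structure of $A$. When $A = 1$ we have $A^\bot = \bot$, and the constant @D1@ provides a canonical form $D$ with $\lfdual{D}{\etp{1}}{\etp{\bot}}$; the case $A = \bot$ is symmetric via @D⊥@. When $A = B \otimes C$ we have $A^\bot = B^\bot \parr C^\bot$, and the induction hypothesis supplies canonical forms $D_B, D_C$ with $\lfdual{D_B}{\etp{B}}{\etp{B^\bot}}$ and $\lfdual{D_C}{\etp{C}}{\etp{C^\bot}}$; applying @D⊗@ to them yields $D_A$ with $\lfdual{D_A}{\etp{B}\otimes\etp{C}}{\etp{B^\bot}\parr\etp{C^\bot}}$, which by commutation of $\etp{-}$ is exactly $\lfdual{D_A}{\etp{B\otimes C}}{\etp{(B\otimes C)^\bot}}$, as required. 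The remaining binary cases are identical up to the choice of constructor. For uniqueness I would observe that the @dual@ family is syntax-directed in its first index: the head connective of $\etp{A}$ determines which of the six constructors can conclude a @dual@ judgment with that first index, so any two canonical derivations begin with the same constructor, and inversion then exposes subderivations on strictly smaller components that coincide by the induction hypothesis.

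For part (2) I would instead induct on the canonical derivation $D$ of $\lfdual{D}{\etp{A}}{\etp{A'}}$ and invert on its head constructor. For example, if the head constructor is @D⊗@ with immediate subderivations $D_l$ and $D_r$, then the typing of @D⊗@ forces $\etp{A}$ to have the form $M_1 \otimes M_2$ and $\etp{A'}$ the form $N_1 \parr N_2$; injectivity of $\etp{-}$ then yields session types $B, C, B', C'$ with $A = B \otimes C$ and $A' = B' \parr C'$, while $D_l, D_r$ derive $\lfdual{D_l}{\etp{B}}{\etp{B'}}$ and $\lfdual{D_r}{\etp{C}}{\etp{C'}}$. The induction hypothesis gives $B' = B^\bot$ and $C' = C^\bot$, whence $A' = B^\bot \parr C^\bot = (B \otimes C)^\bot = A^\bot$ by the definition of duality. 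The base cases @D1@ and @D⊥@ are immediate, and the other three binary constructors are handled identically.

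The only genuinely delicate point, and the step I expect to demand the most care, is the repeated passage between the LF indices and the on-paper session types: each inductive case hinges on using injectivity of $\etp{-}$ to decompose a compound LF type into the encodings of its immediate subtypes, and on commutation of $\etp{-}$ with the connectives to recognise $\etp{A^\bot}$ in the conclusion of a constructor. Both facts are furnished by \Cref{lem:adeq-tp}, so once they are stated explicitly the two inductions are routine and the remaining bookkeeping is purely mechanical.
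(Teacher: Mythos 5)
Your proposal is correct and follows essentially the same route as the paper: part (1) by induction on the structure of $A$ with uniqueness from the syntax-directedness of the constructors, and part (2) by induction on the canonical derivation $D$ with inversion on its head constructor. The only difference is that you make explicit the injectivity of $\etp{-}$ and its commutation with the connectives, which the paper leaves implicit.
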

\begin{proof}
We show some cases for both parts. For (1), by induction on $A$.
\setcounter{case}{0}
\begin{case}
  $A = 1$, then $A^\bot = \bot$, and therefore $\etp{A} = 1$ and $\etp{A^\bot} = \bot$. So we can use the LF constructor
  @D1 : dual 1 bot@. Uniqueness follows by inspecting other constructors.
\end{case}
\begin{case}
  $A = B \otimes C$, then $A^\bot = B^\bot \parr C^\bot$, and therefore $\etp{A} = \etp{B} \otimes \etp{C}$
  and $\etp{A^\bot} = \etp{B^\bot} \parr \etp{C^\bot}$. \\
  Then by induction hypothesis there exist unique LF derivations $D$ and $E$ such that ${\lfdual{D}{\etp{B}}{\etp{B^\bot}}}$ and
  ${\lfdual{E}{\etp{C}}{\etp{C}}}$. We now use the constructor @D⊗@, and uniqueness follows by inspecting other
  constructors.
\end{case}
For (2), by induction on the dervation of $D$ such that ${\lfdual{D}{\etp{A}}{\etp{A'}}}$.
\setcounter{case}{0}
\begin{case}
  @D1@, then $\etp{A} = 1$ and $\etp{A'} = \bot$, so $A = 1$ and $A' = \bot = A^\bot$.
\end{case}
\begin{case}
  @D⊗@, then $\etp{A} = \etp{B} \otimes \etp{C}$ and $\etp{A'} = \etp{B'} \otimes \etp{C'}$ for some $B, B', C, C'$ and
  there are LF forms $D$ and $E$ such that ${\lfdual{D}{\etp{B}}{\etp{B'}}}$ and ${\lfdual{E}{\etp{C}}{\etp{C'}}}$. By
  induction hypothesis, $B' = B^\bot$ and $C' = C^\bot$, so $\etp{A} = \etp{A^\bot}$.
\end{case}
\end{proof}

\subsection{Proof of \Cref{lem:adeq-proc}}
We first define the encoding $\eproc{-}$ and decoding $\dproc{-}$ of processes in SCP as follows.
For the encoding, the idea is to represent all bindings as intuitionistic functions.
For the decoding, the idea is to perform function application on relevant substructures with freshly bound channels.
\begin{align*}
  \eproc{\sfwd{x}{y}} &= \lffwd{x}{y} & \eproc{\spcomp{x}{A}{\fns{P}}{\fns{Q}}} &= \lfpcomp{x}{\etp{A}}{\eproc{\fns{P}}}{\eproc{\fns{Q}}} \\
  \eproc{\sclose{x}} &= \lfclose{x} & \eproc{\swait{x}{\fns{P}}} &= \lfwait{x}{\eproc{\fns{P}}} \\
  \eproc{\sout{x}{y}{\fns{P}}{w}{\fns{Q}}} &= \lfout{x}{(\lambda y. \eproc{\fns{P}})}{(\lambda w. \eproc{\fns{Q}})}
& \eproc{\sinp{x}{y}{w}{\fns{P}}} &= \lfinp{x}{(\lambda w. \lambda y. \eproc{\fns{P}})} \\
  \eproc{\sinl{x}{w}{\fns{P}}} &= \lfinl{x}{(\lambda w. \eproc{\fns{P}})} & \eproc{\sinr{x}{w}{\fns{P}}} &= \lfinr{x}{(\lambda w. \eproc{\fns{P}})} \\
  \eproc{\schoice{x}{w}{\fns{P}}{w}{\fns{Q}}} &= \lfchoice{x}{(\lambda w. \eproc{\fns{P}})}{(\lambda w. \eproc{\fns{Q}})}
\end{align*}
\begin{align*}
  \dproc{\lffwd{x}{y}} &= \sfwd{x}{y} & \dproc{\lfpcompd{T}{M}{N}} &= \spcomp{x}{\dtp{T}}{\dproc{M\;x}}{\dproc{N\;x}} \\
  \dproc{\lfclose{x}} &= \sclose{x} & \dproc{\lfwait{x}{M}} &= \swait{x}{\dproc{M}} \\
  \dproc{\lfout{x}{M}{N}} &= \sout{x}{y}{\dproc{M\;y}}{w}{\dproc{N\;w}} & \dproc{\lfinp{x}{M}} &= \sinp{x}{y}{w}{\dproc{M\;w\;y}} \\
  \dproc{\lfinl{x}{M}} &= \sinl{x}{w}{\dproc{M\;w}} & \dproc{\lfinr{x}{M}} &= \sinr{x}{w}{\dproc{M\;w}} \\
  \dproc{\lfchoice{x}{M}{N}} &= \schoice{x}{w}{\dproc{M\;w}}{w}{\dproc{N\;w}}
\end{align*}

\begin{lemma}[Adequacy of {\color{magenta}proc}]
  For each SCP processes \(\fns{P}\), there exists a unique canonical LF derivation \(\lfproc{\enc{\fn(\fns{P})}}{\enc{\fns{P}}}\) and \(\dec{\enc{\fns{P}}} = \fns{P}\).
  Conversely, if\/ \(\lfproc{\Gamma}{M}\) is a canonical LF derivation, then \(\dec{M}\) is an SCP process, \(\enc{\dec{M}} = M\), and \(\enc{\fn(\dec{M})} \subseteq \Gamma\).
\end{lemma}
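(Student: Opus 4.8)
The plan is to prove both directions of the adequacy statement for {\color{magenta}proc} by structural induction, exploiting the fact that the encoding $\eproc{-}$ and decoding $\dproc{-}$ are defined by recursion on the structure of SCP processes and canonical LF terms, respectively. For the forward direction, I would proceed by induction on the SCP process $\fns{P}$. In each case, I would observe that the encoding clause for that constructor produces an LF term whose head is the corresponding LF constant (e.g.\ @fwd@, @pcomp@, @inl@), applied to names drawn from $\fn(\fns{P})$ and to intuitionistic functions $\lambda y.\, \eproc{\fns{P}'}$ encoding the subprocesses. The typing derivation $\lfproc{\enc{\fn(\fns{P})}}{\enc{\fns{P}}}$ is then assembled by applying that constant and appealing to the induction hypotheses on the subprocesses; crucially, a binding construct such as $\sinl{x}{w}{\fns{P}'}$ requires the continuation $\eproc{\fns{P}'}$ to be well-formed in the extended context $\enc{\fn(\fns{P}') }$, which by $\alpha$-renaming matches $\enc{\fn(\fns{P})}$ together with the fresh bound name $w$. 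The identities $\dec{\enc{\fns{P}}} = \fns{P}$ follow by a routine sub-induction, reducing to checking that $\dproc{-}$ inverts $\eproc{-}$ clause-by-clause, with the function-application steps ($M\,x$, $M\,w\,y$, etc.) correctly recovering the bound names.

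For the converse direction, I would induct on the structure of the canonical LF form $M$ of type @proc@. Because we restrict to canonical ($\beta\eta$-normal) forms, every such $M$ must have an atomic head that is one of the declared @proc@-constructors applied to arguments of the appropriate types; there are no other normal inhabitants. I would then case-split on this head. For each case, the arguments are names (variables drawn from $\Gamma$) and canonical terms of function type @name → proc@ or @name → name → proc@, which by $\eta$-expansion are $\lambda$-abstractions whose bodies are again canonical @proc@ forms in an extended context. Applying the induction hypothesis to these bodies yields that $\dproc{M}$ is a well-formed SCP process, that $\enc{\dec{M}} = M$, and that the free names used are accounted for. The inclusion $\enc{\fn(\dec{M})} \subseteq \Gamma$ (rather than equality) is exactly the slack introduced by LF weakening, as the authors flag just before this lemma: a variable in $\Gamma$ need not actually be used.

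The steps I would carry out in order are: (1) confirm the encoding/decoding clauses are total and well-defined on all constructors; (2) prove $\dec{\enc{\fns{P}}} = \fns{P}$ and $\enc{\dec{M}} = M$ as mutual round-trip identities by structural induction; (3) establish well-typedness and uniqueness of $\enc{\fns{P}}$ in the forward direction; and (4) establish that $\dec{M}$ is a legal SCP process together with the free-name inclusion in the backward direction. The main obstacle will be the careful bookkeeping of bound names under the weak-HOAS encoding: the binding constructors (@out@, @inp@, @choice@, @pcomp@) encode continuations as functions, so I must argue that the bound channel $w$ (or the pair $w,y$ for channel input) corresponds precisely to the context extension $\enc{\fn(\fns{P})} \cup \{w\}$ and that $\alpha$-equivalence on the SCP side matches $\alpha$-equivalence of LF $\lambda$-bindings. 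The subtlety is sharpest for @inp@, where $\dproc{\lfinp{x}{M}} = \sinp{x}{y}{w}{\dproc{M\,w\,y}}$ introduces \emph{two} bound names via a nested application, and I must verify that the uniqueness of the canonical form survives this double binding without exotic terms creeping in — which is guaranteed precisely because canonical forms in LF are in bijection with their on-paper counterparts, as the authors remark in the related-work discussion.
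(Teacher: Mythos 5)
Your proposal matches the paper's proof: both directions proceed by structural induction (on the SCP process for the forward direction, on the canonical LF form for the converse), with uniqueness from the one-to-one correspondence of constructors, the round-trip identities checked clause-by-clause, and the free-name inclusion handled via the context extension for bound names exactly as in the paper's \texttt{inl} case. The approach is essentially identical, so no further comparison is needed.
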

\begin{proof}
  For the forward direction, by induction on $\fns{P}$. We show two cases. For all cases, uniqueness follows from inspecting constructors to observe that every process construct in SCP has a corresponding constructor in our encoding. Left invertibility of $\eproc{-}$ follows from simple computation of each case.
\setcounter{case}{0}
\begin{case}
  $\fns{P} = \sfwd{x}{y}$\\
  Then $\fn(\fns{P}) = \{x, y\}$, and indeed, $\lfproc{\lfn{x}, \lfn{y}}{\lffwd{x}{y}}$.
\end{case}
\begin{case}
  $\fns{P} = \sinl{x}{w}{\fns{P'}}$\\
  Then $\eproc{\fns{P}} = \lfinl{x}{(\lambda w. \eproc{\fns{P'}})}$. Therefore $\lfproc{\enc{\fn(\fns{P'})}}{\eproc{\fns{P'}}}$ by induction hypothesis.
  Since $\fn(\fns{P}) = (\fn(\fns{P'}) \setminus{w}) \cup \{x\}$, we have $\lfproc{\enc{\fn(\fns{P})}}{\eproc{\fns{P}}}$ by the LF constructor @inl@.
\end{case}
For the reverse direction, by induction on the derivation $\lfproc{\Gamma}{M}$.
\setcounter{case}{0}
\begin{case}
  $\lfproc{\Gamma}{\lffwd{x}{y}}$\\
  Then $\dproc{M} = \sfwd{x}{y}$ and invertibility is obvious.
  Moreover, $\lfn{x}$ and $\lfn{y}$ must appear in $\Gamma$ since there are no constructors of the LF type @name@.
  Indeed, $\fn(\sfwd{x}{y}) = \{\lfn{x}, \lfn{y}\}$ which is a subset of $\Gamma$.
\end{case}
\begin{case}
  $\lfproc{\Gamma}{\lfinl{x}{M'}}$\\
  Then $\dproc{M} = \sinl{x}{w}{\dproc{M'\;w}}$.
  For invertibility, we have
  \begin{align*}
    \eproc{\dproc{\lfinl{x}{M'}}} &= \eproc{\sinl{x}{w}{\dproc{M'\;w}}} \\
                                  &= \lfinl{x}{(\lambda w. \eproc{\dproc{M'\;w}})} \\
                                  &= \lfinl{x}{(\lambda w. M'\;w)} \\
                                  &= \lfinl{x}{M'}
  \end{align*}
  For the context condition, we have $\enc{\fn(\dproc{M'\ w})} \subseteq \Gamma, \lfn{w}$ by induction hypothesis using some fresh $\lfn{w}$.
  Since $\fn(\dproc{M}) = (\fn(\dproc{M'\ w}) \setminus w) \cup \{x\}$, we obtain
  \[
    \enc{\fn(\dproc{M}} \subseteq \Gamma
  \]
  since $\lfn{x} \in \Gamma$ due to it being used in $\lfinl{x}{M'}$.
\end{case}
\end{proof}

\subsection{Proof of \Cref{lem:adeq-linear}}
\begin{lemma}[Adequacy of {\color{magenta}linear}]
  For each derivation $\mc{D}$ of \/ $\lin{x}{\fns{P}}$, there exists a unique canonical LF derivation $L = \elin{\mc{D}}$ such that
  $\lflind{\enc{\fn(\fns{P}) \setminus x}}{L}{x}{\eproc{\fns{P}}}$ and $\dlin{L} = \mc{D}$.
  Conversely, if \/ $\lflinm{\Gamma}{L}{M}$ is a canonical LF derivation, then $\dlin{L}$ is a derivation of \/ $\lin{x}{\dproc{M\ x}}$
  and $\lflinm{\enc{\fn(\dproc{M\ x}) \setminus x}}{\elin{\dlin{L}}}{M}$ where $\enc{\fn(\dproc{M\ x})} \subseteq \Gamma$.
\end{lemma}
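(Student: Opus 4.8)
The proof mirrors the adequacy argument for processes (\Cref{lem:adeq-proc}). I would define the encoding $\elin{-}$ and the decoding $\dlin{-}$ by simultaneous structural recursion---$\elin{-}$ on the on-paper derivation of $\lin{x}{\fns{P}}$ and $\dlin{-}$ on the LF derivation of $\lflinm{\Gamma}{L}{M}$---and then verify the canonicity, invertibility, and context claims one constructor at a time. Uniqueness in the forward direction follows from syntax-directedness: by \Cref{lem:adeq-proc} the head constant of $\eproc{\fns{P}}$ is fixed by the outermost construct of $\fns{P}$, and each such construct is matched by exactly one constant of the @linear@ family. I would group the cases as the rules are grouped: the axioms \getrn{Lfwd1}, \getrn{Lfwd2}, \getrn{Lclose}, and \getrn{Lwait}; the principal cases that recurse on a continuation, namely \getrn{Lout}, \getrn{Linp}, \getrn{Linl}, \getrn{Linr}, and \getrn{Lcase}; and the congruence cases \getrn{Lwait2} through \getrn{Lpcomp2}.

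For the forward direction I would treat one axiom and one recursive case as representative. In \getrn{Lwait}, the derivation concludes $\lin{x}{\swait{x}{\fns{P}}}$ subject to $x \notin \fn(\fns{P})$, and I set $\elin{\mc{D}} = $ @l_wait@. Since the constant @l_wait@ forces its process metavariable to be independent of the abstracted channel, the resulting term is well-formed exactly in $\enc{\fn(\swait{x}{\fns{P}}) \setminus x} = \enc{\fn(\fns{P})}$; this is where the side condition $x \notin \fn(\fns{P})$ is discharged ``for free'' by the closedness of LF metavariables. In \getrn{Lout}, the induction hypothesis applied to the premise $\lin{w}{\fns{Q}}$ yields a canonical $L'$ with $\lflind{\enc{\fn(\fns{Q}) \setminus w}}{L'}{w}{\eproc{\fns{Q}}}$, and I set $\elin{\mc{D}} = $ @l_out@ applied to $L'$, weakening $L'$ into the ambient context $\enc{\fn(\sout{x}{y}{\fns{P}}{w}{\fns{Q}}) \setminus x}$; this weakening is harmless since LF weakening preserves canonical forms and leaves the decoding unchanged. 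The subtler cases are those with higher-order premises---input \getrn{Linp} and the congruence rules---where the LF constant quantifies universally over a freshly bound channel. There the induction hypothesis only supplies linearity at the specific bound channel named in the on-paper rule, so I would invoke genericity of linearity (\cref{lemma:main_writeup:3}) to lift it to the universally quantified LF premise.

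For the converse direction I would induct on the LF derivation $\lflinm{\Gamma}{L}{M}$, reading each constant back to the corresponding on-paper rule through $\dlin{-}$ and using \Cref{lem:adeq-proc} to recover $\dproc{M\ x}$. The only departure from a clean bijection is the context: exactly as in \Cref{lem:adeq-proc}, LF weakening permits $\Gamma$ to be strictly larger than necessary, so I can only conclude $\enc{\fn(\dproc{M\ x})} \subseteq \Gamma$ while producing a re-encoding in the minimal context $\enc{\fn(\dproc{M\ x}) \setminus x}$. The higher-order premises are handled dually: to read a universally quantified LF premise back to an on-paper derivation I instantiate it at the relevant bound channel, and \cref{lemma:main_writeup:3} again certifies that the choice of that channel is immaterial.

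The main obstacle is the bookkeeping reconciling two slightly different views of the linear channel: on paper $\lin{x}{\fns{P}}$ treats $x$ as a free name with $x \in \fn(\fns{P})$, whereas the family @linear@ over @name -> proc@ abstracts over $x$, so its context must omit $\lfn{x}$, forcing the systematic $\setminus x$ adjustments throughout. Intertwined with this, I must show that each on-paper side condition $x \notin \fn(\cdot)$ corresponds precisely to an LF process metavariable being closed with respect to a locally bound name, and conversely that closedness of those metavariables forces the side conditions. I expect the input and congruence cases to be the most delicate, since there the side conditions, the universally quantified higher-order premises, and genericity (\cref{lemma:main_writeup:3}) all interact at once.
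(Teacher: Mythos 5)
Your proposal follows essentially the same route as the paper's proof: induction on the derivation in each direction, a case-per-rule analysis exploiting the exact correspondence between the linearity rules and the constants of @linear@, the systematic $\setminus x$ context adjustment, and the discharge of the $x \notin \fn(\cdot)$ side conditions via closedness of LF metavariables; the paper merely exhibits different representative cases (\getrn{Lfwd1} and \getrn{Linl}) and leaves the higher-order input/congruence cases implicit. One small imprecision: your uniqueness argument claims that each process construct is matched by exactly one constant of the @linear@ family, which is false as stated --- @wait@ is matched by both @l_wait@ and @l_wait2@, and @out@ by @l_out@, @l_out2@, and @l_out3@; determinism instead comes from whether the head name of the construct is the abstracted channel and, for the context-splitting constructs, from which subterm the abstracted channel occurs in, these alternatives being mutually exclusive precisely because linearity requires usage (\cref{lemma:main_writeup:4}). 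This does not undermine the argument --- the needed determinism is exactly the syntax-directedness recorded in \cref{lemma:main_writeup:1} --- but the justification should be stated at the level of the abstracted term $\lambda x.\,\eproc{\fns{P}}$ rather than of the process constructor alone.
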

\begin{proof}
Since there is an exact correspondence between inference rules for the linearity predicate in SCP and the constructors for the LF type family @linear@, we informally case on a few inference rules and prove both directions.
The encodings and decodings are modular, so we present the result of encoding and decoding as needed in the proof.
We omit the verification of the invertibility statements since they are obvious.
\setcounter{case}{0}
\begin{case}[\getrn{Lfwd1}]
  We start with the forward direction. Suppose $\mc{D}$ is
  \[
    \getrule{Lfwd1}
  \]
  Where $\elin{\mc{D}} = \lfLfwdAn$. Then indeed,
  $$\lflind{\enc{\fn(\sfwd{x}{y}) \setminus x}}{\lfLfwdAn}{x}{\eproc{\sfwd{x}{y}}}$$
  because $\eproc{\sfwd{x}{y}} = \lffwd{x}{y}$ and $\enc{\fn(\sfwd{x}{y}) \setminus x} = \lfn{y}$.
  The reverse direction uses a similar argument.
\end{case}

\begin{case}[\getrn{Linl}]
  Starting with the forward direction, where $\mc{D}$ is
  \[
    \getrule{Linl}
  \]
  Then $\elin{\mc{D}} = \lfLinln : \lfL{w}{\eproc{\fns{P}}} \to \lfL{x}{(\lfinl{x}{\lambda w. \eproc{\fns{P}}}}$.
  \\First, we have $\lflind{\enc{\fn(\fns{P}) \setminus w}}{L}{w}{\eproc{\fns{P}}}$ by induction hypothesis, so it suffices to show
  \[
    \lflind{\enc{\fn(\sinl{x}{w}{\fns{P}}) \setminus x}}{(\lfLinln\ L)}{x}{\eproc{\sinl{x}{w}{\fns{P}}}}
  \]
  First, we observe that $\eproc{\sinl{x}{w}{\fns{P}}} = \lfinl{x}{\lambda w. \eproc{\fns{P}}}$ by definition of $\eproc{-}$, so
  the typing matches.
  Next, to show that the context matches, we must show $\enc{\fn(\fns{P}) \setminus w} = \enc{\fn(\sinl{x}{w}{\fns{P}}) \setminus x}$, which follows from the side condition $\getrh{Linl}{2}$.
  
  Consider the converse next where we have an LF derivation of form
  \[
    \lf{\Gamma}{(\lfLinln\ L) : \lfL{x}{(\lfinl{x}{M}}}
  \]
  such that $\lf{\Gamma}{L : \lfLo{M}}$.
  \\Then by induction hypothesis, we have $\dlin{L}$ is a derivation of $\lin{w}{\dproc{M\ w}}$.
  Furthermore, $M$ cannot depend on $x$ since it is a metavariable and therefore must be independent of the internally bound $x$.
  Therefore, $x \notin \fn(\dproc{M\ w})$. And therefore, we can apply \getrn{Linl}.
 \end{case}
\end{proof}

\subsection{Proof of \Cref{lem:adeq-wtp}}
\begin{lemma}[Adequacy of {\color{magenta}wtp}]
  There exists a bijection between typing derivations in SCP of form
  $\stp{\Gamma}{\fns{P}}$ and LF canonical forms $D$ such that
  $\lfwtp{\enc{\Gamma}}{D}{\enc{\fns{P}}}$
\end{lemma}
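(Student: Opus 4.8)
The plan is to first define the encoding $\ewtp{-}$ and decoding $\dwtp{-}$ on derivations by structural recursion, sending each SCP typing rule to its unique corresponding LF constructor (\getrn{Sid} to @wtp_fwd@, \getrn{Scut} to @wtp_pcomp@, \getrn{Sinl} to @wtp_inl@, and so on) and conversely. I would then establish the two halves of the bijection separately: the forward direction by induction on the SCP derivation of $\stp{\Gamma}{\fns{P}}$, and the reverse direction by induction on the structure of the LF canonical form $D$ witnessing $\lfwtp{\enc{\Gamma}}{D}{\enc{\fns{P}}}$. The engine of each inductive step is to invoke the adequacy lemmas already proved for the constituent judgments and to recurse on subderivations.

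Concretely, in the forward direction I would case on the final rule of $\mc{D}$. In every case I use \Cref{lem:adeq-proc} to encode the underlying process as $\enc{\fns{P}}$; in the cases \getrn{Sid} and \getrn{Scut} I additionally use \Cref{lem:adeq-dual} to supply the required @dual@ witness; and in the cases carrying linearity premises, namely \getrn{Scut}, \getrn{Sotimes}, and \getrn{Sparr}, I use \Cref{lem:adeq-linear} to encode the linearity subderivations. The binding cases introduce a continuation or fresh channel: for instance, in \getrn{Sinl} the premise is typed in the extended SCP context $\stp{\Gamma, x : A \oplus B, w : A}{\fns{P}}$, and the induction hypothesis yields an LF term living in the correspondingly extended context $\enc{\Gamma, x : A \oplus B, w : A}$. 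Abstracting over the freshly introduced name and its @hyp@ assumption produces exactly the LF function expected by the matching constructor, so assembling the encoded pieces gives $\ewtp{\mc{D}}$.

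For the reverse direction I would case on the LF canonical form. Since each @wtp@ constructor is the image of a unique SCP rule, inversion determines which rule to apply; I then decode the process via \Cref{lem:adeq-proc}, the duality witness via \Cref{lem:adeq-dual}, and the linearity arguments via \Cref{lem:adeq-linear}, applying the induction hypothesis to the subterms inhabiting the extended LF context. Invertibility of $\ewtp{-}$ and $\dwtp{-}$ then follows from the syntax-directedness of SCP typing (\Cref{lemma:main_writeup:1}) together with the one-to-one correspondence between SCP rules and LF constructors. Unlike \Cref{lem:adeq-proc}, this is a genuine bijection rather than a mere subset statement: because SCP typing enjoys weakening (\Cref{lem:scp-wk}), the ambient context $\Gamma$ is a fixed parameter and appears verbatim as $\enc{\Gamma}$ on both sides, so no spurious weakened contexts arise.

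The main obstacle I expect is reconciling the contexts used by the linearity premises. \Cref{lem:adeq-linear} places a linearity form in a context of \emph{names only}, namely $\enc{\fn(\fns{P}) \setminus x}$, whereas in the @wtp@ encoding the premise @linear P@ is checked in the full typing context $\enc{\Gamma}$, which also carries @hyp@ assumptions. Bridging the two requires observing that the @linear@ type family never mentions @hyp@, so a canonical linearity form is invariant under adding or removing such assumptions; this is a routine strengthening argument that lets me pass between $\enc{\Gamma}$ and the name-only context demanded by \Cref{lem:adeq-linear}. Once this alignment is in place, the remainder is the bookkeeping of threading the encoded processes, duality witnesses, and linearity forms through each constructor.
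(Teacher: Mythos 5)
Your proposal matches the paper's proof in all essentials: both proceed by the rule-to-constructor correspondence in each direction, invoke the adequacy lemmas for \textcolor{magenta}{proc}, \textcolor{magenta}{dual}, and \textcolor{magenta}{linear} on the subcomponents, and resolve the context mismatch for the linearity premises by transferring the canonical \textcolor{magenta}{linear} form between the name-only context $\enc{\fn(\fns{P}) \setminus x}$ and the full context $\enc{\Gamma}$ (the paper does this by weakening via $\enc{\fn(\fns{P})} \subseteq \enc{\Gamma}$, exactly the observation you flag as the main obstacle). Your explanation of why a genuine bijection holds here, via weakening of SCP typing judgments, is also the one the paper gives.
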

\begin{proof}
  Just like in the proof sketch for adequacy on linearity, we informally case on a few typing rules and prove both directions since there is an exact correspondence between typing rules in SCP and the constructors for the LF type family @wtp@.
  The encodings and decodings are modular, so we informally present the result of encoding and decoding as needed in the proof. 
  We omit verifying invertibility statements since they are obvious.
\setcounter{case}{0}
\begin{case}[\getrn{Sid}]
\[
  \getrule{Sid}
\]
This rule corresponds to  
\newline @wtp_fwd : dual T T' → {X:name}hyp X T → {Y:name}hyp Y T' → wtp (fwd X Y)@.
\newline By \Cref{lem:adeq-dual}, we have a unique derivation $\lfdual{D}{\etp{A}}{\etp{A^\bot}}$, so we set $A = T$ and $A^\bot = T'$.
By expanding the context encoding, $\enc{\Gamma, \h{x}{A}, \h{y}{A^\bot}} = \enc{\Gamma}, \lfb{hx}{x}{A}, \lfb{hy}{y}{A^\bot}$,
allowing usage of @wtp_fwd@.
\\
We use a similar argument for the reverse direction, in particular, we apply \Cref{lem:adeq-dual} to infer $A' = A^\bot$ from $\lfdual{D}{\etp{A}}{\etp{A'}}$.
We then prove that the decoding of the LF context yields a context of form $\Gamma, \h{x}{A}, \h{y}{A'}$ with $A' = A^\bot$.
\end{case}
\begin{case}[\getrn{Scut}]
\[
  \getrule{Scut}
\]
This rule corresponds to
\begin{lstlisting}
wtp_pcomp : dual T T' → ({x:name} hyp x T → wtp (M x)) → ({x:name} hyp x T' → wtp (N x))
          → linear M → linear N
          → wtp (pcomp T M N)
\end{lstlisting}
First, $\etp{A} = T$ and $\etp{A^\bot} = T'$ by \Cref{lem:adeq-dual}. 
By induction hypothesis, we have unique derivations 
$\lfwtp{\enc{\Gamma, \h{x}{A}}}{\mc{D}_1}{\eproc{\fns{P}}}$ and
$\lfwtp{\enc{\Gamma, \h{x}{A^\bot}}}{\mc{D}_2}{\eproc{\fns{Q}}}$. 
\\
The linearity predicates follow from \Cref{lem:adeq-linear}. For example, we can encode the derivation of the predicate $\lin{x}{\fns{P}}$ to some $L$ such that
$\lflind{\enc{\fn(\fns{P}) \setminus x}}{L}{x}{\eproc{\fns{P}}}$. Moreover, $\enc{\fn(\fns{P})} \subseteq \enc{\Gamma}$, so by weakening, we have
${\lflind{\enc{\Gamma}}{L}{x}{\enc{\fns{P}}}}$.
We apply the same reasoning to obtain the corresponding derivation to $\lin{x}{\fns{Q}}$, thereby enabling use of @wtp_pcomp@ as desired.
\\
For the reverse direction, we have a derivation
\[
  \lfwtp{\Gamma}{\lfwtppcomp\ D\ \mc{D}_1\ \mc{D}_2\ \mc{L}_1\ \mc{L}_2}{\lfpcompd{T}{M}{N}}
\]
where:
\begin{align*}
  \lfdual{D}{T}{T'} \\
  \lfnhwtp{\Gamma}{\mc{D}_1}{M\ x}{x}{T} \\
  \lfnhwtp{\Gamma}{\mc{D}_2}{N\ x}{x}{T'} \\
  \lflinm{\Gamma}{\mc{L}_1}{M} \\
  \lflinm{\Gamma}{\mc{L}_2}{N}
\end{align*}
First, we let $A = \dtp{T}$, or $\enc{A} = T$, and by \Cref{lem:adeq-dual}, $\enc{A^\bot} = T'$.
We also obtain two derivations of @wtp@ by extending the context on $\mc{D}_1$ and $\mc{D}_2$:
\begin{align*}
  \lfwtp{\Gamma, \lfb{h}{x}{A}}{(\mc{D}_1\ x\ h)}{M\ x} \\
  \lfwtp{\Gamma, \lfb{h}{x}{A^\bot}}{(\mc{D}_2\ x\ h)}{N\ x}
\end{align*}

By induction hypotheses on $(\mc{D}_1\ x\ h)$ and $(\mc{D}_2\ x\ h)$, we obtain two SCP type derivations.
\[
  \stp{\Delta, \h{x}{A}}{\dproc{M\ x}}
  \quad
  \stp{\Delta, \h{x}{A^\bot}}{\dproc{N\ x}}
\]
where $\Delta$ is an SCP typing context such that $\enc{\Delta} = \Gamma$.

Next, by \Cref{lem:adeq-linear} on $\mc{L}_1$ and $\mc{L}_2$, we obtain two SCP linear predicates:
\[
  \lin{x}{\dproc{M\ x}}
  \quad
  \lin{x}{\dproc{N\ x}}
\]
And finally, we apply \getrn{Scut}:
\[
  \infer[\getrn{Scut}]{\stp{\Gamma}{\spcomp{x}{A}{\dproc{M\ x}}{\dproc{N\ x}}}}
  {\stp{\Gamma, \h{x}{A}}{\dproc{M\ x}} & \lin{x}{\dproc{M\ x}}
  &\stp{\Gamma, \h{x}{A^\bot}}{\dproc{N\ x}} & \lin{x}{\dproc{N\ x}}}
\]
Finally, we verify that $\eproc{\lfpcompd{T}{M}{N}} = \spcomp{x}{A}{\dproc{M\ x}}{\dproc{N\ x}}$.
\end{case}
\end{proof}

\subsection{Proof of \Cref{lem:adeq-step} and \Cref{lem:adeq-equiv}}
\begin{lemma}[Adequacy of {\color{magenta}step}]
  For each SCP reduction $S$ of $\fns{P} \sst \fns{Q}$, there exists a unique canonical LF derivation
  $\lfstep{\enc{\fn(\fns{P})}}{D}{\enc{\fns{P}}}{\enc{\fns{Q}}}$ and $\dec{\enc{S}} = S$.
  Conversely, if\/ \(\lfstep{\Gamma}{D}{M}{N}\) is a canonical LF derivation, then \(\dec{D}\) is a derivation of
  a reduction $\dproc{M} \sst \dproc{N}$, \(\enc{\dec{D}} = D\), and \(\enc{\fn(\dec{M})} \subseteq \Gamma\).
\end{lemma}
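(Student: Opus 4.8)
The plan is to mirror the structure of the earlier adequacy results (\Cref{lem:adeq-proc}, \Cref{lem:adeq-linear}, \Cref{lem:adeq-wtp}): first fix encoding and decoding maps $\enc{-}$ and $\dec{-}$ on reduction derivations that send each direct SCP reduction rule to its corresponding LF @step@ constructor and conversely, then establish the forward direction by induction on the SCP reduction derivation $S$ of $\fns{P} \sst \fns{Q}$ and the converse by induction on the LF derivation $\lfstep{\Gamma}{D}{M}{N}$. In both directions the cases split into \emph{axioms} (the principal and commuting reductions such as $\getrn{Sbetafwd}$, $\getrn{Sbetainl1}$, and $\getrn{Skappainl}$) and \emph{congruences} ($\getrn{Sbetacut1}$ together with its omitted mirror). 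Uniqueness and left-invertibility follow, as in the previous lemmas, from the observation that each SCP reduction rule corresponds to exactly one LF @step@ constructor, so these bookkeeping obligations are discharged by inspection and a routine computation.

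For the axiom cases I would appeal to \Cref{lem:adeq-proc} to encode (respectively decode) the source and target processes, and then apply the matching @step@ constructor. The one genuinely technical point is that the $\beta$-rules rewrite to a \emph{substituted} process: for instance $\getrn{Sbetafwd}$ produces $\subst{y}{x}{\fns{Q}}$, whereas the LF rule @betafwd@ produces the application @(Q Y)@. To close these cases I need a substitution-compatibility fact for the process encoding, namely that $\enc{\subst{y}{x}{\fns{Q}}}$ equals the result of applying the encoded function $\lambda x.\enc{\fns{Q}}$ to the name $y$. This is the standard property that object-level substitution is realized by meta-level (LF) application in a HOAS encoding; it holds because $\enc{-}$ is defined compositionally over the process structure and all channel bindings are encoded as intuitionistic functions. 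With this fact in hand, the target of the encoded reduction matches $\enc{\fns{Q}}$ exactly, and the analogous computation works in the reverse direction.

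The congruence cases step under a channel binder, which the encoding realizes through the universal quantification @{x:name}@ in @betacut1@. Here I would apply the induction hypothesis to the subderivation of $\fns{P} \sst \fns{P'}$ under the context extended by the bound name, and then reassemble the result with the LF constructor. The generic treatment of the bound name is precisely what the @{x:name}@ quantifier supplies, so no additional reasoning about $\alpha$-renaming is required.

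Finally, the context conditions deserve a word. As already noted for \Cref{lem:adeq-proc} and \Cref{lem:adeq-linear}, LF weakening obstructs a clean bijection in the reverse direction: a derivation $\lfstep{\Gamma}{D}{M}{N}$ carrying spurious unused names in $\Gamma$ decodes to the same SCP reduction as one over the minimal context. Accordingly I would only establish the inclusion $\enc{\fn(\dec{M})} \subseteq \Gamma$, which follows directly from the corresponding clause of \Cref{lem:adeq-proc} applied to $M$. I expect the main obstacle to be nothing more than carefully stating and invoking the substitution-compatibility fact for each $\beta$-rule; every other step is either a direct appeal to process adequacy or a straightforward induction hypothesis, consistent with the remark that adequacy of reductions is easy to show.
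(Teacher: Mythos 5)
Your proposal is correct and follows essentially the same route as the paper's (very terse) proof: the axiom cases are discharged by appealing to the adequacy of the process encoding on both sides of the reduction, and the two congruence cases for parallel composition follow by the induction hypothesis under the extended name context. Your explicit identification of the substitution-compatibility fact (object-level substitution realized by LF application) makes precise a step the paper leaves implicit, but it does not change the structure of the argument.
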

\begin{lemma}[Adequacy of {\color{magenta}equiv}]
  For each SCP structural equivalence $S$ of $\fns{P} \equiv \fns{Q}$, there exists a unique canonical LF derivation
  $\lfequiv{\enc{\fn(\fns{P})}}{D}{\enc{\fns{P}}}{\enc{\fns{Q}}}$ and $\dec{\enc{S}} = S$.
  Conversely, if\/ \(\lfequiv{\Gamma}{D}{M}{N}\) is a canonical LF derivation, then \(\dec{D}\) is a derivation of
  a reduction $\dproc{M} \equiv \dproc{N}$, \(\enc{\dec{D}} = D\), and \(\enc{\fn(\dec{M})} \subseteq \Gamma\).
\end{lemma}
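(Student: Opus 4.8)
The plan is to prove both directions by induction, mirroring the structure of the proof of \Cref{lem:adeq-step}. The SCP relation $\equiv$ is generated by the two axioms \getrn{Sequivcomm} and \getrn{Sequivassoc} together with the reflexivity and transitivity closure rules, and each of these corresponds to precisely one constructor of the LF family @equiv@. I would first fix the encoding $\enc{-}$ and decoding $\dec{-}$ on equivalence derivations by recursion on the rule applied, and then show that they are mutually inverse and respect the typing and context conditions. Since the conclusion of each axiom is already determined by the composed subprocesses, the bulk of each axiom case reduces to an appeal to \Cref{lem:adeq-proc} for the encodings of those subprocesses.

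For the forward direction I would induct on the SCP derivation $S$ of $\fns{P} \equiv \fns{Q}$. The case \getrn{Sequivcomm} flips the type of the cut from $A$ to $A^\bot$, so its LF constructor carries a duality witness; here I would invoke \Cref{lem:adeq-dual} to supply the unique canonical derivation of $\lfdual{\cdot}{\etp{A}}{\etp{A^\bot}}$ and then apply \Cref{lem:adeq-proc} to the two swapped subprocesses. The case \getrn{Sequivassoc} is handled analogously, with the implicit side condition that $\fns{P}$ does not depend on the outer bound channel discharged for free: in the encoding that subprocess is a metavariable which, being closed with respect to the internally bound name, cannot mention it. Reflexivity maps to the corresponding LF axiom, and transitivity applies the induction hypothesis to both subderivations, using \Cref{lem:adeq-proc} to confirm that the encodings of the shared intermediate process agree.

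The converse direction proceeds by inversion on the canonical LF derivation $D$ of $\lfequiv{\Gamma}{D}{M}{N}$. For each constructor I read off the matching SCP rule, decode the composed subprocesses with $\dproc{-}$, and assemble the derivation $\dec{D}$ of $\dproc{M} \equiv \dproc{N}$; \Cref{lem:adeq-proc} again guarantees that $\dproc{M}$ and $\dproc{N}$ are genuine SCP processes with $\eproc{\dproc{M}} = M$, while the invertibility $\enc{\dec{D}} = D$ follows by routine computation on each case. As in \Cref{lem:adeq-proc} and \Cref{lem:adeq-step}, I can only conclude $\enc{\fn(\dproc{M})} \subseteq \Gamma$ rather than equality, because LF weakening admits derivations over arbitrary enlargements of the context; this is harmless since weakening does not change the structure of the derivation.

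The main obstacle is ensuring that the two on-paper side conditions implicit in the equivalence axioms are faithfully mirrored by the higher-order encoding. For \getrn{Sequivassoc}, the requirement that the inner process be independent of the outer bound channel must coincide exactly with the LF metavariable's inability to depend on an internally bound name, and I must check that no spurious dependency is introduced when the nested composition is rebracketed. For \getrn{Sequivcomm}, I must verify that the duality flip threads through consistently, so that the uniqueness clause is not broken by the two possible orientations of the duality witness; \Cref{lem:adeq-dual} is precisely what rules this out.
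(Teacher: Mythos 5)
Your proposal is correct and takes essentially the same route as the paper's proof, which is only a brief sketch: induct on the derivation, discharge the axiom cases (\getrn{Sequivcomm}, \getrn{Sequivassoc}, and the reflexivity/transitivity closure) by appealing to \Cref{lem:adeq-proc} on both processes, handle closure cases by the induction hypothesis, and weaken the context condition to \(\enc{\fn(\dec{M})} \subseteq \Gamma\) because of LF weakening. Your additional details---supplying the duality witness via \Cref{lem:adeq-dual} in the \getrn{Sequivcomm} case and discharging the \getrn{Sequivassoc} side condition through the closedness of LF metavariables---merely elaborate points the paper defers to its encoding and artifact, and are consistent with it.
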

\begin{proof}
  Both are easy to prove. 
  For both directions, all axiom cases can be shown by appealing to \Cref{lem:adeq-proc} on both $\fns{P}$ and $\fns{Q}$ (or $M$ and $N$ for the reverse direction).
  The two congruence cases in @step@ for @pcomp@ is a straightforward application of the induction hypothesis.
\end{proof}

\end{document}